\newcommand\cyr{%
\renewcommand\rmdefault{wncyr}% 
\renewcommand\sfdefault{wncyss}% 
\renewcommand\encodingdefault{OT2}% 
\normalfont
\selectfont}
\DeclareTextFontCommand{\textcyr}{\cyr}
\newtheorem{theorem}{Theorem}[section]
\newtheorem{lemma}[theorem]{Lemma}
\newtheorem{definition}[theorem]{Definition}
\newtheorem{proposition}[theorem]{Proposition}
\newtheorem{remark}[theorem]{Remark}
\newcommand{\mR}{\mathbb{R}}
\newcommand{\mC}{\mathbb{C}}
\newcommand{\mN}{\mathbb{N}}
\newcommand{\mE}{\mathbb{E}}
\newcommand{\mZ}{\mathbb{Z}}
\newcommand{\mS}{\mathbb{S}}
\newcommand{\mH}{\mathbb{H}}
\newcommand{\cF}{\mathcal{F}}
\newcommand{\cC}{\mathcal{C}}
\newcommand{\cS}{\mathcal{S}}
\newcommand{\cA}{\mathcal{A}}
\newcommand{\fu}{\mathfrak{u}}
\newcommand{\fv}{\mathfrak{v}}
\newcommand{\fl}{\mathfrak{l}}
\newcommand{\tr}{{\rm{tr}\,}}
\newcommand{\End}{{\rm{End}}}
\newcommand{\Span}{{\rm{Span}}}
\newcommand{\rmU}{{\rm U\,}}
\newcommand{\rmUSp}{{\rm USp\,}}
\newcommand{\SO}{{\rm SO\,}}
\newcommand{\St}{{\rm St\,}}
\newcommand{\Gl}{{\rm Gl\,}}
\newcommand{\rmUn}{{\mathfrak u\,}}
\newcommand{\rmUsp}{{\mathfrak{usp}\,}}
\newcommand{\so}{{\mathfrak{so}\,}}
\newcommand{\gl}{{\mathfrak{gl}\,}}
\newcommand{\Herm}{{\rm Herm\,}}
\newcommand{\Pf}{{\rm Pf\,}}
\newcommand{\sign}{{\rm sign\,}}
\newcommand{\eins}{\leavevmode\hbox{\small1\kern-3.8pt\normalsize1}}
\newcommand{\Li}{\mathfrak{g}^{(\beta)}_n}
\newcommand{\Lii}{\mathfrak{g}^{(\beta)}}
\newcommand{\FS}{\cA^{(\beta)}(n,m)}
\begin{document}
%%%%%%%%%%%%%%%%%%%%%%%%%%%%%%%%%%%%%%%%%%%%%%%%%%%%%%%%
\title{Pizzetti formulae for Stiefel manifolds and applications}

\author{K.\ Coulembier $\quad$ M.\ Kieburg}

\maketitle

\begin{abstract}
Pizzetti's formula explicitly shows the equivalence of the rotation invariant integration over a sphere and the action of rotation invariant differential operators. We generalize this idea to the integrals over real, complex, and quaternion Stiefel manifolds in a unifying way. In particular we propose a new way to calculate group integrals and try to uncover some algebraic structures which manifest themselves for some well-known cases like the Harish-Chandra integral. We apply a particular case of our formula to an Itzykson-Zuber integral for the coset $\SO(4)/[\SO(2)\times\SO(2)]$. This integral naturally appears in the calculation of the two-point correlation function in the transition of the statistics of the Poisson ensemble and the Gaussian orthogonal ensemble in random matrix theory.
\end{abstract}

\textbf{MSC 2010 : }26B20, 13A50, 28C10 \\
\noindent
\textbf{Keywords : } Pizzetti formula, Haar measure, Itzykson-Zuber integral, Howe dual pair, random matrix theory
%\tableofcontents

\section{Introduction}\label{intro}

The calculation of integrals with a group invariant measure is often a difficult undertaking. Such invariant integrals over (cosets of) groups regularly appear in harmonic analysis \cite{Harish-Chandra}, representation theory~\cite{HPZ}, combinatorics~\cite{Louck}, random matrix theory~\cite{Peterbook,SchlittgenWettig,Zirnbauer96}, quantum field theory~\cite{KVZ14,LeutwylerSmilga,ShuryakVerbaarschot,Verbaarschot,VerbaarschotWettig,VerbaarschotZahed}, and many other fields in mathematics, physics and beyond. The unique invariant measure on a compact Lie group is known as the Haar measure and we employ the same name for the induced measure on cosets.

Indeed many approaches were tried to calculate integrals of various functions over the three compact groups $\SO(n)$, $\rmU(n)$, and $\rmUSp(2n)$ with respect to the Haar measure. For instance, very explicit but also quite cumbersome formulae were derived for the integrals over an arbitrary product of matrix elements $U_{ij}$ of a group element $U=\{U_{ij}\}\in\SO(n),\rmU(n),\rmUSp(2n)$, see Ref.~\cite{MR1902484}. More compact results are desirable for some applications. Particularly, when one has to deal with some additional integrals, compact results are more suitable and may show some convenient algebraic structures, such as determinants or Pfaffian determinants. Other approaches tried to uncover these algebraic structures. Some of these approaches were successfully applied to, for example, the Harish-Chandra integral~\cite{Harish-Chandra} and partially the Leutwyler-Smilga integral, see Refs.~\cite{KVZ14,LeutwylerSmilga,Verbaarschot} and subsection~\ref{newPiz} and appendix~\ref{app1} of the current paper. Other integrals as the Itzykson-Zuber integral~\cite{ItzZub} and the Berezin-Karpelevich integral~\cite{BerKarp,SchlittgenWettig}  were only successfully calculated for the unitary group $\rmU(n)$ while these integrals for the groups $\SO(n)$ and $\rmUSp(2n)$ are much harder to evaluate. For the real and quaternion Itzykson-Zuber integral, only recursion formulae~\cite{BergereEynard,MR1893696} and expansions in Jack-polynomials~\cite{Muirhead,OkoOls} were developed. With the help of these approaches group integrals with a small matrix dimension $n$ were solved. However these approaches were never completely successfully applied to the general case of arbitrary matrix dimension.

We pursue a new idea to investigate general group integrals and their hidden algebraic structures. Inspired by Pizzetti's formula~\cite{Pizzetti} for the integral over the unit sphere, we want to replace the integral with the Haar measure by an application of a group invariant differential operator on the integrand. This is natural in the light of the Fourier transform which interchanges between coordinates and partial derivatives. The idea behind Pizzetti's formula has recently also been applied to find formulae for invariant integration in superspace, see \cite{OSpHarm, DBE1}. Our approach will shed some light on the Haar measures. Moreover we hope to find new insights which may help to derive new compact results for some group integrals frequently appearing in physics and mathematics. 

In many physical applications the group invariance of the integrand is not completely broken, meaning that it remains invariant under the action of some subgroup. Hence integrals over cosets are of particular interest, since they are integrals over a smaller space and, thus, over less integration variables. Integrals for the case where the cosets are Stiefel manifolds, that is $\SO(n)/\SO(n-k)$, $\rmU(n)/\rmU(n-k)$ and $\rmUSp(2n)/\rmUSp(2n-2k)$, are the focus of this work. This is motivated by several appearances of such integrals in physics, of which we present some in subsection~\ref{appl-mot}. By our concept of Pizzetti formulae, we map the invariant integral over a large group with matrix dimension $n$ to a function depending on a small number $k$ of group invariant differential operators. This number $k$ is independent of the matrix dimension $n$ and may be more suitable for further calculation or large $n$ analysis. In the extreme case $k=1$, this idea recovers the classical Pizzetti formula, which writes the integral over the unit sphere $\mS^{n-1}$ in terms of the Laplacian on $\mR^n$.

After introducing our notation and some definitions in section~\ref{secprel} we sketch the main idea behind our approach in section~\ref{sketch}. We do not claim any rigorous proofs in this section but it shall give educated guesses, collects the results for all three groups considered and demonstrate how our results can be applied in random matrix theory. Many of these `conjectures' will then be rigorously proven in sections~\ref{secinv} and \ref{sec2}, others will remain conjectures since no direct proofs are found, yet. In particular we show in section~\ref{sec;Piz-beta1} that for any of the three groups  there is a function which, when evaluated in a finite number of invariant differential operators, is equivalent to the group integrals over $\SO(n)$, $\rmU(n)$, and $\rmUSp(2n)$, for a large class of integrands. However, this function will not be explicitly calculated in full generality. We only derive explicit formulae for the case of the Stiefel manifolds $\SO(n)/\SO(n-2)$, $\rmU(n)/\rmU(n-2)$, and $\rmUSp(2n)/\rmUSp(2n-4)$ in appendix~\ref{app0} and rigorously prove them in section \ref{sec2}. 

The other Pizzetti formulae for the Stiefel manifolds $\rmU(n)/\rmU(n-k)$ and $\rmUSp(2n)/\rmUSp(2n-2k)$, presented and derived in section~\ref{sketch} and appendix~\ref{app1}, will remain conjectures. Contrary to the type of formulae in the previous paragraph, these are only implicitly expressed in terms of a finite amount of invariant differential operators, but they do exhibit intriguing determinantal and Pfaffian determinantal structures which frequently appear in random matrix theory~\cite{Gernotbook,Peterbook,Mehtabook}. 

In section~\ref{appl} we apply our result in section~\ref{sec2} for the Stiefel manifold $\SO(n)/\SO(n-2)$ to calculate the Itzykson-Zuber integral corresponding to the coset $\SO(4)/[\SO(2)\times\SO(2)]$. This integral plays an important role in random matrix theory. Especially it is crucially related to the correlation function of two eigenvalues of real symmetric random matrices~\cite{GuhrKohler-b,Heiner-thesis}. The lack of knowledge of this integral prevents exact analytical results for the transition of level statistics of the Poisson ensemble (diagonal matrices with independently identically distributed entries) and the Gaussian orthogonal ensemble (real symmetric matrices with independently identically Gaussian distributed entries)~\cite{GM-GW,Haakebook}. Since our result is more compact and explicit than any other expression known~\cite{BergereEynard,MR1893696,GuhrKohler-b,Heiner-thesis}, we hope that it can contribute to the calculation of the level statistics of this transition ensemble.

The approach in sections \ref{secinv} and \ref{sec2} relies on classical invariant theory. In section \ref{secinv} we also demonstrate how a Howe dual pair, attached to every Stiefel manifold, naturally arises in this context. Furthermore we derive two alternative interpretations of our formulae, within the theory of these Howe dualities. We illustrate this explicitly for $\beta=1$, which makes the connection to harmonic analysis in several sets of variables.

In section~\ref{sec:Schwartz} we make contact to the known recursion formulae~\cite{MR1893696,BergereEynard} by marrying Pizzetti's idea with the idea of splitting the three compact groups and their corresponding Stiefel manifolds in spheres. %This approach takes advantage of the fact that the first column vector of a group element describes a sphere and the remaining part a subgroup whose only requirement is that the other column vectors are orthogonal to the first one.

In appendix \ref{prop} we rigorously prove a Pizzetti formula for a certain class of invariant integrals over manifolds, which as three special cases implies the formulae suggested by appendix \ref{app0}.

In section~\ref{conclusio} we present an overview of the main results. 

\section{Preliminaries}
\label{secprel}

In this section we fix some notation and recall some necessary concepts. We consider the three compact connected Lie groups
\begin{equation}
\rmU^{(\beta)}(n) := \left\{\begin{array}{cl} \SO(n), & \beta=1, \\ \rmU(n), & \beta=2, \\ \rmUSp(2n), & \beta=4,\end{array}\right.
\label{groups}
\end{equation}
which are the special orthogonal, the unitary, and the unitary symplectic group, respectively. The cosets we are interested in are the Stiefel manifolds
\begin{equation}
\St^{(\beta)}(n,m) := \rmU^{(\beta)}(n)/\rmU^{(\beta)}(m) =\left\{\begin{array}{cl} \SO(n)/\SO(m), & \beta=1, \\ \rmU(n)/\rmU(m), & \beta=2, \\ \rmUSp(2n)/\rmUSp(2m), & \beta=4,\end{array}\right.
\label{Stiefel}
\end{equation}
where $0\leq m< n\in\mathbb{N}$ is an integer. The Dyson index, $\beta=1,2,4$, reflects the underlying number field which is real ($\beta=1$), complex ($\beta=2$), or quaternion ($\beta=4$). We also set
\begin{eqnarray}\label{gammadef}
 \gamma&:=&\left\{\begin{array}{cl} 1, & \beta=1,2, \\ 2, & \beta=4,\end{array}\right.
\end{eqnarray}
which is convenient when writing our results in a unified way.

We denote the field of quaternion numbers by $\mathbb{H}$ and consider an injective algebra morphism $\mH\hookrightarrow \mC^{2\times 2}$ in terms of Pauli matrices. This means that $(a,b,c,d)\in\mH$ is represented by
\begin{equation}\label{quat-Pauli}
\left(\begin{array}{cc}a+b\imath& c+d\imath\\-c+d\imath& a-b\imath\end{array}\right)=a\eins_2+b\imath\tau_3+c\imath\tau_2+d\imath\tau_1.
\end{equation}
The set of rectangular matrices is
\begin{eqnarray}\label{recmatdef}
 \Gl^{(\beta)}(n,m)&:=&\left\{\begin{array}{cl} \mathbb{R}^{n\times (n-m)}, & \beta=1, \\ \mathbb{C}^{n\times (n-m)}, & \beta=2, \\ \mathbb{H}^{n\times (n-m)}\subset\mathbb{C}^{2n\times 2(n-m)}, & \beta=4.\end{array}\right.
\end{eqnarray}
Alternatively, for $\beta=4$, we can introduce the rectangular matrices as
\begin{equation}
\label{descimH}
 \Gl^{(4)}(n,m):=\{A\in \mC^{2n\times (2n-2m)}\,|\,A^*=\tau_2^{(n)}A\tau_2^{(n-m)} \},
\end{equation}
where $A^\ast$ is the complex conjugation of the matrix $A$ and $\tau_2^{(k)}=\left[\begin{array}{cc} 0 & -\imath\eins_k \\ \imath\eins_k & 0 \end{array}\right]$ the constant enlarged second Pauli matrix.

We denote by $(.)^\dagger$ is the adjoint conjugation (transposition and complex conjugation) on $\Gl^{(\beta)}(n,m)$.
Then we have the manifold morphism\footnote{Note that for $\beta=1$ and $m=0$ this would give $O(n)$ rather than $SO(n)$. We ignore this as we are interested in non-trivial Stiefel manifolds, so $m>0$ where we have $O(n)/O(m)\cong SO(n)/SO(m)$.}
\begin{equation}
\label{represStiefel}
\St^{(\beta)}(n,m)\cong\{A\in \Gl^{(\beta)}(n,m)\,|\, A^\dagger A=\eins_{\gamma(n-m)} \}.
\end{equation}
Note that for any $A\in\Gl^{(\beta)}(n,m)$ the dyadic matrix $A^\dagger A$ is real symmetric ($\beta=1$), Hermitian ($\beta=2$), and Hermitian self-dual ($\beta=4$).

Here we stress that $\St^{(\beta)}(n,m)\hookrightarrow \Gl^{(\beta)}(n,m)$ is an embedding of real vector spaces or real manifolds and we will always interpret $\Gl^{(\beta)}(n,m)$ as a real vector space, despite the possible interpretation for $\beta=2,4$ as a complex vector space.

Consider an undetermined matrix $B\in\Gl^{(\beta)}(n,m)$. This matrix can be written as $B=U_L \Lambda U_R$ with $U_L\in \rmU^{(\beta)}(n)$, $U_R\in \rmU^{(\beta)}(n-m)$ and $\Lambda$ a (rectangular) diagonal matrix in $\mR^{\gamma n\times \gamma (n-m)}$. The $n-m$ diagonal elements of $\Lambda$ (for $\beta=4$ they are $2(n-m)$ but doubly degenerate) are referred to as the singular values of $B$. Correspondingly, the $\rmU^{(\beta)}(n)\times \rmU^{(\beta)}(n-m)$-invariant functionals on $\Gl^{(\beta)}(n,m)$ are an algebra generated by $n-m$ polynomial functions on $\Gl^{(\beta)}(n,m)$. We denote a basis of this algebra by $I_1,\cdots, I_{n-m}$. The basis can be given by $B\mapsto \tr (B^\dagger B)^k$ for $k=1,\cdots,n-m$. For $\beta\in\{1,2\}$ we choose the alternative basis $\{I_j,j=1,\cdots,n-m\}$, defined by
\begin{equation}\label{invariantsI}
\det(\eins_{n-m}+\mu B^\dagger B)=1+\sum_{j=1}^{n-m}\mu^jI_j(B),
\end{equation}
for an indeterminate $\mu$. In particular we obtain $I_1(B)=\tr(B^\dagger B)$, $I_2(B)=[\left(\tr B^\dagger B\right)^2-\tr\left( B^\dagger B\right)^2]/2$ and $I_{n-m}(B)=\det(B^\dagger B)$.\footnote{The case $\beta=1$, $m=0$ is again a bit more subtle as there $\det (B)$, the square root of $I_n(B)$, is also an invariant polynomial. Also here the results for $\beta=1$, $m=0$ correspond to $O(n)$.} Furthermore we can write higher powers of $B^\dagger B$ in terms of the first $n-m$ and the invariants $I_j(B)$ for $j\in[1,n]$ by using the Cayley-Hamilton theorem for an arbitrary $C\in\mC^{k\times k}$,
\begin{eqnarray}\label{recursion}
 \sum_{j=0}^{k}\left(\int_0^{2\pi}d\varphi e^{-\imath j\varphi}\det(e^{\imath\varphi}\eins_{n-m}-C)\right)C^j=0.
\end{eqnarray}
For $\beta=4$ we define $I_j(B)$ by the expansion
\begin{equation}\label{invariantsI-beta4}
{\rm Pf}(I-\mu IB^\dagger B)=1+\sum_{j=1}^{n-m}\mu^jI_j(B)
\end{equation}
with the constant $2(n-m)\times 2(n-m)$ matrix $I$ defined as the block diagonal matrix
\begin{eqnarray}\label{I-def}
 I=\imath\tau_{2}^{(n-m)}.
\end{eqnarray}
The Pfaffian determinant in Eq.~\eqref{invariantsI-beta4}, see \cite{Mehtabook}, is only defined for even dimensional, anti-symmetric matrices $C$ as
\begin{eqnarray}\label{Pfaffiandef}
 \Pf\,C=\frac{1}{N!}\sum\limits_{\omega\in\mathcal{S}_{2N}}\sign\omega \prod\limits_{j=1}^N\frac{C_{\omega(2j-1)\omega(2j)}}{2}.
\end{eqnarray}
A Pfaffian determinant is essentially an exact square root of a determinant yielding again a polynomial in the matrix elements of $C$. We employed the sign function ``$\sign$" which is $+1$ for an even permutation $\omega\in\mathcal{S}_{2N}$ of $2N$ elements and $-1$ for an odd one.

We denote real symmetric, Hermitian and Hermitian self-dual matrices by
\begin{eqnarray}\label{Hermdef}
 \Herm^{(\beta)}(n-m):=\left\{\begin{array}{cl} \rmUn(n-m)/\so(n-m), & \beta=1, \\ \gl(n-m)/\rmUn(n-m), & \beta=2, \\ \rmUn(2(n-m))/\rmUsp(2(n-m)), & \beta=4,\end{array}\right.
\end{eqnarray}
with the real Lie-algebras  $\rmUn(n-m)$, $\so(n-m)$, $\gl(n-m)$, and $\rmUsp(2(n-m))$ of the groups $\rmU(n-m)$, $\SO(n-m)$, $\Gl(n-m)$, and $\rmUSp(2(n-m))$, respectively.

In Pizzetti's formula the renormalized Bessel function plays an important role. It is defined as
\begin{equation}\label{renormBessel}
\Psi_{N/2-1}(x):=\frac{\Gamma(N/2)J_{N/2-1}(x)}{\left(x/2\right)^{N/2-1}},\ \Psi_{N/2-1}(0)=1,
\end{equation}
where $J_\nu$ is the Bessel function of the first kind and of order $\nu$. Moreover we need in our calculations  the floor function $\lfloor x\rfloor$ to denote the largest integer smaller than or equal to $x\in\mathbb{R}$.

Finally let us consider two vector variables $u,v\in \mR^d$ for some $d\in\mN$. Then we introduce the following differential operators on $\mR^{2d}$:
\begin{equation}\label{harmdiff1}\Delta_u=\sum_{j=1}^n\partial_{u_j}^2,\,\,\,\langle \nabla_u,\nabla_v\rangle=\sum_{j=1}^n\partial_{u_j}\partial_{v_j},\,\,\, \langle u,\nabla_v\rangle=\sum_{j=1}^nu_j\partial_{v_j},\end{equation}\begin{equation}\label{harmdiff2} \mE_u=\sum_{j=1}^nu_j\partial_{u_j},\,\,\, \langle u, v\rangle=\sum_{j=1}^nu_jv_j\mbox{ and }\,\,u^2=\sum_{j=1}^nu_j^2.\end{equation}
These operators are important for deriving the Pizzetti formula of the Stiefel manifolds $\St^{(\beta)}(n,n-2)$. The following commutation relations of these operators are immediate.

\begin{lemma}
\label{commrel}
The operators in equations \eqref{harmdiff1} and \eqref{harmdiff2} satisfy the following commutation relations:
\begin{eqnarray}
[\Delta_u,u^2]=4\mE_u+2n,&&[\langle u,\nabla_v\rangle,\langle \nabla_u,\nabla_v\rangle]=-\Delta_v,\nonumber\\
\,[\langle\nabla_u,\nabla_v\rangle,u^2]=2\langle u,\nabla_v\rangle, && [\Delta_u,\langle u,\nabla_v\rangle]=2\langle \nabla_u,\nabla_v\rangle,\nonumber\\
\,[\langle v, \nabla_u\rangle,u^2]=2\langle u,v\rangle, && [\langle u,\nabla_v\rangle,\langle u,v\rangle]=u^2,\nonumber\\ 
\,[\Delta_u,\langle u,v\rangle]=2\langle v,\nabla_u\rangle, && [\langle\nabla_u,\nabla_v\rangle,\langle u,v\rangle]=\mE_u+\mE_v+m.
\end{eqnarray}
\end{lemma}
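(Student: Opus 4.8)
The plan is to derive all eight identities from a single nonvanishing elementary commutator. Since $u$ and $v$ are independent Cartesian coordinates, the only pairs among $\{u_i,v_i,\partial_{u_i},\partial_{v_i}\}$ that fail to commute are $[\partial_{u_i},u_j]=\delta_{ij}$ and $[\partial_{v_i},v_j]=\delta_{ij}$; in particular every $u$-object commutes with every $v$-object. I would then expand each composite commutator with the derivation rules $[A,BC]=[A,B]C+B[A,C]$ and $[AB,C]=A[B,C]+[A,C]B$ until only these two elementary commutators survive, and finally normal-order, i.e.\ rewrite $\partial_{u_i}u_i=u_i\partial_{u_i}+1$, so as to recognise the Euler operators $\mE_u,\mE_v$ and the scalar terms.

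Concretely, for the first relation I would write $[\Delta_u,u^2]=\sum_{i,j}[\partial_{u_i}^2,u_j^2]$, where only $i=j$ contributes and $[\partial_{u_i}^2,u_i^2]=2+4u_i\partial_{u_i}$, so the sum is $2n+4\mE_u$. For a representative mixed case, the third relation follows from $[\langle\nabla_u,\nabla_v\rangle,u^2]=\sum_{i,j}[\partial_{u_i},u_j^2]\partial_{v_i}=\sum_{i,j}2\delta_{ij}u_j\partial_{v_i}=2\langle u,\nabla_v\rangle$, using that $\partial_{v_i}$ commutes with $u_j^2$. The last relation is the only one in which two elementary commutators act simultaneously: $[\partial_{u_i}\partial_{v_i},u_jv_j]$ reduces, for $i=j$, to $\partial_{u_i}u_i+v_i\partial_{v_i}=u_i\partial_{u_i}+1+v_i\partial_{v_i}$, and summing the diagonal terms produces $\mE_u+\mE_v$ together with the additive constant coming from $\sum_i\delta_{ii}$.

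The remaining relations are obtained in exactly the same way, each requiring only a single pass of the derivation rule, so the word ``immediate'' is fully justified and there is no genuine obstacle. The only points that need care are purely bookkeeping: tracking the Kronecker deltas after summation, performing the normal-ordering step $\partial_{u_i}u_i=u_i\partial_{u_i}+1$ that turns the product of a derivative and its conjugate variable into an Euler operator plus a scalar, and carefully distinguishing the two gradient-type operators $\langle u,\nabla_v\rangle$ and $\langle v,\nabla_u\rangle$ so that the correct sign is retained, for instance the minus sign in the second relation, which originates from the cross term $[u_i,\partial_{u_j}]=-\delta_{ij}$.
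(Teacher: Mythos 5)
Your proposal is correct: every one of the eight identities is verified by exactly the expansion you describe, reducing to the canonical relations $[\partial_{u_i},u_j]=\delta_{ij}$, $[\partial_{v_i},v_j]=\delta_{ij}$ via the Leibniz rule and normal-ordering, and your representative computations (including the sign in $[\langle u,\nabla_v\rangle,\langle\nabla_u,\nabla_v\rangle]=-\Delta_v$ and the constant $\sum_i\delta_{ii}$ in the last relation) are accurate. This is precisely the routine calculation the paper has in mind when it declares the relations ``immediate'' without writing out a proof, so your approach coincides with the paper's.
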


%Denote this space of diagonal real matrices by $\fa^{(\beta)}(n)$. Then we have the restriction theorem for functions on %$\Gl^{(\beta)}(n)$ which are both left and right invariant for the action of $U^{(\beta)}(n)$ on $\Gl^{(\beta)}(n)$ given by %matrix multiplication
%\begin{equation}\cC^\infty(\Gl^{(\beta)}(n))^{U^{(\beta)}(n)\times U^{(\beta)}(n)}\,\cong\,\cC^{\infty}(\fa^{(\beta)}(n))^W.%\end{equation}
%Here $W$ is the Weyl group $W:=N_{U^{(\beta)}(n)\times U^{(\beta)}(n)}(\fa^{(\beta)}(n))/ Z_{U^{(\beta)}(n)\times %U^{(\beta)}(n)}(\fa^{(\beta)}(n))$. For $\beta=2$, we have $W\cong S_n \mZ_2^{\times n}$

\section{Motivation and sketch of the main idea}\label{sketch}

In this section we want to sketch the main idea behind a Pizzetti formula for integrals over some particular coset integrals, namely the Stiefel manifolds \eqref{Stiefel}. These comprise two special cases, namely the full group $\St^{(\beta)}(n,0)=\rmU^{(\beta)}(n)$ and the coset $\St^{(\beta)}(n,n-1)\cong\mS^{\beta n-1}$ which is isomorph to  the $(\beta n-1)$-dimensional unit sphere.

Please note that all calculations done in this section are sketches only, and shall only serve for a better understanding of the main results of this work and their importance in calculations of problems in random matrix theory. Thus we do not work out all technical requirements in detail in this section. In particular we assume that the whole calculation  smoothly works out although there are some intermediate steps which can be criticized. Nevertheless most of the results, derived in a hand waving way in this section, are justified since they are rigorously proven in the ensuing sections. Hence they serve as educated guesses and good starting points for mathematical proofs.

\subsection{The main idea of the classical Pizzetti formula}\label{originalPiz}

Let $N>1$ be an integer, the Pizzetti formula, see \cite{MR2885533, Pizzetti}, expresses integration over the unit sphere of a function, of which the Taylor series at the origin converges in a neighbourhood containing the unit sphere, in terms of differential operators as
\begin{equation}\label{classPizzetti}
 \int_{\mS^{N-1}}f=\sum_{j=0}^\infty\frac{\Gamma(N/2)}{4^j j!\Gamma(j+N/2)}(\Delta^j f)(0)=\left(\Psi_{N/2-1}(\sqrt{-\Delta})f\right)(0),
\end{equation}
where $\Gamma$ is the Euler $\Gamma$-function, $\Delta$ is the flat Laplacian on the $N$-dimensional space $\mathbb{R}^N$ and $\Psi_{N/2-1}$ as in \eqref{renormBessel}. This formula was originally introduced by Pizzetti as a generalization of the mean value theorem for harmonic functions to poly-harmonic functions.

In this subsection we derive and interpret this formula in an alternative fashion. As an extra result of this we find that formula \eqref{classPizzetti} is also applicable to arbitrary functions in the Schwartz space of rapidly decreasing functions $\cS(\mR^N)$.

Consider a function $f$ on the unit sphere. To render the identity~\eqref{classPizzetti} meaningful we have to assume that the sphere $\mS^{N-1}$ is embedded in $\mathbb{R}^N$ and that there exists a smooth extension of $f$ to an open set covering the $N$-dimensional ball bounded by $\mS^{N-1}$. From a physics point of view Eq.~\eqref{classPizzetti} is quite natural since it shows the relation between the configuration space and the momentum space in the quantum mechanical framework. We make this more concrete and consider the integral
\begin{equation}\label{classPizzettider1}
 \int_{\mS^{N-1}}f= \frac{\int_{\mathbb{R}^N}d[x]f(x)\delta(x^2-1)}{\int_{\mathbb{R}^N}d[x]\delta(x^2-1)}.
\end{equation}
Here we assume that there is a smooth extension of $f$ into $\mathbb{R}^N$. The measure $d[x]$ is the flat measure on $\mathbb{R}^N$ and is, thus, the product of the differentials of the coordinates of the $N$-dimensional vector $x$. The Dirac $\delta$-distribution restricts the integral onto the sphere $\mS^{N-1}$ and the integral in the denominator normalizes the integral such that $\int_{\mS^{N-1}}1=1$. Assuming that the Fourier transform of $f$ exists, i.e.
\begin{equation}\label{Fourierf}
 \mathcal{F}[f](k):=\frac{1}{(2\pi)^{N/2}}\int_{\mathbb{R}^N}d[x] \exp[-\imath \langle k,x\rangle] f(x),
\end{equation}
 we can rewrite the integral~\eqref{classPizzettider1},
\begin{equation}\label{classPizzettider2}
 \int_{\mS^{N-1}}f= \frac{\int_{\mathbb{R}^N}d[x]\delta(x^Tx-1)\int_{\mathbb{R}^N}d[k] \exp[\imath \langle k,x\rangle] \mathcal{F}[f](k)}{(2\pi)^{N/2}\int_{\mathbb{R}^N}d[x]\delta(x^2-1)}.
\end{equation}
Here $\langle.,.\rangle$ is the standard Euclidean scalar product on $\mathbb{R}^N$.
Interchanging the integrals\footnote{Again we emphasize that this can be made rigorous but technical details obscure the idea behind this approach.} over $x$ and $k$, the integral over $x$ is an average over the azimuthal angle ${\rm arccos}(\langle k,x\rangle/\sqrt{x^2k^2})$ of a plane wave which is equal to the renormalized Bessel function~\eqref{renormBessel}, i.e.
\begin{equation}\label{angleaverage}
\frac{\int_{\mathbb{R}^N}d[x]\delta(x^2-1)\exp[\imath \langle k,x\rangle]}{\int_{\mathbb{R}^N}d[x]\delta(x^2-1)}=\Psi_{N/2-1}(\sqrt{k^2}) .
\end{equation}
Hence the integral~\eqref{classPizzettider1} is
\begin{equation}\label{classPizzettider3}
 \int_{\mS^{N-1}}f= \frac{\int_{\mathbb{R}^N}d[k] \Psi_{N/2-1}(\sqrt{k^2}) \mathcal{F}[f](k)}{(2\pi)^{N/2}}.
\end{equation}
When we plug the definition of the Fourier transform~\eqref{Fourierf} into this integral we have to replace the wave vector $k$ by the gradient $\imath\nabla$ in the spatial vector $x$ and, thus, the norm $k^2$ by minus the Laplacian $(\imath\nabla)^2=-\Delta$. The remaining integral over $k$ is equal to an $N$-dimensional Dirac $\delta$-distribution telling us that we have to evaluate the spatial vector $x$ at the origin. This yields Pizzetti's formula~\eqref{classPizzetti}.

Summarizing this calculation, this integral is equal to an action of an operator $\Psi_{N/2-1}(\sqrt{-\Delta})$ only involving momentum operators conjugate to the spatial vector $x$. The Fourier transformation connects both representation. The origin of the operator $\Psi_{N/2-1}(\sqrt{-\Delta})$ lies in the connection through the Fourier transform with the Dirac delta function characterizing the unit sphere in Eq. \eqref{angleaverage}.

Note the difference of Eq.~\eqref{classPizzetti} to an integration over the full space $\mathbb{R}^N$.  Then the integral would be equal to  $\mathcal{F}[f](0)$ meaning that we have to evaluate the Fourier transform at the origin and not the function $f$ itself in contrast to Eq.~\eqref{classPizzetti}.

\subsection{A Pizzetti formula for groups and Stiefel manifolds}\label{newPiz}

We generalize the approach presented in subsection~\ref{originalPiz} to general Stiefel manifolds. According to Eq.~\eqref{represStiefel}, $\St^{(\beta)}(n,m)$ can be interpreted as a boundary of a domain in $\Gl^{(\beta)}(n,m)$ containing the origin. We consider a function $f$ on $\St^{(\beta)}(n,m)$ which smoothly extends to $\Gl^{(\beta)}(n,m)$. We argue that the Haar measure on the Stiefel manifold $\St^{(\beta)}(n,m)$ ($0\leq m< n$) can be represented as\footnote{Also here, for $\beta=1$ and $m=0$ we would get integration over $O(m)$ rather than $SO(m)$.}
\begin{eqnarray}\label{Stiefintdef}
 \int_{\St^{(\beta)}(n,m)}f=\frac{\int_{\Gl^{(\beta)}(n,m)}d[A]f(A)\delta(A^\dagger A-\eins_{\gamma (n-m)})}{\int_{\Gl^{(\beta)}(n,m)}d[A]\delta(A^\dagger A-\eins_{\gamma (n-m)})}.
\end{eqnarray}
The Dirac $\delta$-distribution, restricting the dyadic matrix $A^\dagger A$ to unity, is defined for real symmetric, Hermitian, and Hermitian self-dual matrices, respectively, by the product of the Dirac $\delta$-distributions of each real independent degree of freedom in the matrix. One can easily check that the Dirac $\delta$-distribution with the flat measure $d[A]$ builds the Haar measure of the coset $\St^{(\beta)}(n,m)$, i.e. the measure is invariant under $A\to U_{\rm L}AU_{\rm R}$ with any $U_{\rm L}\in\rmU^{(\beta)}(n)$ and  $U_{\rm R}\in\rmU^{(\beta)}(n-m)$, showing that Eq.~\eqref{Stiefintdef} is correct. The denominator again normalizes the integral, i.e. $\int_{\St^{(\beta)}(n,m)}1=1$.
 
We assume that the Fourier transform of $f$ as a function on $\Gl^{(\beta)}(n,m)$ exists, i.e.
\begin{eqnarray}\label{Fourierfmat}
 \mathcal{F}[f](B):=\frac{1}{(2\pi)^{\beta n(n-m)/2}}\int_{\Gl^{(\beta)}(n,m)}d[A] \exp\left[-\frac{\imath}{2\gamma}(\tr B^\dagger A+\tr A^\dagger B)\right] f(A),
\end{eqnarray}
for $B\in \Gl^{(\beta)}(n,m)$. Mimicking step~\eqref{angleaverage} we need to consider the integral \eqref{Stiefintdef} with $f$ replaced by the Fourier kernel in Eq.~\eqref{Fourierfmat}. This yields a Leutwyler-Smilga-like integral \cite{LeutwylerSmilga} and we denote the result by $\widehat{\Psi}^{(\beta)}_{n,m}(B)$, see Eq. \eqref{groupint}. This integral plays an important role in chiral perturbation theory of QCD \cite{ShuryakVerbaarschot,Verbaarschot,LeutwylerSmilga,SchlittgenWettig,KVZ14} and in random matrix theory \cite{Zirnbauer96}.

Note that $ \widehat{\Psi}^{(\beta)}_{n,m}(U_LBU_R) =\widehat{\Psi}^{(\beta)}_{n,m}(B)$ for $U_L\in \rmU^{(\beta)}(n)$ and $U_R\in \rmU^{(\beta)}(n-m)$. Hence $ \widehat{\Psi}^{(\beta)}_{n,m}(B)$ only depends on the singular values of $B$ or alternatively on the matrix invariants, see section \ref{secprel} for introduction of both concepts. This justifies the other two functions introduced in the definition underneath.

\begin{definition}
We define the function $\widehat{\Psi}^{(\beta)}_{n,m}$ on $\Gl^{(\beta)}(n,m)$ by
\begin{eqnarray}\label{groupint}
 \widehat{\Psi}^{(\beta)}_{n,m}(B):=\frac{\int_{\Gl^{(\beta)}(n,m)}d[A]\delta(A^\dagger A-\eins_{\gamma (n-m)})\exp\left[\imath(\tr B^\dagger A+\tr A^\dagger B)/(2\gamma)\right]}{\int_{\Gl^{(\beta)}(n,m)}d[A]\delta(A^\dagger A-\eins_{\gamma (n-m)})}.
\end{eqnarray}
and the function $\widetilde{\Psi}^{(\beta)}_{n,m}$ on $\mR^{n-m}$ by
\begin{equation} \label{phitilde} \widetilde{\Psi}^{(\beta)}_{n,m}(\Lambda_1,\cdots,\Lambda_{n-m}):= \widehat{\Psi}^{(\beta)}_{n,m}(B),\end{equation}
for any $B\in \Gl^{(\beta)}(n,m)$ with the set of singular values $\{\Lambda_1,\cdots,\Lambda_{n-m}\}$. Furthermore we define an element $\Psi_{n,m}^{(\beta)}(x_1,\cdots,x_{n-m})$ in the space of formal power series $\mR[[x_1,\cdots,x_{n-m}]]$ in $n-m$ variables and with real coefficients, by setting
\begin{equation}\label{defphi}
\Psi_{n,m}^{(\beta)}(I_1(B),\cdots,I_{n-m}(B)):=\widehat{\Psi}^{(\beta)}_{n,m}(B),\quad\mbox{
for any $B\in \Gl^{(\beta)}(n,m)$}.
\end{equation}
\end{definition}
The case $n-m=1$ yields the original Pizzetti formula for all three Dyson indices $\beta=1,2,4$. This follows from section \ref{sec2}, but also directly from the fact that the unique $\rmU^{(\beta)}(n)$-invariant integration on $\St^{(\beta)}(n,n-1)\cong \mS^{\beta n-1}$ must correspond to the unique $\SO(\beta n)$-invariant integration because of the embedding $\rmU^{(\beta)}(n)\hookrightarrow \SO(\beta n)$. This implies
\begin{equation}
\label{connPizz}
\widehat{\Psi}^{(\beta)}_{n,n-1}(B)=\Psi_{\beta n/2-1}(\Lambda_1)=\Psi_{\beta n/2-1}(\sqrt{I_1(B)}),
\end{equation}
so also
\begin{equation}
\label{connPizz2}
\widetilde{\Psi}^{(\beta)}_{n,n-1}=\Psi_{\beta n/2-1}\quad\mbox{and}\quad \Psi_{n,n-1}^{(\beta)}=\Psi_{\beta n/2-1}\circ \sqrt{\hspace{2mm}}.
\end{equation}

First we consider the unitary group ($\beta=2$). For this case, integral~\eqref{groupint} is known~\cite{LeutwylerSmilga,SchlittgenWettig} and yields
\begin{eqnarray}\label{groupintunitary}
 \widehat{\Psi}^{(2)}_{n,m}(B)=\frac{1}{\Delta_{n-m}(\Lambda^2)}\det\left[\Lambda_a^{2(n-m-b)}\Psi_{n-b}(\Lambda_a)\right]_{1\leq a,b\leq n-m},
\end{eqnarray}
where we employed the singular value decomposition of $B^\dagger B=V\Lambda^2 V^\dagger$ with $\Lambda$ positive definite and diagonal and $V\in\rmU^{(\beta)}(n-m)$. In the denominator we used the Vandermonde determinant
\begin{eqnarray}\label{Vanddef}
\Delta_{n-m}(\Lambda^2)=\prod\limits_{1\leq a<b\leq n-m}(\Lambda^2_a-\Lambda^2_b)=\det\left[\Lambda_a^{2(n-m-b)}\right]_{1\leq a,b\leq n-m}.
\end{eqnarray}
In the case $m=n-1$ the Vandermonde determinant is trivial and the integral becomes the classical Pizzetti formula~\eqref{classPizzetti} for an $\mS^{2n-1}$-sphere, see \eqref{connPizz}. 

To arrive at a Pizzetti-like formula we have to express the function $\widehat{\Psi}^{(2)}$ in terms of the full matrix $B$ instead of its singular values $\Lambda$, only. Therefore, both in denominator and numerator, we multiply with the Vandermonde determinant $\Delta_{n-m}(\Lambda^2)$ and use the identity $\det M_1\det M_2= \det M_1M_2$ for $M_1,M_2\in\mC^{d\times d}$,
\begin{eqnarray}\label{groupintunitary.b}
 \widehat{\Psi}^{(2)}_{n,m}(B)&=&\frac{\det\left[\tr\left(\Lambda^{2(a+b-2)}\Psi_{m+b-1}(\Lambda)\right)\right]_{1\leq a,b\leq n-m}}{\det\left[\tr\Lambda^{2(a+b-2)}\right]_{1\leq a,b\leq n-m}}\\
 &=&\frac{\det\left[\tr\left((B^\dagger B)^{(a+b-2)}\Psi_{m+b-1}(\sqrt{B^\dagger B})\right)\right]_{1\leq a,b\leq n-m}}{\det\left[\tr(B^\dagger B)^{(a+b-2)}\right]_{1\leq a,b\leq n-m}}.\nonumber
\end{eqnarray}
Now we replace the matrix $B$ by the gradient which is an  $ n \times(n-m)$ matrix of partial derivatives,
\begin{eqnarray}\label{gradient-def}
 \{\nabla\}_{ji}=\frac{\beta}{\gamma}\frac{\partial}{\partial A_{ji}};\ 1\leq j\leq n, 1\leq i\leq n-m.
\end{eqnarray}
Then we arrive at
\begin{eqnarray}\label{Pizzetti-beta2}
\int_{\St^{(2)}(n,m)}f
 &=&\left(\frac{\det\left[\tr\left((-\nabla^\dagger \nabla)^{(a+b-2)}\Psi_{m+b-1}(\sqrt{-\nabla^\dagger \nabla})\right)\right]_{1\leq a,b\leq n-m}}{\det\left[\tr(-\nabla^\dagger \nabla)^{(a+b-2)}\right]_{1\leq a,b\leq n-m}}f\right)(0).
\end{eqnarray}
This result is quite compact but one has to be careful since the operator in the denominator may contain zero eigenvalues which cancel with those in the numerator. Hence one has to expand the function in the gradient and it can be checked that this yields a Taylor series in $\nabla^\dagger\nabla$. Expression \eqref{Pizzetti-beta2} is written in terms of the operators $\tr (\nabla^\dagger\nabla)^j$ for $j\in\mathbb{N}$. Recall that all these traces can be expressed in the coefficients of the characteristic polynomial $\det(\nabla^\dagger\nabla-\varepsilon\eins_{n-m})$, introduced as $\{I_k(\nabla^\dagger\nabla),j\in [1,n-m]\}$ in section \ref{secprel}, due to the relation \eqref{recursion}. Nevertheless Eq.~\eqref{Pizzetti-beta2} remains a conjecture since we have not found any clean proof, yet. For the case $m=n-2$ (which will be proved rigorously) and the results in section \ref{PizzInv} we will work with formulae that are expressed explicitly in terms of the generators $\{I_k(\nabla^\dagger\nabla),j\in [1,n-m]\}$.

For the quaternion case ($\beta=4$) we are able to obtain the analogue of the result~\eqref{Pizzetti-beta2}, since the coset integral~\eqref{groupint} can also be performed, for arbitrary $m$ and $n$, exactly. The reason is that this integral fulfills a particular underlying algebraic structure, namely a Pfaffian determinant~\eqref{Pfaffiandef}. First we define two families of matrix valued functions $G^{(m)}_{ab}:\mC^{d\times d}\to \mC^{d\times d}\otimes \mC^{d\times d}$ and $g^{(m)}_{ab}:\mC^{d\times d}\to \mC^{d\times d}$, for arbitrary $d\in\mN$ and $1\le a,b\le n-m$ as
\begin{eqnarray}
 G_{ab}^{(m)}(X)&=&(X^{a}\otimes X^{b-1}-X^{a-1}\otimes X^{b})\widetilde{\Psi}^{(4)}_{m+2,m}(\sqrt{X}\otimes\eins_{2(n-m)},\eins_{2(n-m)}\otimes \sqrt{X}),\nonumber\\
 g_{a}^{(m)}(X)&=&(X)^{a-1}\Psi_{2m+1}(\sqrt{X}).\label{aux-func}
\end{eqnarray}
As derived in Appendix~\ref{app1}, the Pizzetti formulae for Stiefel manifolds $\St^{(4)}(n,m)$ read, {\textbf{for $n-m$ even}}
\begin{eqnarray}\label{Pizzetti-beta4-ev}
\int_{\St^{(4)}(n,m)}f
 &=&\left(\prod\limits_{j=1}^{n-m}\frac{2^{2j-5}\Gamma(2j+2m-1)}{\sqrt{\Gamma(2m+3)\Gamma(2m+1)}}\right)\left(\frac{\Pf\left[\tr G_{ab}^{(m)}(-\nabla^\dagger \nabla)\right]_{1\leq a,b\leq n-m}}{\det\left[\tr(-\nabla^\dagger \nabla)^{(a+b-2)}\right]_{1\leq a,b\leq n-m}}f\right)(0),
\end{eqnarray}
and \textbf{ for $n-m$ odd} 
\begin{eqnarray}\label{Pizzetti-beta4-odd}
\int_{\St^{(4)}(n,m)}f
 &=&\left(\prod\limits_{j=1}^{n-m-1}\frac{2^{2j+1}\Gamma(2j+2m+1)}{\sqrt{\Gamma(2m+3)\Gamma(2m+1)}}\right)\\
 &&\times\left(\frac{\Pf\left[\begin{array}{cc} 0 & \left\{\tr g_{b}^{(m)}(-\nabla^\dagger \nabla)\right\}_{1\leq b\leq n-m} \\ \left\{-\tr g_{a}^{(m)}(-\nabla^\dagger \nabla)\right\}_{1\leq a\leq n-m} & \left\{\tr G_{ab}^{(m)}(-\nabla^\dagger \nabla)\right\}_{1\leq a,b\leq n-m}  \end{array}\right]}{\det\left[\tr(-\nabla^\dagger \nabla)^{(a+b-2)}\right]_{1\leq a,b\leq n-m}}f\right)(0).\nonumber
\end{eqnarray}
Again this result exhibits intriguing structures well-known in random matrix theory but we have no rigorous proof for it such that it remains a conjecture.

Note that these formulae still employ the unknown function $\widetilde{\Psi}^{(4)}_{m+2,m}$. In Eq. \eqref{def-psi-tilde.b} in appendix \ref{app1} we express this function in terms of the Bessel function $\Psi_m$. Alternatively one can use Eq. \eqref{groupint-bet4.d} in appendix~\ref{app0}.

For the real case ($\beta=1$) the situation is much more involved and the integral~\eqref{groupint} is only known for particular $B$, see \cite{Zirnbauer96,KVZ14} in combination with \cite{AltlandZirnbauer97,Ivanov02,Mehtabook}. In the case when the spectrum of $B$ is doubly degenerate the integral~\eqref{groupint} is equivalent to the microscopic limit (see Refs.~\cite{VerbaarschotWettig,Gernotbook} for the meaning of the notion) of a partition function of specific random matrix ensembles. This random matrix ensemble can then be solved exactly with the help of orthogonal polynomials (see Refs.~\cite{Mehtabook,Peterbook,Gernotbook} for the method of orthogonal polynomials). 

Though the Pizzetti formulae~\eqref{Pizzetti-beta2}, \eqref{Pizzetti-beta4-ev}, and \eqref{Pizzetti-beta4-odd} are quite compact, they still seem to be cumbersome to evaluate in realistic situations. However for the cases $n-m=1$ and $n-m=2$ the formulae simplify a lot. For the case $n-m=2$ we pursue another approach, presented in appendix~\ref{app0}, which yields
\begin{eqnarray}\nonumber
\int_{\St^{(\beta)}(n,n-2)}f&=&\sum_{j=0}^\infty\frac{\Gamma(\beta n/2)\Gamma(\beta(n-1)/2)}{4^j\Gamma(\beta n/2+j)\Gamma(\beta (n/2-1)+j)}\left(\left.\frac{1}{j!}\frac{\partial^j}{\partial\mu^j}\right|_{\mu=0}{\det}^{\beta(n-1)/2+j-1}(\mu \nabla^\dagger\nabla+\eins_2)f\right)(0)\\
 &=&\sum_{j=0}^\infty\sum_{l=0}^{\lfloor j/2\rfloor}\frac{\Gamma(\beta n/2)\Gamma(\beta(n-1)/2)}{4^j\Gamma(\beta n/2+j)\Gamma(\beta(n-1)/2+l)(j-2l)!l!}\left(\Delta^{j-2l}{\det}^{l}(\nabla^\dagger\nabla) f\right)(0)\label{Pizzetti-beta2-2bb}
 \end{eqnarray}
for $\beta=1,2$ with $\Delta=\tr \nabla^\dagger\nabla$. For $\beta=4$ we get similarly
\begin{eqnarray}
\int_{\St^{(4)}(n,n-2)}f&=&\sum_{j=0}^\infty\frac{\Gamma(2n)\Gamma(2(n-1))}{4^j\Gamma(2n+j)\Gamma(2(n-1)+j)}\left(\left.\frac{1}{j!}\frac{\partial^j}{\partial\mu^j}\right|_{\mu=0}{\Pf}^{2n-3+j}(\mu I\nabla^\dagger\nabla+I)f\right)(0)\nonumber\\
&=&\sum_{j=0}^\infty\sum_{l=0}^{\lfloor j/2\rfloor}\frac{\Gamma(2n)\Gamma(2(n-1))}{4^j\Gamma(2n+j)\Gamma(2(n-1)+l)(j-2l)!l!}\left(\frac{\Delta}{2}\right)^{j-2l}{\Pf}^{l}(I\nabla^\dagger\nabla)\label{Pizzetti-beta4-2b}.
\end{eqnarray}
These results are rigorously proven in section \ref{sec2}. The multiplication with the $4\times 4$-matrix $I$ as in Eq.~\eqref{I-def}, is necessary because it transforms a Hermitian self-dual matrix into an anti-symmetric matrix such that the Pfaffian of those matrices is well-defined. The factor $1/2$ in the term $\left(\Delta/2\right)^{j-2l}$ normalizes the Laplacian correctly because it is $\Delta=\tr(\nabla^\dagger\nabla)$ and the matrix $\nabla^\dagger\nabla$ is Kramers degenerate.

\subsection{Stiefel manifolds in non-linear $\sigma$-models and random matrix theory}\label{appl-mot}

The Leutwyler-Smilga-like integral~\eqref{groupint} is one of the integrals which quite naturally appear in non-linear $\sigma$ models and in random matrix theories \cite{ShuryakVerbaarschot,Verbaarschot,Zirnbauer96,VerbaarschotWettig,KVZ14}.
For instance, in the microscopic limit of four-dimensional continuum QCD, the integral
\begin{eqnarray}\label{LS-bet2}
 Z(M)=\int_{\rmU(N_{\rm f})}d\mu(U){\det}^\nu U\exp\left[\frac{1}{2}\tr (UM +U^\dagger M^\dagger)\right],
\end{eqnarray}
represents the partition function of $N_{\rm f}$ flavors of quarks in a ${\rm SU}(N_{\rm c}>3)$ gauge theory where the quarks are in the fundamental representation \cite{ShuryakVerbaarschot}. In this subsection $d\mu$ always stands for the unique invariant measure under the group action on a certain coset of the group. The determinant in $U$, to the power $\nu\in\mathbb{Z}$, reflects the non-trivial topological configurations in this gauge theory. The mass matrix $M$ is sometimes chosen as an arbitrary complex matrix to generate particular observables. Quite often the matrix $M$ is chosen diagonal and some of the singular values (usually two, for the up and the down quark) are degenerate, say $M={\rm diag}(m,m,m_1,\ldots,m_{N_{\rm f}-2})$. Then the integral factorizes in an integral over the invariant group $\rmU(2)$ and over the coset $\rmU(N_f)/\rmU(2)$. The integral over $\rmU(2)$ yields an overall constants while the remaining integral runs over a Stiefel manifold.

Also the counterparts of Eq.~\eqref{LS-bet2} over the groups ${\rm O}(2N_{\rm f})$,
\begin{eqnarray}\label{LS-bet1}
 Z(M)=\int_{{\rm O}(2N_{\rm f})}d\mu(U){\det}^\nu U\exp\left[\tr UM\right],\ \nu\in\{0,1\},
\end{eqnarray}
and over $\rmUSp(2N_{\rm f})$,
\begin{eqnarray}\label{LS-bet4}
 Z(M)=\int_{\rmUSp(2N_{\rm f})}d\mu(U)\exp\left[\tr UM\right],
\end{eqnarray}
are found in QCD-like theories, namely two-dimensional QCD for a ${\rm SU}(N_{\rm c}\geq2)$ gauge theory with the fermions in the adjoint representation and for the ${\rm SU}(N_{\rm c}=2)$ gauge theory with the fermions in the fundamental representation, respectively, see Ref.~\cite{KVZ14}.

The integrals~\eqref{LS-bet2}, \eqref{LS-bet1}, and \eqref{LS-bet4} only become integrals over Stiefel manifolds when choosing particular masses. When considering the local statistics of Hermitian random matrices in the limit of large matrix size, the Stiefel manifolds always arise naturally from the saddlepoint approximation. The corresponding non-linear $\sigma$-model is
\begin{eqnarray}\label{part-loc}
 Z(M)&=&\int_{\rmU^{(\beta)}(2N_{\rm f})/[\rmU^{(\beta)}(N_{\rm f})\times\rmU^{(\beta)}(N_{\rm f})]}d\mu(U)\exp\left[\tr U{\rm diag}(\eins_{\gamma N_{\rm f}},-\eins_{\gamma N_{\rm f}})U^\dagger M\right]\\
&=&\exp[\tr M]\int_{\St^{(\beta)}(2N_{\rm f},N_{\rm f})}d\mu(A)\exp\left[-2\tr AA^\dagger M\right],\nonumber
\end{eqnarray}
for $\beta\in\{1,2,4\}$, see \cite{Zirnbauer96}. In the second line we rewrote integral~\eqref{part-loc} as an integral over the Stiefel manifold $\St^{(\beta)}(2N_{\rm f},N_{\rm f})$, which is immediate since it only differs by a global constant. Integral \eqref{part-loc} is a particular case of the Itzykson-Zuber integral~\cite{ItzZub} which should not be confused with the Harish-Chandra integral~\cite{Harish-Chandra}. The integral~\eqref{part-loc} is the fundamental connection between a large number of completely different theories like random matrix theory \cite{Gernotbook}, QCD-like theories in odd-dimensions \cite{VerbaarschotZahed}, and non-linear $\sigma$-models \cite{Zirnbauer96}. 

 Another important case of the Itzykson-Zuber integral~\cite{ItzZub} appearing quite often in random matrix theory is the integral
\begin{eqnarray}\label{ItzZub}
 Z(H)&=&\int_{\SO(4)/[\SO(2)\times\SO(2)]}d\mu(U)\exp\left[\tr U{\rm diag}(\eins_{2},-\eins_{2})U^T H\right]\\
&=&\exp[\tr H]\int_{\St^{(1)}(4,2)}d\mu(A)\exp\left[-2\tr AA^T H\right],\nonumber
\end{eqnarray}
with $H$ an arbitrary real diagonal matrix. This integral is one half of a supersymmetric integral to calculate the two-point correlation function of a real symmetric matrix, see Refs.~\cite{Heiner-thesis,GuhrKohler-b}. The hope is by finding a suitable expression of the integral~\eqref{ItzZub} one can derive analytical results for the transition between the Gaussian orthogonal ensemble (GOE) and the Poisson ensemble which naturally occurs in many physical systems~\cite{GM-GW,Haakebook}. We apply our results on integral~\eqref{ItzZub} in section~\ref{appl}.

\section{Pizzetti formulae from the point of view of invariant theory}\label{sec;Piz-beta1}
\label{secinv}

In this section we consider the idea of Pizzetti-type formulae for Stiefel manifolds from the point of view of classical invariant theory. This provides an alternative approach to the one in subsection~\ref{newPiz} and will lead (in section \ref{sec2}) to rigorous proofs of the results for $m=n-2$. We also introduce some Howe dual pairs and show the application of our results in that theory.

We denote the complexification of the Lie algebra of the groups $\rmU^{(\beta)}(n)$ by $\Li$, so concretely
\begin{equation}
\Li := \left\{\begin{array}{cl} \mathfrak{so}(n;\mC), & \beta=1, \\ \mathfrak{gl}(n;\mC), & \beta=2, \\ \mathfrak{sp}(2n;\mC), & \beta=4.\end{array}\right.
\label{algebras}
\end{equation}
Throughout this section we will denote the indeterminate matrix in $\Gl^{(\beta)}(n,m)$ by $X$ and by $\nabla_X$ the corresponding matrix of partial derivatives.

\subsection{Pizzetti formulae}

We have the inclusion $\St^{(\beta)}(n,m)\hookrightarrow \Gl^{(\beta)}(n,m)$ in equation \eqref{represStiefel}. In this subsection we consider invariant integration over $\St^{(\beta)}(n,m)$ of a broad class of functions on $\Gl^{(\beta)}(n,m)$ containing the algebra of polynomials. This space of functions is given by the following.

\begin{definition}
We define $\FS$ as the space of functions $f$ which belong to $\cC^\infty(\Omega)$ for some open set $\Omega\subset \Gl^{(\beta)}(n,m)$, such that $\{\lambda A| A\in\St^{(\beta)}(n,m),\lambda\in[0,1]\}\subset \Omega$, and the Taylor series of $f$ at the origin converges uniformly in $\Omega$.
\end{definition}

We recall the formal power series in Eq.~\eqref{defphi} and evaluate it in the differential operators on $\Gl^{(\beta)}(n,m)$ with constant coefficients $I_j,j\in[1,n-m]$, which we define as $I_j:= I_j(\imath\nabla_X)$.

\begin{theorem}
\label{PizzInv}
The integration over $\St^{(\beta)}(n,m)$ of $f\in \FS$ with respect to the left $\rmU^{(\beta)}(n)$-invariant measure is
\begin{equation}
\int_{\St^{(\beta)}(n,m)}f=\left(\Psi^{(\beta)}_{n,m}(I_1,\cdots,I_{n-m})f\right)(0),
\end{equation}
where the limit on the right-hand side converges.
\end{theorem}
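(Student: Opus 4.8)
The plan is to recognize both sides as one and the same generating function in disguise, match them coefficient by coefficient on homogeneous polynomials, and then upgrade to all of $\FS$ using the uniform convergence built into its definition. First I would record that the operator on the right-hand side is exactly the symbol $\widehat{\Psi}^{(\beta)}_{n,m}$ with $B$ replaced by $\imath\nabla_X$: since $\widehat{\Psi}^{(\beta)}_{n,m}(B)$ in \eqref{groupint} depends on $B$ only through the invariants $I_j(B)$, and the constant-coefficient operators $I_j:=I_j(\imath\nabla_X)$ mutually commute, the power series $\Psi^{(\beta)}_{n,m}(I_1,\dots,I_{n-m})$ of \eqref{defphi} is an unambiguous series in these commuting operators, to be read as ``$\widehat{\Psi}^{(\beta)}_{n,m}(\imath\nabla_X)$''. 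Write $L_1(f):=\int_{\St^{(\beta)}(n,m)}f$ and $L_2(f):=(\Psi^{(\beta)}_{n,m}(I_1,\dots,I_{n-m})f)(0)$ for the two linear functionals to be compared.

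Next comes the core identity. Introduce the real symmetric pairing $\psi(B,A):=\tfrac{1}{2\gamma}(\tr B^\dagger A+\tr A^\dagger B)$ on $\Gl^{(\beta)}(n,m)$, so that $L_1(\exp[\imath\psi(K,\cdot)])=\widehat{\Psi}^{(\beta)}_{n,m}(K)$ holds directly by \eqref{groupint} (the normalized measure in \eqref{Stiefintdef} being Haar). On the other hand, applying $\partial_{X_{ji}}$ to the plane wave $\exp[\imath\psi(K,X)]$ brings down a factor linear in $K$; the constant $\beta/\gamma$ in the gradient \eqref{gradient-def} is chosen precisely so that $\imath\nabla_X$ acts on this plane wave as multiplication by $-K$ (including the Wirtinger bookkeeping for $\beta=2,4$). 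Since each $I_j$ depends on its argument only through $B^\dagger B$, one gets $I_j(\imath\nabla_X)\exp[\imath\psi(K,X)]=I_j(K)\exp[\imath\psi(K,X)]$, whence $L_2(\exp[\imath\psi(K,\cdot)])=\widehat{\Psi}^{(\beta)}_{n,m}(K)$ as well. I would carry this out by expanding both sides in powers of $K$ and comparing degree by degree, so that at each fixed degree only finitely many derivatives occur and no convergence question arises yet.

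Because the linear forms $A\mapsto\psi(K,A)$ exhaust the dual as $K$ varies, the powers $\psi(K,\cdot)^d$ span the homogeneous polynomials of degree $d$; matching the degree-$d$ Taylor coefficients in $K$ of the two equal functions $K\mapsto\widehat{\Psi}^{(\beta)}_{n,m}(K)$ therefore yields $L_1=L_2$ on every homogeneous polynomial, hence on all polynomials. (Equivalently, one may note that both functionals are $\rmU^{(\beta)}(n)\times\rmU^{(\beta)}(n-m)$-invariant and so factor through the Reynolds projection onto invariants, where everything is governed by the single symbol $\widehat{\Psi}^{(\beta)}_{n,m}$.) To pass to $f\in\FS$, write $f=\sum_d f_d$ with $f_d$ its degree-$d$ homogeneous Taylor component. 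Uniform convergence on the compact set $\St^{(\beta)}(n,m)\subset\Omega$ gives $L_1(f)=\sum_d L_1(f_d)$. For $L_2$, evaluation at the origin after applying $\Psi^{(\beta)}_{n,m}(I_1,\dots,I_{n-m})$ isolates, for each $f_d$, only the order-$d$ part of the operator, so $L_2(f_d)=L_1(f_d)$ by the polynomial case; term-by-term differentiation of the Taylor series at $0$ (valid in the interior of its domain of convergence) then gives $L_2(f)=\sum_d L_2(f_d)$, and summing yields $L_1(f)=L_2(f)$ together with the asserted convergence.

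The hard part will be this last step: justifying the interchange of the infinite-order operator with the infinite Taylor sum and the convergence of $\sum_d L_2(f_d)$. The degree-by-degree bookkeeping reduces it to term-by-term differentiation of the power series of $f$ combined with the compactness of $\St^{(\beta)}(n,m)$, and the definition of $\FS$ via uniform convergence is tailored to supply exactly these. A secondary point to check with care is the constant-matching in the symbol identity for $\beta=2,4$, where $\partial_{X_{ji}}$ must be interpreted through real and imaginary parts and the factor $\beta/\gamma$ in \eqref{gradient-def} is what restores the correspondence $\imath\nabla_X\leftrightarrow -K$.
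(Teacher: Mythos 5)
Your proposal is correct, but it reaches the polynomial case by a genuinely different route than the paper. The paper's proof is structural: it writes an arbitrary functional as $T(f)=\sum_j(T_jf)(0)$ with $T_j$ constant-coefficient operators of degree $j$, invokes classical invariant theory (left $\rmU^{(\beta)}(n)$-invariance plus the restriction property $T((X^\dagger X)_{ij}\cdot)=\delta_{ij}T(\cdot)$) to force every $T_j$ into the algebra generated by $I_1,\dots,I_{n-m}$, and only \emph{then} identifies the resulting unknown series $\phi$ by testing on the exponential. You invert this order: you never establish a priori that $\int_{\St^{(\beta)}(n,m)}$ has the form $(\phi(I_1,\dots,I_{n-m})\,\cdot\,)(0)$; instead you verify $L_1=L_2$ directly, first on plane waves $\exp[\imath\psi(K,\cdot)]$ (where both sides reproduce the Taylor series of $\widehat{\Psi}^{(\beta)}_{n,m}(K)$, the left by \eqref{groupint} and the right by the symbol calculus together with \eqref{defphi}), and then on all polynomials via the classical fact that $d$-th powers of linear forms span the homogeneous polynomials of degree $d$. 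Both proofs share the exponential test function, the $\beta/\gamma$ bookkeeping in \eqref{gradient-def}, and essentially the same truncation/uniform-convergence argument for the passage to $\FS$. What your route buys is a more elementary, analysis-flavored argument that makes the Fourier heuristic of section \ref{sketch} rigorous without quoting the invariant-theory lemma; what it costs is transparency about where invariant theory actually enters: it is not avoided but relocated into the well-definedness of Definition \eqref{defphi} (existence and uniqueness of the formal power series $\Psi^{(\beta)}_{n,m}$ rests on the first fundamental theorem and the algebraic independence of $I_1,\dots,I_{n-m}$), which your proof — like the theorem's statement — takes as given, whereas the paper's argument in effect re-derives the existence half. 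The paper's structural step also pays dividends later: it is the basis for Proposition \ref{onlyu} and the Howe-duality interpretation in Theorem \ref{interpHowe}, which your approach would not yield by itself.
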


\begin{proof}
First we consider the case where $f$ is a polynomial function on $\Gl^{(\beta)}(n,m)\cong \mR^{\beta n(n-m)}$. Any functional $T$ acting on the space of polynomials denoted by $\mR\left[\mR^{\beta n(n-m)})\right]$  can be written as 
\begin{equation}
T(f)=\sum_{j=0}^\infty (T_j f)(0)
\end{equation}
where $T_j$ are differential operators with constant coefficients of degree $j$.

Now imposing the condition that this functional is invariant for left $\rmU^{(\beta)}(n)$-action implies that each $T_j$ should be $\rmU^{(\beta)}(n)$-invariant. 
If we furthermore impose that $T((X^\dagger X)_{ij}\cdot)\equiv \delta_{ij}T(\cdot)$ for $1\le i,j\le n-m$, meaning that we restrict to $\St^{(\beta)}(n,m)$, we know that $T$ is also invariant under the right $\rmU^{(\beta)}(n-m)$-action. All of this implies that each $T_j$ is in the algebra generated by $\{I_1,\cdots, I_{n-m}\}$.

So we have that 
\begin{equation}
\label{eqphi}
\int_{\St^{(\beta)}(n,m)}f=\left(\phi(I_1,\cdots,I_{n-m})f\right)(0),
\end{equation}
for an arbitrary polynomial $f$ and some formal power series $\phi$. Let $f\in \FS$, then by definition there are polynomials $f_j$ of degree $j$ such that $\{f_j, j\in \mN\}$ converges uniformly to $f$ on $\St^{(\beta)}(n,m)$. Hence we have
\begin{equation}
\int_{\St^{(\beta)}(n,m)}f=\lim_{j\to\infty}\int_{\St^{(\beta)}(n,m)}f_j=\lim_{j\to \infty}\left(\phi^{(j)}(I_1,\cdots,I_{n-m})f\right)(0),
\end{equation}
where $\phi^{(j)}(I_1,\cdots,I_n)$ is the polynomial given by taking all terms in $\phi(\lambda^2 I_1,\cdots,\lambda^{2n-2m}I_{n-m})$ which are of degree $j$ or lower in $\lambda$ and then setting $\lambda=1$. This implies that equation \eqref{eqphi} also holds for $f\in \FS$.

Considering $\exp\left[-\imath(\tr B^\dagger X+\tr X^\dagger B)/(2\gamma)\right]\in \FS$ for arbitrary $B\in \Gl^{(\beta)}(n,m)$, we have the equality
\begin{equation}
\phi( I_1(B),\cdots, I_{n-m}(B))= \int_{\St^{(\beta)}(n,m)}d\mu(X)\exp\left[-\frac{\imath}{2\gamma}(\tr B^\dagger X+\tr X^\dagger B)\right]=\widehat{\Psi}^{(\beta)}_{n,0} ( B),
\end{equation}
 as a formal power series in $B$ proving that $\phi=\Psi^{(\beta)}_n$.
\end{proof}

By construction we have the equality
\begin{equation}
\Psi_{n,m}^{(\beta)}(x_1,\cdots,x_{n-m})= \Psi_{n,0}^{(\beta)}(x_1,\cdots,x_{n-m},0,\cdots,0).
\end{equation}
It therefore remains to calculate the formal power series $\Psi_{n,0}^{(\beta)}$. The intuitive results in section \ref{sketch} already deliver some insight into possible solutions. In section \ref{sec2} we will prove that the formulae suggested in subsection \ref{newPiz} for $m=n-2$ are correct, i.e. 
\begin{equation}\Psi_{n,0}^{(\beta)}(x_1,x_2,0,\cdots,0)=\sum_{j=0}^{\infty}\frac{\Gamma(\beta n/2)}{4^j \Gamma(j+\beta n/2)}\sum_{l=0}^{\lfloor j/2\rfloor}\frac{\Gamma[\beta(n-1)/2]}{\Gamma[l+\beta (n-1)/2]}\frac{x_1^{j-2l}}{(j-2l)!}\frac{x_2^l}{l!}.
\end{equation}

We conclude this section with a useful characterization of the undetermined power series $\Psi^{(\beta)}_{n,m}$. Therefore we study the space of polynomials on the real space $\Gl^{(\beta)}(n,m)$ contained in $X^\dagger X$ as a $\rmU^{(\beta)}(n-m)$-module, or more precisely as a module of the corresponding Lie algebra. For a module $M$, we use the notation $M\odot M\cong \odot^2 M$, $\,\,M^\ast$, $\,\,M\wedge M\cong \wedge^2 M$ for respectively the symmetric tensor product, the dual module and the anti-symmetric tensor product.

\begin{lemma}
\label{decompcond}
The complexification of the real space of polynomials on $\Gl^{(\beta)}(n,m)$ corresponding to the real degrees of freedom in the indeterminate (complex) matrix $X^\dagger X$ as a $\mathfrak{g}^{(\beta)}_{n-m}$-module is isomorphic to
\begin{equation}
 \left\{\begin{array}{cl} V\odot V \,\,& \mbox{ with $V$ the tautological $\mathfrak{so}(n-m;\mC)$-module,  $\beta=1$}, \\  V\otimes V^\ast & \mbox{ with $V$ the tautological $\mathfrak{gl}(n-m;\mC)$-module,  $\beta=2$}, \\V\wedge V \,\,& \mbox{ with $V$ the tautological $\mathfrak{sp}(2n-2m;\mC)$-module,  $\beta=4$} .\end{array}\right.
\end{equation}
This module decomposes into the direct sum of two simple modules if $m<n-1$ and is simple for $m=n-1$. One is isomorphic to the trivial module and is realised as $\tr X^\dagger X$, the other is the kernel of the map $X^\dagger X\to \eins_{\gamma (n-m)}$ 
\end{lemma}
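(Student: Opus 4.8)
The plan is to read the module structure directly off the right $\rmU^{(\beta)}(n-m)$-action and then pin down the symmetry type forced by the underlying field. Write $X=(x_1,\dots,x_{n-m})$ in terms of its $\mK$-columns ($\mK\in\{\mR,\mC,\mH\}$ for $\beta=1,2,4$). The right action $X\mapsto XU_R$ mixes the columns linearly, so after complexification the column index carries the tautological module $V$ of $\mathfrak{g}^{(\beta)}_{n-m}$. The entries $(X^\dagger X)_{ij}$ are the $\mK$-valued Hermitian pairings $\langle x_i,x_j\rangle$; they are left $\rmU^{(\beta)}(n)$-invariant since $X\mapsto U_LX$ leaves $X^\dagger X$ fixed, and their independent real components span precisely the real space in the statement. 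Because the pairing contracts the summed row index against the $\rmU^{(\beta)}(n)$-invariant form, the outcome depends only on the two column indices, so the span is the image of a $\mathfrak{g}^{(\beta)}_{n-m}$-equivariant map out of $V\otimes V$ (resp.\ $V^\ast\otimes V$ for $\beta=2$). The entries are linearly independent quadratic polynomials, so this map is injective and a dimension count matches the claimed $V\odot V$, $V\otimes V^\ast$, $V\wedge V$.

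The only genuinely $\beta$-dependent point is the symmetry of this map. For $\beta=1$ the pairing is symmetric, $(X^TX)_{ij}=(X^TX)_{ji}$, giving the symmetric square $V\odot V$ of the tautological $\mathfrak{so}(n-m;\mC)$-module. For $\beta=2$ one column index comes from $X$ and the other from $\overline{X}$, transforming in $V$ and $V^\ast$ under $\mathfrak{gl}(n-m;\mC)$, giving $V\otimes V^\ast$. The subtle case is $\beta=4$, which I expect to be the main obstacle: here I would invoke the reality constraint $\overline{A}=\tau_2^{(n)}A\tau_2^{(n-m)}$ from \eqref{descimH} to rewrite $A^\dagger A=\tau_2^{(n-m)}A^T\tau_2^{(n)}A$, and then note that $A^T\tau_2^{(n)}A$ is antisymmetric because $(\tau_2^{(n)})^T=-\tau_2^{(n)}$. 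Multiplication by the fixed invariant matrix $\tau_2^{(n-m)}$ (the algebraic origin of the matrix $I$ in \eqref{I-def}) is $\mathfrak{sp}$-equivariant, so the entries of $A^\dagger A$ span the same module as the antisymmetric bilinears in the columns, namely $V\wedge V$ for the tautological $\mathfrak{sp}(2n-2m;\mC)$-module with symplectic form $\imath\tau_2^{(n-m)}$. Tracking the quaternionic conjugation, the two occurrences of $\tau_2$, and the resulting sign is the one place where the argument is not purely formal.

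It then remains to decompose $V\odot V$, $V\otimes V^\ast$, $V\wedge V$ by classical representation theory. In each case the distinguished invariant — the symmetric form for $\mathfrak{so}$, the identity element for $\mathfrak{gl}$, the symplectic form for $\mathfrak{sp}$ — spans a one-dimensional trivial submodule, which is exactly $\tr X^\dagger X$ up to normalisation; its invariant complement is the kernel of contraction with that form, i.e.\ the map $X^\dagger X\mapsto\eins_{\gamma(n-m)}$ in the statement. These complements are the traceless symmetric module $V(2\omega_1)$ for $\mathfrak{so}$, the adjoint module $V(\omega_1+\omega_{n-m-1})$ of $\mathfrak{sl}(n-m)$, and the primitive antisymmetric module $V(\omega_2)$ for $\mathfrak{sp}$, each a simple highest-weight module. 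Hence for $n-m\geq 2$ the space is a direct sum of two simple modules, while for $n-m=1$ the complement vanishes and only the trivial module survives, exactly as claimed. The sole genuine degeneration is $\mathfrak{so}(2)$ at $\beta=1$, $n-m=2$, where this abelian algebra splits the two-dimensional complement further over $\mC$; there only the trivial-plus-complement decomposition, which is all the later sections require, persists.
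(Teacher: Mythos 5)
Your proposal is correct and takes essentially the same route as the paper's proof: write $X$ in columns carrying copies of the tautological module under the right $\rmU^{(\beta)}(n-m)$-action, read off the symmetry type of the pairing ($X^TX$ symmetric for $\beta=1$, mixed $u_j^\dagger u_i$ for $\beta=2$, and antisymmetric $A^T\tau_2^{(n)}A$ via the reality constraint \eqref{descimH} for $\beta=4$, matching the paper's $u_i^TIu_j=-u_j^TIu_i$) to obtain $V\odot V$, $V\otimes V^\ast$, $V\wedge V$, and then invoke the classical trace-plus-traceless decomposition, which the paper simply calls standard. Your closing remark is in fact a refinement the paper glosses over: for $\beta=1$, $n-m=2$ the kernel of $X^\dagger X\to \eins_{2}$ is not simple over the abelian $\mathfrak{so}(2;\mC)$ but splits into two weight lines, so the lemma's "two simple modules" is imprecise there; however the element $(X^\dagger X)_{11}-(X^\dagger X)_{22}$ still generates the whole kernel under the action, which is all that the application in proposition \ref{onlyu} actually requires.
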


\begin{proof}
In this proof we consider the most complicated scenario of square matrices, i.e. $m=0$. This shortens the notation but the proof does not change for the other cases.

First consider $\beta=1$. Define the columns $u_j$ of the matrix $X$ by $X=(u_1,\cdots,u_n)$. It is clear that the space  $\Span_{\mR}\{ u_i^Tu_j=u_j^Tu_i\}$ is isomorphic to $V\odot V$ for the $\mathfrak{so}(n;\mR)$ action on $\Gl^{(1)}(n,0)$ coming from the right $\SO(n)$-multiplication. The complexification follows trivially.

Now consider $\beta=2$. Again we set $X=(u_1,\cdots,u_n)$, where now the complex conjugates $u_j^\ast$ are independent column vectors. The action of $\mathfrak{gl}(n;\mC)$ on $\Gl^{(2)}(n,0)$ coming from the right $\rmU(n)$-multiplication acts on the complexification of the $n$-dimensional space $\Span_{\mR}\{u_i\}$ as the tautological representation and on the $n$-dimensional space $\Span_{\mR}\{u_i^\ast\}$ as the dual of that representation. This implies that the space \begin{equation}\left(\Span_{\mR}\{ (u_i^\ast)^Tu_j=u_j^\dagger u_i\}\right)_{\mC}\end{equation} is isomorphic to $V^\ast\otimes V$ as a $\mathfrak{gl}(n;\mC)$-module.

Finally take $\beta=4$, then we can describe $X\in\Gl^{(4)}(n,0)\hookrightarrow \mC^{2n\times 2n}$ as $X=(u_1, u_2,\cdots ,u_{2n})$, where all $u_j$ are $\mR$-linearly independent and their complex conjugate are completely determined by relation~\eqref{descimH}. The $2n$-dimensional space $\Span_{\mR}\{u_i\}$ clearly leads to the tautological $\mathfrak{sp}(2n;\mC)$-module. The matrix entries of $X^\dagger X$ then correspond to $u_i^T I u_j=-(u_j^T I u_i)$, with $I$ the anti-symmetric matrix in Eq. \eqref{I-def}, which then leads to the antisymmetric tensor product of the tautological module.

The decomposition of each of these tensor products into simple modules is standard.
\end{proof}

\begin{proposition}
\label{onlyu}
The formal power series $\Psi^{(\beta)}_{n,m}$ is the unique formal power series $\phi$ such that
\begin{equation}
\label{eqonlyu}
\left(\phi(I_1,\cdots,I_{n-m})u^2f\right)(0)=\left(\phi(I_1,\cdots,I_{n-m})f\right)(0),
\end{equation}
with $u^2:=(X^\dagger X)_{11}$ and $f$ an arbitrary polynomial on $\Gl^{(\beta)}(n,m)$.
\end{proposition}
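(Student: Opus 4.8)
The plan is to prove existence and uniqueness separately. Existence is immediate: by Theorem~\ref{PizzInv} the functional $T:=T_\phi$ with $T_\phi(f)=(\phi(I_1,\dots,I_{n-m})f)(0)$ equals $\int_{\St^{(\beta)}(n,m)}$ when $\phi=\Psi^{(\beta)}_{n,m}$, and by \eqref{represStiefel} one has $(X^\dagger X)_{11}=1$ on $\St^{(\beta)}(n,m)$, so multiplying the integrand by $u^2=(X^\dagger X)_{11}$ changes nothing; hence $\Psi^{(\beta)}_{n,m}$ satisfies \eqref{eqonlyu}. For uniqueness, fix $\phi$ satisfying \eqref{eqonlyu} and normalized by $\phi(0,\dots,0)=1$ (some normalization is needed, as the zero series also solves \eqref{eqonlyu}); write $T=T_\phi$. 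Since each $I_j=I_j(\imath\nabla_X)$ is invariant under both the left $\rmU^{(\beta)}(n)$- and the right $\rmU^{(\beta)}(n-m)$-action and the origin is fixed, $T$ is a bi-invariant functional. The crux is to upgrade \eqref{eqonlyu} to the full restriction system
\begin{equation}\label{fullcond}
T\bigl((X^\dagger X)_{ij}\,f\bigr)=\delta_{ij}\,T(f),\qquad 1\le i,j\le \gamma(n-m),
\end{equation}
for all $f$. Granting \eqref{fullcond}, $T$ annihilates the ideal generated by $(X^\dagger X)_{ij}-\delta_{ij}$, hence descends to a left-invariant, normalized functional on the coordinate ring of $\St^{(\beta)}(n,m)$; as $\rmU^{(\beta)}(n)$ acts transitively there, averaging shows $T=\int_{\St^{(\beta)}(n,m)}$. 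Evaluating on the exponential kernel $\exp[-\tfrac{\imath}{2\gamma}(\tr B^\dagger X+\tr X^\dagger B)]$ and using \eqref{groupint}--\eqref{defphi} then gives $\phi(I_1(B),\dots)=\widehat\Psi^{(\beta)}_{n,m}(B)=\Psi^{(\beta)}_{n,m}(I_1(B),\dots)$, so $\phi=\Psi^{(\beta)}_{n,m}$ as formal power series.

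To establish \eqref{fullcond}, let $W:=\Span\{(X^\dagger X)_{ij}\}$ be the $K:=\rmU^{(\beta)}(n-m)$-module of Lemma~\ref{decompcond}, and decompose $W=W_0\oplus W_1$ with $W_0$ the one-dimensional trivial submodule and $W_1$ simple and nontrivial (for $n-m\ge 2$; when $n-m=1$ one has $W_1=0$, the sphere case, and \eqref{fullcond} is immediate). For each fixed homogeneous degree $d$ of $f$, consider the linear map $\Phi_d\colon W\to \mathrm{Pol}_d(\Gl^{(\beta)}(n,m))^{\ast}$, $\Phi_d(w)(f)=T(wf)$, between finite-dimensional $K$-modules. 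Bi-invariance of $T$ makes $\Phi_d$ equivariant: writing the right action on functions as $(\pi(k)g)(X)=g(Xk)$, one has $T(wf)=T(\pi(k)(wf))=T((\pi(k)w)(\pi(k)f))$, which rearranges to $\Phi_d(\pi(k)w)=\pi^{\ast}(k)\,\Phi_d(w)$. Moreover the reference functional $S:=T(\cdot)$ is itself $K$-fixed, so it lies in the trivial isotypic component of $\mathrm{Pol}_d^\ast$.

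Now decompose $w_0:=(X^\dagger X)_{11}=w_0^{(0)}+w_0^{(1)}$ along $W_0\oplus W_1$, where $w_0^{(0)}$ is the $K$-average $\tfrac{1}{\gamma(n-m)}\tr(X^\dagger X)$ (by Schur's lemma for the standard representation) and $w_0^{(1)}\neq 0$. Condition \eqref{eqonlyu} reads $\Phi_d(w_0)=S$, which is purely trivial-isotypic, whereas $\Phi_d(w_0^{(1)})$ lies in the $W_1$-isotypic component; matching isotypes forces $\Phi_d(w_0^{(1)})=0$. Since $W_1$ is simple, $w_0^{(1)}\neq0$, and $\Phi_d$ is equivariant, this yields $\Phi_d|_{W_1}\equiv 0$, so $\Phi_d$ factors through the projection $W\to W_0$. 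Evaluating on $w=(X^\dagger X)_{ij}$, whose $W_0$-component is $\tfrac{\delta_{ij}}{\gamma(n-m)}\tr(X^\dagger X)$, gives $T((X^\dagger X)_{ij}f)=\tfrac{\delta_{ij}}{\gamma(n-m)}T(\tr(X^\dagger X)\,f)$; the case $(i,j)=(1,1)$ together with \eqref{eqonlyu} identifies $\tfrac{1}{\gamma(n-m)}T(\tr(X^\dagger X)f)=T(f)$, and \eqref{fullcond} follows. As $d$ was arbitrary, \eqref{fullcond} holds for all polynomials.

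The one genuinely nontrivial ingredient is the \emph{simplicity} of $W_1$ supplied by Lemma~\ref{decompcond}: it is exactly what promotes the single diagonal relation \eqref{eqonlyu} to the full family \eqref{fullcond}, since an equivariant map out of a simple module is either zero or injective. Were $W_1$ reducible the isotypic matching would annihilate only one constituent, and the single condition would not suffice. I expect the only care needed is the uniform treatment of $\beta=1,2,4$: the displayed matrices have size $\gamma(n-m)$, and in each case the standard representation of $K$ is irreducible so that the $K$-average projects onto the scalar (trivial) line, making the argument identical across the three Dyson indices.
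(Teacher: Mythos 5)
Your proposal is correct and takes essentially the same route as the paper: both rest on bi-invariance of $T$ and on the simplicity of the non-trivial module in Lemma~\ref{decompcond} to promote the single relation \eqref{eqonlyu} to the full system $T((X^\dagger X)_{ij}f)=\delta_{ij}T(f)$ (you by matching isotypic components of $(X^\dagger X)_{11}$, the paper by permuting columns and feeding $(X^\dagger X)_{11}-(X^\dagger X)_{\gamma 2,\gamma 2}$ into the simple module), after which the functional is identified with $\int_{\St^{(\beta)}(n,m)}$ and $\phi$ with $\Psi^{(\beta)}_{n,m}$ via the exponential kernel as in Theorem~\ref{PizzInv}. Your remark that literal uniqueness requires a normalization such as $\phi(0,\cdots,0)=1$, since every scalar multiple of $\Psi^{(\beta)}_{n,m}$ also satisfies \eqref{eqonlyu}, is a correct refinement that the paper leaves implicit.
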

\begin{proof}
In the spirit of the proof of Theorem \ref{PizzInv}, it suffices to prove that the proposed functional $T$ on the space of polynomials 
\begin{itemize}
\item[1)] is left $\rmU^{(\beta)}(n)$-invariant;
\item[2)] restricts to a functional on $\St^{(\beta)}(n,m)$.
\end{itemize} 
Condition 1) is obviously satisfied due to the invariance of $I_j$. For condition 2) we consider trivial complexification of all concepts. The second condition can then be expressed in a matrix identity as
\begin{equation}\label{condXX}T((X^\dagger X-\eins_{\gamma(n-m)})f)=0,
\end{equation}
for any $f$. The condition on $u^2$ is therefore a necessary condition, we show that it is also sufficient.

This is trivial for $m=n-1$, so we consider $m<n-1$. The functional $T$ is by construction also invariant under the right action of $\rmU^{(\beta)}(n-m)$. This action (or just a consideration by the symmetry $I_j$ under permuting columns of $X$) immediately shows that the condition for $u^2$ implies in general that
\begin{equation}T((X^\dagger X)_{ii}f)=T(f).
\end{equation}
for any $1\le i\le \gamma(n-m)$. This is the diagonal part of condition \eqref{condXX}. Furthermore, for $p:=(X^\dagger X)_{11}-(X^\dagger X)_{\gamma 2,\gamma 2}$, which is a non-trivial element of the kernel of $X^\dagger X\to \eins_{\gamma (n-m)}$, we have $T(pf)=0$. As this kernel constitutes a simple module for $\Lii_{n-m}$ by Lemma \ref{decompcond}, complimentary to the simple module $\sum_{i=1}^n(X^\dagger X)_{\gamma i,\gamma i}$, it follows immediately that condition \eqref{condXX} is satisfied.
\end{proof}

The characterization of $\Psi^{(\beta)}_{n,m}$ in proposition \ref{onlyu} concentrates the quest for Pizzetti formulae for Stiefel manifolds into one very innocent looking condition \eqref{eqonlyu}. In appendix \ref{secprop} we prove that  this condition is satisfied for the formulae for $m=n-2$ obtained in appendix \ref{app0}. However, already there, this calculation is all but straightforward.

\subsection{The Howe duality associated to a Stiefel manifold}

We consider the real space $\Gl^{(\beta)}(n,m)\cong \mR^{\beta n(n-m)}$ as a module for $\rmU^{(\beta)}(n)$ through left multiplication. 

\begin{lemma}
\label{modstr}
We have the following isomorphisms of $\Li$-modules:
\begin{equation}
\Gl^{(\beta)}(n,m)_{\mC} := \left\{\begin{array}{cl} V^{\oplus (n-m)}, & \beta=1, \\ V^{\oplus (n-m)}\oplus (V^{\ast})^{\oplus (n-m)}, & \beta=2, \\ V^{\oplus 2n-2m}, & \beta=4.\end{array}\right.
\end{equation}
with $V\cong\mC^{\gamma n}$ the tautological module of $\Li$.
\end{lemma}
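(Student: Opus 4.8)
The plan is to observe that left multiplication by $U\in\rmU^{(\beta)}(n)$ acts column by column: writing $A=(u_1,\ldots,u_k)\in\Gl^{(\beta)}(n,m)$ one has $UA=(Uu_1,\ldots,Uu_k)$. Hence, as a real $\rmU^{(\beta)}(n)$-module, $\Gl^{(\beta)}(n,m)$ is a direct sum of copies of a single column module, and since the real group action complexifies to the $\Li$-action it suffices to complexify that column module and count the copies. I will use two standard facts. First, for a complex module $W$ on which $\rmU^{(\beta)}(n)$ acts $\mC$-linearly, $W_{\mR}\otimes_{\mR}\mC\cong W\oplus\overline{W}$, where $\overline W$ is the conjugate module; this follows by splitting the $\mC$-linear extension of the complex structure into its $\pm\imath$-eigenspaces, which the group preserves because it acts complex-linearly. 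Second, if a complex module $U$ carries a $\rmU^{(\beta)}(n)$-equivariant antilinear involution $\sigma$, then the complexification of its real form satisfies $U^{\sigma}\otimes_{\mR}\mC\cong U$, via $u\otimes z\mapsto zu$.

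For $\beta=1$ the column module $\mR^{n}$ is already real and carries the defining $\SO(n)$-action, whose complexification is the tautological $\mathfrak{so}(n;\mC)$-module $V=\mC^{n}$; with $n-m$ columns this gives $V^{\oplus(n-m)}$ directly. For $\beta=2$ each column lies in $\mC^{n}=V$, the tautological $\mathfrak{gl}(n;\mC)$-module, so by the first fact the complexified column module is $V\oplus\overline V$. It then remains to identify $\overline V$ with $V^{\ast}$, which is the familiar statement that the conjugate of the defining representation of the unitary group is its dual (on $\rmUn(n)$ one has $\overline X=-X^{T}$, exactly the action on $V^{\ast}$), and this passes to the complexified Lie algebra. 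Summing over the $n-m$ columns yields $V^{\oplus(n-m)}\oplus(V^{\ast})^{\oplus(n-m)}$.

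For $\beta=4$ I would work in the complex realisation \eqref{descimH} and set $U:=\mC^{2n\times 2(n-m)}$, with $\rmUSp(2n)$ acting by left multiplication; read column by column, $U\cong V^{\oplus(2n-2m)}$ for the tautological $\mathfrak{sp}(2n;\mC)$-module $V=\mC^{2n}$. The point is that $\Gl^{(4)}(n,m)$ is precisely the fixed-point real form $U^{\sigma}$ of the antilinear map $\sigma(A)=\tau_2^{(n)}\overline A\,\tau_2^{(n-m)}$: indeed $\sigma(A)=A$ rearranges into the constraint $A^{\ast}=\tau_2^{(n)}A\,\tau_2^{(n-m)}$ of \eqref{descimH}. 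Using $(\tau_2^{(k)})^{2}=\eins_{2k}$ and $\overline{\tau_2^{(k)}}=-\tau_2^{(k)}$ one checks $\sigma^{2}=\mathrm{id}$, and $\rmUSp(2n)$-equivariance follows from the fact that every $g\in\rmUSp(2n)$ obeys $\overline g=\tau_2^{(n)}g\,\tau_2^{(n)}$. The second fact then gives $\Gl^{(4)}(n,m)_{\mC}\cong U\cong V^{\oplus(2n-2m)}$.

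The only genuine subtlety is the bookkeeping of conjugate versus dual modules, which is where the three cases truly differ: for $\beta=2$ the conjugate column module $\overline V$ is inequivalent to $V$ and must be recorded separately as $V^{\ast}$, whereas for $\beta=4$ the pseudoreal (quaternionic) structure, made explicit through the equivariant real structure $\sigma$, forces the complexification to be a sum of tautological modules alone, with no dual appearing. Checking that $\sigma$ is genuinely an equivariant involution, i.e.\ that the defining constraint of $\Gl^{(4)}(n,m)$ is compatible with left $\rmUSp(2n)$-multiplication, is the one computation that needs care; everything else is dimension counting.
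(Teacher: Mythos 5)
Your proof is correct and follows essentially the same route as the paper's: the paper proves the lemma by the column-by-column analysis given in the proof of Lemma \ref{decompcond}, where for $\beta=2$ the spans of the columns $u_i$ and of their conjugates $u_i^*$ yield $V$ and $V^*$ respectively, and for $\beta=4$ the columns span a real form of $\mC^{2n}$-space because their conjugates are fixed by the quaternionic relation \eqref{descimH}. Your two ``standard facts'' ($W_{\mR}\otimes_{\mR}\mC\cong W\oplus\overline{W}$, and that a real form defined by an equivariant antilinear involution $\sigma$ complexifies back to the original module) simply make that same argument precise.
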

\begin{proof} 
This can be proved identically as the corresponding statements in the proof of Lemma \ref{decompcond}.
\end{proof}

As in \cite{Howe} we introduce the complexification of the Lie algebra of quadratic differential operators on $\mR^{\beta n(n-m)}\cong \Gl^{(\beta)}(n,m)$, which is isomorphic to $\mathfrak{sp}(2\beta n(n-m);\mC)$. This Lie algebra has a $\mZ$-gradation
\begin{equation}
\label{paradecomp}
\mathfrak{sp}(2\beta n(n-m);\mC)=\fu^-\oplus \mathfrak{gl}(\beta n(n-m);\mC)\oplus \fu^+,\end{equation}
where in the realization mentioned above $\mathfrak{gl}(\beta n(n-m);\mC)$ is given by the differential operators of degree zero, $\fu^-$ by those of degree plus 2 and $\fu^+$ by those of degree minus 2.

We denote the Lie algebra, which is the centralizer of $\Li$ in $\mathfrak{sp}(2\beta n(n-m);\mC)$, by $\Gamma^{(\beta)}_{n,m} $ and furthermore set $\fl^{(\beta)}_{n,m}:= \mathfrak{gl}(\beta n(n-m);\mC)\cap \Gamma^{(\beta)}_{n,m}$, the differential operators in $\Gamma^{(\beta)}_{n,m} $ of zero degree. We denote by $\Gamma^{(\beta)}_{n,m}=\fv^-\oplus \fl^{(\beta)}_{n,m}\oplus\fv^+ $ the $\mZ$-gradation inherited from Eq.~\eqref{paradecomp}.

\begin{theorem}
\label{HoDu}
Excluding the case $(\beta,n,m)=(1,2,0)$, we have
\begin{equation}
\Gamma^{(\beta)}_{n,m} := \left\{\begin{array}{cl} \mathfrak{sp}(2n-2m;\mC) & \\ \mathfrak{gl}(2n-2m;\mC) &  \\ \mathfrak{so}(4n-4m;\mC) & \end{array}\right.\mbox{and }\quad \fl^{(\beta)}_{n,m} := \left\{\begin{array}{cl} \mathfrak{gl}(n-m;\mC) &  \\ \mathfrak{gl}(n-m;\mC)\oplus  \mathfrak{gl}(n-m;\mC) &\\ \mathfrak{gl}(2n-2m;\mC) & \end{array}\right.\mbox{for }\quad \beta=\left\{\begin{array}{cl} 1 \\ 2 \\ 4.\end{array}\right.
\label{algebras2}
\end{equation}
Furthermore the commutative algebra $\fv^+$ is spanned by the complexification of the polynomials on the real space $\Gl^{(\beta)}(n,m)$ that are in the matrix $X^\dagger X$. This implies in particular that
\begin{equation}
\dim_{\mC}\fv^+\,=\,\dim_{\mR} \Gl^{(\beta)}(n,m)-\dim_{\mR}\St^{(\beta)}(n,m).
\end{equation}
\end{theorem}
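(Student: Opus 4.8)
The plan is to realise $(\Li,\Gamma^{(\beta)}_{n,m})$ as one of the classical irreducible reductive dual pairs in a symplectic Lie algebra and to extract the three graded components from the module structure of Lemma~\ref{modstr}. Set $N:=\Gl^{(\beta)}(n,m)_{\mC}$ and $W:=N\oplus N^{\ast}$, with the canonical symplectic form $\omega$ pairing the two summands, so that the quadratic differential operators form $\mathfrak{sp}(W)\cong\mathfrak{sp}(2\beta n(n-m);\mC)$. Using $\omega$ to identify $W\cong W^{\ast}$ gives an isomorphism of $\Li$-modules $\mathfrak{sp}(W)\cong\odot^{2}W=\odot^{2}N\,\oplus\,(N\otimes N^{\ast})\,\oplus\,\odot^{2}N^{\ast}$ compatible with the gradation~\eqref{paradecomp}: the degree-zero part $N\otimes N^{\ast}=\End(N)$ consists of the degree-preserving first-order operators, while the two degree $\pm2$ pieces consist respectively of multiplication by quadratic polynomials and of second-order constant-coefficient operators. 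Since $\Li$ acts by symplectic transformations, forming the centraliser amounts to taking $\Li$-invariants, an operation that respects this gradation. Hence $\fl^{(\beta)}_{n,m}=\End_{\Li}(N)$, and $\fv^{+},\fv^{-}$ are the $\Li$-invariants in the two symmetric squares, with $\fv^{+}$ the multiplication part $(\odot^{2}N)^{\Li}$.

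To compute the full centraliser I would recast $W$ as a tensor product dictated by Lemma~\ref{modstr}. For $\beta=1,4$ the tautological module $V$ carries a $\Li$-invariant bilinear form (symmetric for $\beta=1$, symplectic for $\beta=4$), so $V\cong V^{\ast}$ and $W\cong V\otimes P$, where $P$ is the multiplicity space with the form obtained by transporting $\omega$ through $V\cong V^{\ast}$. As $\omega$ is antisymmetric, the form on $P$ has the opposite parity to that on $V$: thus $P$ is symplectic of dimension $2(n-m)$ for $\beta=1$ and orthogonal of dimension $4(n-m)$ for $\beta=4$. For $\beta=2$ there is no invariant form on $V$, and instead $W\cong U\oplus U^{\ast}$ with $U=V\otimes P$ and $P=\mC^{2(n-m)}$. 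In all cases this exhibits $(\Li,\Gamma^{(\beta)}_{n,m})$ as the dual pair $(\mathfrak{o}(n),\mathfrak{sp}(P))$, $(\mathfrak{gl}(n),\mathfrak{gl}(P))$ or $(\mathfrak{sp}(2n),\mathfrak{o}(P))$, and by the reductive dual pair theorem \cite{Howe} the centraliser $\Gamma^{(\beta)}_{n,m}$ is precisely the isometry algebra of $P$, i.e. $\mathfrak{sp}(2n-2m;\mC)$, $\mathfrak{gl}(2n-2m;\mC)$ and $\mathfrak{so}(4n-4m;\mC)$ respectively. Equivalently one checks this by hand from $\odot^{2}(V\otimes P)=(\odot^{2}V\otimes\odot^{2}P)\oplus(\wedge^{2}V\otimes\wedge^{2}P)$ together with $\dim(\odot^{2}V)^{\Li}$ and $\dim(\wedge^{2}V)^{\Li}$, and, for $\beta=2$, the type~II analogue using $(V\otimes V^{\ast})^{\Li}=\mC$.

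The degree-zero component is then immediate from Schur's lemma applied to Lemma~\ref{modstr}: the isotypic decomposition of $N$ into copies of the irreducible $V$ gives $\End_{\Li}(N)=\mathfrak{gl}(n-m;\mC)$ for $\beta=1$, $\mathfrak{gl}(2n-2m;\mC)$ for $\beta=4$, and $\mathfrak{gl}(n-m;\mC)\oplus\mathfrak{gl}(n-m;\mC)$ for $\beta=2$ (the two blocks $V^{\oplus(n-m)}$ and $(V^{\ast})^{\oplus(n-m)}$ being non-isomorphic). For $\fv^{+}=(\odot^{2}N)^{\Li}$, the left $\rmU^{(\beta)}(n)$-invariant quadratic polynomials on $\Gl^{(\beta)}(n,m)$ are, by the first fundamental theorem of invariant theory (equivalently by the trivial-plus-kernel decomposition of Lemma~\ref{decompcond}), exactly the linear span of the entries of $X^{\dagger}X$; this is the asserted description of $\fv^{+}$, and these entries are precisely the coordinates on $\Herm^{(\beta)}(n-m)$, so $\dim_{\mC}\fv^{+}=\dim_{\mR}\Herm^{(\beta)}(n-m)$. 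Finally, $X\mapsto X^{\dagger}X$ is a submersion onto a neighbourhood of $\eins$ in $\Herm^{(\beta)}(n-m)$ with fibre $\St^{(\beta)}(n,m)$ over $\eins$, whence $\dim_{\mC}\fv^{+}=\dim_{\mR}\Gl^{(\beta)}(n,m)-\dim_{\mR}\St^{(\beta)}(n,m)$, the claimed identity.

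The step demanding the most care is the identification of the \emph{full} centraliser, namely the vanishing of the invariants of the ``wrong'' parity: $(\wedge^{2}V)^{\Li}=0$ for $\beta=1$ and $(\odot^{2}V)^{\Li}=0$ for $\beta=4$. This holds because $V$ is irreducible and $\Li$ is simple, so that $\wedge^{2}V$ (resp. $\odot^{2}V$) is the adjoint module and carries no invariant. This is exactly where $(\beta,n,m)=(1,2,0)$ fails: for $n=2$ the algebra $\mathfrak{so}(2;\mC)$ is abelian, $V=\mC^{2}$ splits into two characters, $\wedge^{2}V$ becomes the trivial module, and the extra invariant enlarges $\Gamma^{(1)}_{2,0}$ beyond $\mathfrak{sp}(4;\mC)$. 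Away from this low-rank degeneracy the argument is uniform in $\beta$ and $(n,m)$.
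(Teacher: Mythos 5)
Your proposal is correct and takes essentially the same route as the paper: the paper's proof is exactly the observation that the $\Li$-centralizer is obtained by taking $\Li$-invariants in $\odot^2 \Gl^{(\beta)}(n,m)_{\mC}\;\oplus\;\left(\Gl^{(\beta)}(n,m)_{\mC}\right)^\ast\otimes\Gl^{(\beta)}(n,m)_{\mC}\;\oplus\;\odot^2\left(\Gl^{(\beta)}(n,m)_{\mC}\right)^\ast$ graded as in Eq.~\eqref{paradecomp} and then invoking Lemma~\ref{modstr}, and your dual-pair bookkeeping $W\cong V\otimes P$, Schur's lemma for the degree-zero piece, and the invariant-theoretic identification of $\fv^+$ are precisely the details the paper compresses into ``the results then follow immediately''. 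One remark: the degeneracy you identify (for $\beta=1$, $n=2$ the module $V$ splits into two characters of the abelian algebra $\mathfrak{so}(2;\mC)$) equally affects $(\beta,n,m)=(1,2,1)$, where $\End_{\Li}(V)$ is two-dimensional and the claim $\fl^{(1)}_{2,1}=\mathfrak{gl}(1;\mC)$ also fails, a case which neither your proof nor the paper's statement excludes.
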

\begin{proof}
Computing the $\Li$-invariants in $\mathfrak{sp}(2\beta n(n-m);\mC)$, corresponds to computing the invariants in 
\begin{equation}\odot^2\Gl^{(\beta)}(n,m)_{\mC} \;\;\;\bigoplus\;\;\;\left(\Gl^{(\beta)}(n,m)_{\mC}\right)^\ast\otimes \Gl^{(\beta)}(n,m)_{\mC}\;\;\;\bigoplus\;\;\;\odot^2 \left(\Gl^{(\beta)}(n,m)_{\mC}\right)^\ast,\end{equation}
where the decomposition corresponds to the $\mZ$-gradation in Eq. \eqref{paradecomp}. The results then follow immediately from Lemma \ref{modstr}.
\end{proof}

Note that we have logical embeddings $\Lii_{n-m}\hookrightarrow \fl^{(\beta)}_{n,m}\hookrightarrow \Gamma^{(\beta)}_{n,m}$, since in particular the right action of $\rmU^{(\beta)}(n-m)$ on $\Gl^{(\beta)}(n,m)$ commutes with the left action of $\rmU^{(\beta)}(n)$.

%Denote the Weyl algebra of differential operators with polynomial coefficients on the real manifold $\Gl^{(\beta)}(n,m)$ by $W^{(\beta)}_{n,m}$. Theorem 7 in \cite{Howe} then implies the following result.
%\begin{corollary}
%\label{resultHowe}
%The $\rmU^{(\beta)}(n)$-invariant (for the left action) operators in $W^{(\beta)}_{n,m}$ are given by the realization of the universal enveloping algebra $U(\Gamma_{n,m}^{(\beta)})$ in $W^{(\beta)}_{n,m}$.\footnote{Also this is not quite true for $\beta=1$, it gives the $O(n)$-invariants, rather than the $SO(n)$-invariants. Again this only matters for $m=0$}
%\end{corollary}

In this setup we can derive an alternative interpretation for the Pizzetti formula in Theorem \ref{PizzInv}. As noted in Theorem \ref{HoDu}, the elements of $\fv^-$ are the quadratic polynomials on $\Gl^{(\beta)}(n,m)$ which define $\St^{(\beta)}(n,m)$. Hence the pull-back $\iota^\sharp$ of the embedding $\iota:\St^{(\beta)}(n,m)\hookrightarrow \Gl^{(\beta)}(n,m)$ evaluated on $\fv^-$ gives precisely the map $X^\dagger X\to \eins$.

\begin{theorem}
\label{interpHowe}
(i) The unique (up to multiplicative constant) linear functional $T$ on $\mC[\Gl^{(\beta)}(n,m)_{\mC}]$ satisfying
\begin{itemize}
\item $T$ is left $\rmU^{(\beta)}(n)$-invariant
\item $T(A\cdot)=\iota^\sharp(A)T(\cdot)$ for any $A\in \fv^-$
\end{itemize}
is given by $\int_{\St^{(\beta)}(n,m)}$.

\noindent
(ii) The unique (up to multiplicative constant) $\rmU^{(\beta)}(n)$-invariant linear functional on $\left(\mC\left[\Gl^{(\beta)}(n,m)\right]\right)^{\fv^+}$ is given by $\int_{\St^{(\beta)}(n,m)}$.
\end{theorem}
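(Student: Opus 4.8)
The plan is to treat both parts as uniqueness statements, since in each case $\int_{\St^{(\beta)}(n,m)}$ is at once seen to be an admissible functional: it is left $\rmU^{(\beta)}(n)$-invariant because the Haar measure is, and $\int_{\St^{(\beta)}(n,m)}1=1\neq0$, so it is nonzero. I recall from the discussion preceding the theorem that $\fv^-$ is spanned by the operators of multiplication by the entries of $X^\dagger X$, while $\fv^+$ is spanned by the dual invariant second-order constant-coefficient operators; and that ``invariant'' always refers to the complexified action of $\Li$. For $\beta=1$, $m=0$ I read $\rmU^{(1)}(n)$ as $O(n)$, in keeping with the footnotes, so that the first fundamental theorem used below holds without the spurious invariant $\det X$.

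For part (i), let $T$ satisfy the two conditions. Since $\iota^\sharp$ sends multiplication by $(X^\dagger X)_{ij}$ to the constant $\delta_{ij}$, the second condition reads $T\bigl((X^\dagger X)_{ij}\,f\bigr)=\delta_{ij}\,T(f)$ for all $i,j$ and all $f$. Hence $T$ annihilates the ideal generated by the elements $(X^\dagger X)_{ij}-\delta_{ij}$, which is the (radical) vanishing ideal of the smooth affine variety $\St^{(\beta)}(n,m)$. Thus $T$ descends to a left $\rmU^{(\beta)}(n)$-invariant functional on the coordinate ring $\mC[\St^{(\beta)}(n,m)]$. Writing $\St^{(\beta)}(n,m)=\rmU^{(\beta)}(n)/\rmU^{(\beta)}(m)$ and decomposing this ring into $\rmU^{(\beta)}(n)$-isotypic components, Schur's lemma forces every invariant functional to vanish off the trivial component, whose multiplicity is $\dim(\mathrm{triv})^{\rmU^{(\beta)}(m)}=1$ by Frobenius reciprocity. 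So the space of such functionals is one-dimensional and $T\propto\int_{\St^{(\beta)}(n,m)}$.

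For part (ii), write $\cH:=\bigl(\mC[\Gl^{(\beta)}(n,m)_\mC]\bigr)^{\fv^+}$. Since $\cH$ is a graded, hence semisimple, $\rmU^{(\beta)}(n)$-module, complete reducibility shows that the invariant functionals on $\cH$ form a space of dimension equal to the multiplicity of the trivial module in $\cH$, namely $\dim\cH^\Li$; so it suffices to prove $\dim\cH^\Li=1$. Here $\cH^\Li=\bigl(\mC[\Gl^{(\beta)}(n,m)_\mC]^\Li\bigr)^{\fv^+}$ is the space of invariant harmonics. I would use the Fischer inner product $\langle\cdot,\cdot\rangle_F$ on polynomials, which is positive-definite Hermitian and under which multiplication by $p$ is adjoint to the constant-coefficient operator $\overline{p}(\partial)$. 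By Theorem \ref{HoDu}, together with the equality $\dim\fv^+=\dim\fv^-$, the space $\fv^+$ is exactly the Fischer-adjoint of $\fv^-$ (the compactness of the real form makes $\rmU^{(\beta)}(n)$ act unitarily, so its centralizer $\Gamma^{(\beta)}_{n,m}$ is adjoint-stable). Consequently, in each degree, $\cH=(\mathrm{im}\,\fv^-)^\perp$. On the other hand, by the first fundamental theorem of invariant theory $\mC[\Gl^{(\beta)}(n,m)_\mC]^\Li$ is generated by the quadrics $(X^\dagger X)_{ij}$, so every invariant of positive degree lies in $\fv^-\!\cdot\mC[\Gl^{(\beta)}(n,m)_\mC]=\mathrm{im}\,\fv^-$. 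A positive-degree invariant harmonic would then lie in $\mathrm{im}\,\fv^-\cap(\mathrm{im}\,\fv^-)^\perp=0$; hence $\cH^\Li=\mC\cdot 1$ is one-dimensional, and part (ii) follows.

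The routine part is (i), which is in essence the uniqueness-of-Haar argument already underlying Theorem \ref{PizzInv}. The main obstacle is (ii), and precisely the identification of $\fv^+$ as the Fischer-adjoint of multiplication by the entries of $X^\dagger X$: this is where Theorem \ref{HoDu} and the unitarity of $\rmU^{(\beta)}(n)$ for the Fischer product are needed. One must also keep the exclusion $(\beta,n,m)=(1,2,0)$ and the $O(n)$-versus-$\SO(n)$ convention in mind, since it is exactly the clean form of the first fundamental theorem---with no extra $\det X$---that ensures every positive-degree invariant is captured by $\mathrm{im}\,\fv^-$.
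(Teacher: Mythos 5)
Your proposal is correct. Part (i) follows essentially the paper's own route, just with the details filled in: the paper likewise observes that the $\fv^-$-condition forces the functional to depend only on the restriction of the polynomial to $\St^{(\beta)}(n,m)$ and then invokes uniqueness of the invariant integration there; your explicit treatment of the radicality of the ideal generated by the entries of $X^\dagger X-\eins$ and the Frobenius-reciprocity count of invariant functionals on $\mC[\St^{(\beta)}(n,m)]$ are the substance behind the paper's two sentences. Part (ii) is where you genuinely diverge. The paper proves (ii) by reduction to (i): it cites Howe's theorem that $\mC[\Gl^{(\beta)}(n,m)_\mC]$ decomposes into simple highest-weight $\Gamma^{(\beta)}_{n,m}$-modules, deduces the separation-of-variables identity $\mC[\Gl^{(\beta)}(n,m)_\mC]=U(\fv^-)\bigl(\mC[\Gl^{(\beta)}(n,m)_\mC]\bigr)^{\fv^+}$, and concludes that the functionals of (i) and the invariant functionals on the harmonics are in bijection. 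You instead prove the uniqueness in (ii) directly and independently of (i): complete reducibility reduces it to showing $\dim\cH^{\Li}=1$, which you get by combining the first fundamental theorem (every positive-degree invariant lies in $\mathrm{im}\,\fv^-$) with Fischer orthogonality ($\cH=(\mathrm{im}\,\fv^-)^\perp$ in each degree, once $\fv^+$ is identified as the Fischer adjoint of $\fv^-$ via the unitarity of the compact group action and the dimension count of Theorem \ref{HoDu}). What each approach buys: the paper's route exhibits the Howe-duality decomposition itself, which is the structural point of the section, and makes the equivalence of conditions (i) and (ii) transparent; your route is more elementary and self-contained (it needs only the FFT and a positive-definite inner product, not Howe's Theorem 8) and yields the sharper by-product that the only invariant harmonics are the constants. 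Your attention to the $O(n)$-versus-$\SO(n)$ convention is exactly the right point for the FFT step: for $m>0$ one has $n-m<n$, so no bracket invariants occur and the quadrics do generate all invariants, consistent with the paper's footnotes.
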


\begin{proof}
We prove part $(i)$. By theorem \ref{HoDu}, the second condition implies precisely that the functional on polynomials on $\Gl^{(\beta)}(n,m)$, depends only on the restriction of the polynomial to $\St^{(\beta)}(n,m)$. The unique functional which satisfies both conditions corresponds to the $\rmU^{(\beta)}(n)$-invariant integration over $\St^{(\beta)}(n,m)$.

For part $(ii)$ we take into account that $\mC\left[\Gl^{(\beta)}(n,m)_{\mC}\right]$ as a $\Gamma^{(\beta)}_{n,m}$-module decomposes into a direct sum of simple highest weight modules, see e.g. Theorem 8 in \cite{Howe}. In combination with the $\mZ$-gradation of $\Gamma^{(\beta)}_{n,m}$, this implies in particular that
\begin{equation}
\mC\left[\Gl^{(\beta)}(n,m)_{\mC}\right]=U(\fv^-)\left(\mC\left[\Gl^{(\beta)}(n,m)_{\mC}\right]\right)^{\fv^+}.
\end{equation} 
This consideration and the fact that the elements in the realization of $u(\fv^-)$ are by definition $\rmU^{(\beta)}$-invariant immediately yields a one-to-one correspondence between the functionals satisfying the conditions in $(i)$ and those in $(ii)$, concluding the proof.
\end{proof}

\subsection{Example 1, $m=n-1$: Harmonic analysis}

If we set $m=n-1$, then we get $\Gl^{(\beta)}(n,n-1)\cong \mR^{\beta n}$ and $\St^{(\beta)}(n,n-1)\cong \mS^{\beta n-1}$ canonically embedded.

Even though the Howe dual pairs are quire different,
\begin{equation}
\left\{\begin{array}{cl} \mathfrak{so}(n;\mC)\times \mathfrak{sp}(2;\mC)&\subset\mathfrak{sp}(2n;\mC) \\  \mathfrak{gl}(n;\mC)\times \mathfrak{gl}(2;\mC)&\subset\mathfrak{sp}(4n;\mC) \\  \mathfrak{sp}(2n;\mC)\times \mathfrak{so}(4;\mC)&\subset\mathfrak{sp}(8n;\mC),\end{array}\right.
\end{equation}
the point is that in each case, $\fv^+$ and $\fv^-$ are one-dimensional and spanned by respectively the Laplacian and norm squared on $\mR^{\beta n}$. Therefore, these three different types of Howe dualities exhibit the same features as in Theorem \ref{interpHowe}.

\subsection{Example 2, $\beta=1$: Harmonic analysis in multiple sets of variables}

Now we consider the case $\St^{(1)}(n,n-k)$ for $1\le k \le n$, then $\Gl^{(1)}(n,n-k)\cong \mR^{n\times k}$. We define $k$ $n$-dimensional variables by putting $X=(u_1,\cdots,u_k)$. By Theorem \ref{HoDu} we have
\begin{equation}
\fv^-=\Span\{\langle \nabla_{u_i},\nabla_{u_j}\rangle\,|\, 1\le i,j\le k\}\quad\mbox{and}\quad \fv^+=\Span\{\langle {u_i},{u_j}\rangle\,|\, 1\le i,j\le k\}.
\end{equation}

The null-solutions of $\fv^-$ are known as harmonic functions. Theorem \ref{interpHowe} therefore interprets the formulae in Theorem \ref{PizzInv} as either the unique $\SO(n)$-invariant functional $T$ on polynomials on $\mR^{n\times k}$ which satisfies $T(\langle u_i,u_j\rangle\cdot)=\delta_{ij}T(\cdot)$; or as the unique $\SO(n)$-invariant functional on the space of harmonic functions on $\mR^{k\times n}$.

\section{Pizzetti-type formulae for $\St^{(\beta)}(n,n-2)$}
\label{sec2}

The main result of this section calculates the symbolical expression in Theorem \ref{PizzInv} explicitly and rigorously for $m=n-2$, confirming Eqs. \eqref{Pizzetti-beta2-2bb} and \eqref{Pizzetti-beta4-2b}. This is presented in the following theorem.
\begin{theorem}
\label{thmsum2}
For any $f\in \cA^{(\beta)}_{n,n-2}$ we have
\begin{equation}
\label{thmsum2eq}
\int_{\St^{(\beta)}(n,n-2)}f=\sum_{j=0}^{\infty}\frac{\Gamma(\beta n/2)}{4^j \Gamma(j+\beta n/2)}\sum_{l=0}^{\lfloor j/2\rfloor}\frac{\Gamma[\beta(n-1)/2]}{\Gamma[l+\beta (n-1)/2]}\left(\frac{I_1^{j-2l}}{(j-2l)!}\frac{I_2^l}{l!}f\right)(0).
\end{equation}
\end{theorem}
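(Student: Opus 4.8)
My plan is not to evaluate the coset integral \eqref{groupint} directly, but to pin down the power series $\Psi^{(\beta)}_{n,n-2}$ abstractly. By Theorem \ref{PizzInv} the left side of \eqref{thmsum2eq} equals $(\Psi^{(\beta)}_{n,n-2}(I_1,I_2)f)(0)$, where $I_1=I_1(\imath\nabla_X)=\tr(\nabla_X^\dagger\nabla_X)$ and $I_2=I_2(\imath\nabla_X)=\det(\nabla_X^\dagger\nabla_X)$ (the Pfaffian operator for $\beta=4$), the factors of $\imath$ cancelling. Thus it suffices to show that $\Psi^{(\beta)}_{n,n-2}$ equals the explicit series $\phi$ whose coefficient of $x_1^{j-2l}x_2^{l}$ is $\tfrac{\Gamma(\beta n/2)}{4^{j}\Gamma(j+\beta n/2)}\tfrac{\Gamma(\beta(n-1)/2)}{\Gamma(l+\beta(n-1)/2)}\tfrac{1}{(j-2l)!\,l!}$. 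By Proposition \ref{onlyu} this reduces to the single condition \eqref{eqonlyu}, namely $(\phi(I_1,I_2)\,u^2 f)(0)=(\phi(I_1,I_2)f)(0)$ for all polynomials $f$, with $u^2=(X^\dagger X)_{11}$.

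The next step is to convert \eqref{eqonlyu} into a closed polynomial identity. Write $X=(u,v)$ for the two columns of $\Gl^{(\beta)}(n,n-2)$ and let $P$ be the symbol of the constant-coefficient operator $\phi(I_1,I_2)$, i.e. the formal power series with $\phi(I_1,I_2)=P(\nabla_X)$; by multiplicativity of symbols on commuting operators $P(X)=\phi(I_1(X),I_2(X))$, which for $\beta=1$ reads $I_1(X)=u^2+v^2$, $I_2(X)=u^2v^2-\langle u,v\rangle^2$ in the notation of \eqref{harmdiff2}. Under the apolar pairing $\langle p,q\rangle=(p(\nabla_X)q)(0)$ one has $(\phi(I_1,I_2)f)(0)=\langle P,f\rangle$, and multiplication by $u^2=\langle u,u\rangle$ is adjoint to the flat Laplacian $\Delta_u$ in the first column. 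Hence \eqref{eqonlyu} becomes $\langle\Delta_uP,f\rangle=\langle P,f\rangle$ for all $f$, i.e. it is equivalent to the formal-power-series identity $\Delta_uP=P$.

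I would then verify $\Delta_uP=P$ by a direct Leibniz computation. Writing $s=I_1(X)$, $t=I_2(X)$ and $P=\sum_{p,l\ge0}c_{p,l}\,s^{p}t^{l}$ (the coefficient of $s^pt^l$ being the term of \eqref{thmsum2eq} with $j=p+2l$), the elementary identities $\Delta_u s=2\beta n$, $\Delta_u t=2\beta(n-1)\,v^2$, $\|\nabla_u s\|^2=4u^2$, $(\nabla_u s)\cdot(\nabla_u t)=4t$ and $\|\nabla_u t\|^2=4v^2t$, together with the substitution $u^2=s-v^2$, split $\Delta_u(s^{p}t^{l})$ into a part lying in $\mR[s,t]$ and a remainder proportional to $v^2$. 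Matching the $\mR[s,t]$-part with $P$ gives the recursion $2(p+1)(\beta n+4l+2p)\,c_{p+1,l}=c_{p,l}$, while the requirement that the $v^2$-remainder cancel after summation gives $2(p+2)(p+1)\,c_{p+2,l}=(l+1)(\beta(n-1)+2l)\,c_{p,l+1}$. Both are one-line identities for the $\Gamma$-quotients $c_{p,l}$ — the second reducing to $(\beta(n-1)+2l)/(l+\beta(n-1)/2)=2$ — which the coefficients in \eqref{thmsum2eq} satisfy; uniqueness in Proposition \ref{onlyu} then yields $\phi=\Psi^{(\beta)}_{n,n-2}$.

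The main obstacle is the $v^2$-remainder: since no nonzero multiple of $v^2$ lies in the subalgebra generated by $s$ and $t$, the identity $\Delta_uP=P$ can hold only through the cancellation of these terms across the infinite sum, and isolating that cancellation is exactly where the precise ratio of $\Gamma$-factors is forced. The second delicate point is the passage to $\beta=2,4$: there $\Delta_u$ is the flat Laplacian over all $\beta n$ real coordinates of the first column, $I_2(X)$ is built from the Hermitian ($\beta=2$) resp. Hermitian self-dual ($\beta=4$) Gram matrix with the gradient normalisation \eqref{gradient-def}, and the computation goes through verbatim once one checks that the constant in $\Delta_u I_2(X)$ is $2\beta(n-1)$ rather than $2(\beta n-1)$ — precisely what collapses the three cases into the single formula \eqref{thmsum2eq}. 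Equivalently, the entire bookkeeping can be organised through the commutation relations of Lemma \ref{commrel}, which close $\{\Delta_u,u^2,\langle u,\nabla_v\rangle,\dots\}$ into a finite-dimensional Lie algebra.
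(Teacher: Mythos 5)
Your proposal is correct, and it verifies the key condition by a genuinely different mechanism than the paper, although both share the same skeleton: Theorem \ref{PizzInv} reduces the integral to an undetermined power series in $I_1,I_2$, and Proposition \ref{onlyu} reduces everything to the single condition \eqref{eqonlyu}. The paper checks \eqref{eqonlyu} on the \emph{operator} side (Appendix \ref{secprop}): the commutator lemmas \ref{Bu1} and \ref{AB1}, built on Lemma \ref{commrel}, are applied to a homogeneous polynomial $f$ with $\mE_u f=2af$, $\mE_v f=2bf$, and three resulting sums are recombined, including a delicate endpoint argument about upper summation limits $\lfloor (a+b-1)/2\rfloor$ versus $\lfloor (a+b)/2\rfloor$. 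You instead transport \eqref{eqonlyu} through the apolar (Fischer) pairing to the \emph{symbol} side, obtaining the single generating-function identity $\Delta_u P=P$ for $P=\phi(I_1(X),I_2(X))$, and check it by the Leibniz rule; your gradient identities $\Delta_u t=2\beta(n-1)v^2$, $\nabla_u s\cdot\nabla_u t=4t$, $\left\|\nabla_u t\right\|^2=4v^2t$ hold precisely because of the Clifford-type relations \eqref{Clifford} that the paper also relies on, and your two coefficient recursions $2(p+1)(2p+\beta n+4l)c_{p+1,l}=c_{p,l}$ and $2(p+2)(p+1)c_{p+2,l}=(l+1)(\beta(n-1)+2l)c_{p,l+1}$ are indeed satisfied by the $\Gamma$-quotients in \eqref{thmsum2eq}, the $v^2$-cancellation being a pairwise cancellation between the $(p+2,l)$ and $(p,l+1)$ contributions. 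Note that only sufficiency is needed there: the splitting of $\Delta_u(s^pt^l)$ into an $\mR[s,t]$-part plus a $v^2$-multiple need not be canonical, since if both recursions hold then $\Delta_uP=P$ follows regardless. What each approach buys: yours eliminates the commutator algebra and the degree bookkeeping in favour of two one-line $\Gamma$-identities, and it is conceptually transparent, as $\Delta_uP=P$ is exactly the Fourier-dual statement of the constraint $u^2=1$ cutting out the Stiefel manifold, in the spirit of section \ref{sketch}; the paper's operator formulation, on the other hand, is what feeds directly into the abstract invariant-integration statement of Theorem \ref{absthm} for a general Clifford family $\{J^{(j)}\}$ and stays uniformly inside the framework used throughout the paper. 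One caveat shared by both proofs: the uniqueness in Proposition \ref{onlyu} is only up to normalization, so one should record that your series has constant term $c_{0,0}=1$, matching $\int_{\St^{(\beta)}(n,n-2)}1=1$.
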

From the proof of Theorem~\ref{PizzInv} we know that it suffices to prove the result for polynomials since the extension to $\FS$ is always guaranteed. In the following we therefore only consider polynomials.

We can apply proposition \ref{onlyu} to reduce the proof to checking one condition for the expression on the right-hand side of Eq. \eqref{thmsum2eq}. The main technical calculation for this is performed in appendix~\ref{secprop}, leading to proposition \ref{prop}. All that remains to be done to prove theorem \ref{thmsum2} is therefore showing that the conditions to apply proposition \ref{prop} are satisfied.

\subsection{Pizzetti formula for $\St^{(1)}(n,n-2)$}

We denote the indeterminate matrix in $\Gl^{(1)}(n,n-2)=\mR^{n\times 2}$ by $X$ and introduce column vectors $u$ and $v$ as $X=(u,v)$. By Eq. \eqref{invariantsI} we have
\begin{equation}\label{defA}
I_1=\tr(\nabla_X^T\nabla_X)=\Delta_u+\Delta_v
\end{equation}
and 
\begin{equation}
\label{defB}I_2=\det(\nabla_X^T\nabla_X)=\frac{1}{2}\left(\left(\tr(\nabla_X^T\nabla_X)\right)^2-\tr\left(\nabla_X^T\nabla_X\right)^2\right)=\Delta_u\Delta_v-\langle \nabla_u,\nabla_v\rangle^2.
\end{equation}
We can therefore apply proposition \ref{prop} by choosing $m=n$, $k=1$ and $J^{(0)}=\eins_n$. In combination with proposition \ref{onlyu}, this yields the following theorem.

\begin{theorem}
\label{resStiefel}
\label{Pizzettinew}
For $f$ a polynomial on $\Gl^{(1)}(n,n-2)$ we have
\begin{equation}\label{piz-int-beta-1}
\int_{\St^{(1)}(n,n-2)}f=\sum_{j=0}^\infty\frac{\Gamma(n/2)}{4^j\Gamma(j+n/2)}\sum_{l=0}^{\lfloor j/2\rfloor}\frac{\Gamma\left[(n-1)/2\right]}{\Gamma\left[l+(n-1)/2\right]}\left(\frac{I_1^{j-2l}}{(j-2l)!}\frac{I_2^l}{l!}f\right)(0).
\end{equation}
\end{theorem}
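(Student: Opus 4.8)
The plan is to determine the unknown series $\Psi^{(1)}_{n,n-2}$ of Theorem \ref{PizzInv} by matching it against the explicit candidate
\[
\phi(x_1,x_2):=\sum_{j=0}^\infty\frac{\Gamma(n/2)}{4^j\Gamma(j+n/2)}\sum_{l=0}^{\lfloor j/2\rfloor}\frac{\Gamma[(n-1)/2]}{\Gamma[l+(n-1)/2]}\frac{x_1^{j-2l}}{(j-2l)!}\frac{x_2^l}{l!}.
\]
By Theorem \ref{PizzInv} it suffices to treat polynomial $f$, and by the uniqueness statement in Proposition \ref{onlyu} the entire assertion collapses to the single functional identity \eqref{eqonlyu}, that is $(\phi(I_1,I_2)\,u^2f)(0)=(\phi(I_1,I_2)f)(0)$ for all polynomials $f$, where $u^2=(X^TX)_{11}=\langle u,u\rangle$ with $X=(u,v)$. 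Everything else is the verification of this one equation.

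First I would recast \eqref{eqonlyu} as an operator identity evaluated at the origin. Since $I_1,I_2$ are constant-coefficient operators and $u^2$ vanishes at $0$, one has $(D(u^2f))(0)=([D,u^2]f)(0)$ for any such $D$; taking $D=\phi(I_1,I_2)$ converts \eqref{eqonlyu} into
\[
\bigl([\phi(I_1,I_2),u^2]-\phi(I_1,I_2)\bigr)f(0)=0\qquad\text{for all }f.
\]
Using the explicit forms \eqref{defA}, \eqref{defB} together with Lemma \ref{commrel}, I would then compute the two basic brackets
\[
[I_1,u^2]=4\mE_u+2n,\qquad [I_2,u^2]=4\Delta_v\mE_u+(2n-2)\Delta_v-4\langle u,\nabla_v\rangle\langle\nabla_u,\nabla_v\rangle.
\]
All operators appearing here lie in the finite-dimensional Lie algebra of quadratic operators, which for this Stiefel manifold is the Howe dual $\Gamma^{(1)}_{n,n-2}\cong\mathfrak{sp}(4;\mC)$ of Theorem \ref{HoDu}. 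Consequently iterated brackets of $I_1,I_2$ with $u^2$ never leave a controlled algebra, and in each fixed homogeneous degree only finitely many terms contribute.

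The core of the argument is to expand $[\phi(I_1,I_2),u^2]$ as a power series, commute the Euler operators $\mE_u,\mE_v$ to the right (where on a homogeneous $f$ they return its degree), and recollect powers of $I_1,I_2$. Since $[I_j,u^2]$ does not commute with $I_1,I_2$, this expansion is genuinely infinite and the displayed identity becomes a recursion linking the coefficient of $x_1^{j-2l}x_2^l$ in $\phi$ to coefficients of strictly lower total degree. I expect this infinite commutator bookkeeping to be the main obstacle: the crux is to confirm that the precise $\Gamma$-quotients defining $\phi$ solve that recursion. Rather than grind through it by hand, the efficient route is to invoke the general manifold result Proposition \ref{prop} of appendix \ref{secprop}, applied with $m=n$, $k=1$, and $J^{(0)}=\eins_n$; combined with Proposition \ref{onlyu} this establishes \eqref{piz-int-beta-1}. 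A self-contained variant would instead decompose $\mC[\mR^{n\times 2}]$ under the $\SO(n)\times\mathfrak{sp}(4;\mC)$ Howe duality and evaluate the functional on each irreducible $\mathfrak{sp}(4;\mC)$-submodule, but one is ultimately led to solve the very same recursion.
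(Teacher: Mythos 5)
Your proposal is correct and follows essentially the same route as the paper: after identifying $I_1=\Delta_u+\Delta_v$ and $I_2=\Delta_u\Delta_v-\langle\nabla_u,\nabla_v\rangle^2$ via \eqref{defA}--\eqref{defB}, you reduce the theorem through Theorem \ref{PizzInv} and Proposition \ref{onlyu} to the single condition \eqref{eqonlyu} and then discharge it by invoking Proposition \ref{prop} with $m=n$, $k=1$, $J^{(0)}=\eins_n$, which is exactly the paper's proof. The intermediate discussion of the commutator recursion is a correct sketch of what Proposition \ref{prop} actually verifies (your brackets match Lemma \ref{commrel} and the $k=1$ case of Lemma \ref{Bu1}), but it is not needed once that proposition is cited.
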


We emphasize that for the case $n=2$ the integral~\eqref{piz-int-beta-1} corresponds to the group-invariant integration over orthogonal group ${\rm O}(2)$ and not over the special orthogonal group $\SO(2)\cong\mathbb{S}^1$. To have the integral over latter case one has to use the Pizzetti formula
\begin{equation}\label{piz-int-beta-1-special}
\int_{\SO(2)}f=\sum_{j=0}^\infty\frac{1}{4^j j!}\sum_{l=0}^{\lfloor j/2\rfloor}\frac{\sqrt{\pi}}{\Gamma\left[l+1/2\right]}\left(\frac{I_1^{j-2l}}{(j-2l)!}\frac{I_2^l}{l!}(1+u_1v_2-u_2v_1)f\right)(0).
\end{equation}
The additional factor restricts the space ${\rm O}(2)$ onto those orthogonal matrices $X\in{\rm O}(2)$ with $\det X=1$.

\subsection{Pizzetti formula for $\St^{(2)}(n,n-2)$}

We denote the indeterminate matrix by $X\in \mC^{n \times 2}\cong \Gl^{(2)}(n,n-2)$. We define the column vectors $\widetilde{u},\widetilde{v}\in\mC^{n}$ of the matrix $X=(\widetilde{u},\widetilde{v})$ and furthermore $\widetilde{u}=u^{(1)}+\imath u^{(2)}$ and $\widetilde{v}=v^{(1)}+\imath v^{(2)}$ with $u^{(1)},u^{(2)},v^{(1)},v^{(2)}\in\mR^{n}$. Finally we define $u,v\in\mR^{2n}$ by $u=\left(\begin{array}{c}u^{(1)}\\ u^{(2)}\end{array}\right)$ and $v=\left(\begin{array}{c}v^{(1)}\\ v^{(2)}\end{array}\right)$.
A direct calculation shows that Eq.~\eqref{invariantsI} yields
\begin{eqnarray}
I_1&=&\tr(\nabla_X^\dagger\nabla_X)=\Delta_{u}+\Delta_v\quad{\rm and}\nonumber\\
I_2&=&\frac{1}{2}\left(\left(\tr(\nabla_X^\dagger\nabla_X)\right)^2-\tr\left(\nabla_X^\dagger\nabla_X\right)^2\right)=\Delta_u\Delta_v-\langle \nabla_u,\nabla_v\rangle^2-\langle \nabla_u,J\nabla_v\rangle^2 \quad\mbox{with }\nonumber\\
 J&=&\left(\begin{array}{cc}0 & \eins_n\\ -\eins_n & 0\end{array}\right).\label{jdef-beta2}
\end{eqnarray}
Then proposition \ref{prop}, for $m=n$, $k=2$, $J^{(0)}=\eins_{2n}$ and $J^{(1)}=J$ and proposition \ref{onlyu} yield the following theorem.

\begin{theorem}
\label{stiefelbeta2}
For $f$ a polynomial on $\Gl^{(2)}(n,n-2)$ we have
\begin{equation}\label{theorem-b2}
\int_{\St^{(2)}(n,n-2)}f=\sum_{j=0}^{\infty}\frac{\Gamma(n)}{4^j \Gamma(j+n)}\sum_{l=0}^{\lfloor j/2\rfloor}\frac{\Gamma(n-1)}{\Gamma(l+n-1)}\left(\frac{I_1^{j-2l}}{(j-2l)!}\frac{I_2^l}{l!}f\right)(0).
\end{equation}
\end{theorem}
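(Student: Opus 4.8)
The plan is to collapse the whole identity onto the single innocent-looking condition isolated in Proposition \ref{onlyu}, and then to settle that condition by the commutator bookkeeping of Proposition \ref{prop}. By Theorem \ref{PizzInv} it suffices to treat polynomials $f$, for which the integral equals $\bigl(\Psi^{(2)}_{n,n-2}(I_1,I_2)f\bigr)(0)$, where $\Psi^{(2)}_{n,n-2}\in\mathbb{R}[[x_1,x_2]]$ is the formal power series of \eqref{defphi}. Proposition \ref{onlyu} characterizes this series as the \emph{unique} $\phi$ satisfying \eqref{eqonlyu}. Hence, writing $\phi$ for the explicit double series on the right-hand side of \eqref{theorem-b2}, the entire theorem reduces to verifying that this particular $\phi$ obeys $\bigl(\phi(I_1,I_2)\,u^2 f\bigr)(0)=\bigl(\phi(I_1,I_2)f\bigr)(0)$ for every polynomial $f$, where $u^2=(X^\dagger X)_{11}$.

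Next I would make the real model explicit. With $X=(\widetilde u,\widetilde v)$ and the stacking $u,v\in\mathbb{R}^{2n}$ of real and imaginary parts as in the statement, the identification $(X^\dagger X)_{11}=\widetilde u^\dagger\widetilde u=\sum_{j=1}^{2n}u_j^2$ shows that the multiplier $u^2$ of Proposition \ref{onlyu} is exactly the Euclidean norm squared of the $2n$-dimensional vector $u$. The invariants are those of \eqref{jdef-beta2}, namely $I_1=\Delta_u+\Delta_v$ and $I_2=\Delta_u\Delta_v-\langle\nabla_u,\nabla_v\rangle^2-\langle\nabla_u,J\nabla_v\rangle^2$, with the complex structure $J$ satisfying $J^T=-J$ and $J^2=-\eins_{2n}$. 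This is precisely the normal form needed to invoke Proposition \ref{prop} with the data $m=n$, $k=2$, $J^{(0)}=\eins_{2n}$ and $J^{(1)}=J$.

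I would then check that these data meet the hypotheses of Proposition \ref{prop}: the matrices $\{\eins_{2n},J\}$ are orthogonal and (anti)symmetric with the Clifford-type relations making $\langle\nabla_u,\nabla_v\rangle^2+\langle\nabla_u,J\nabla_v\rangle^2$ the genuine off-diagonal contribution $|(\nabla_X^\dagger\nabla_X)_{12}|^2$ to $\det(\nabla_X^\dagger\nabla_X)$. Proposition \ref{prop} asserts exactly that the series with the Gamma-function coefficients appearing in \eqref{theorem-b2} satisfies \eqref{eqonlyu}; combined with the uniqueness in Proposition \ref{onlyu}, this identifies $\phi=\Psi^{(2)}_{n,n-2}$ and finishes the proof. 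Equivalently, this is the $\beta=2$ instance of the unified Theorem \ref{thmsum2}.

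The real obstacle lives entirely inside Proposition \ref{prop}, and is the reason the verification of \eqref{eqonlyu} is all but straightforward: one must commute the multiplier $u^2$ through arbitrary powers $I_1^{j-2l}I_2^l$ and show that the corrections telescope, forcing precisely the ratios $\Gamma(\beta n/2)/\Gamma(j+\beta n/2)$ and $\Gamma(\beta(n-1)/2)/\Gamma(l+\beta(n-1)/2)$. The new feature compared with the $\beta=1$ case is the extra term $\langle\nabla_u,J\nabla_v\rangle^2$ in $I_2$; handling it requires extending the commutation relations of Lemma \ref{commrel} by the $J$-twisted brackets (for instance $[\langle\nabla_u,J\nabla_v\rangle,u^2]=-2\langle Ju,\nabla_v\rangle$ and its partners) and tracking the resulting index combinatorics. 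Once these $J$-dependent relations are in place the bookkeeping parallels the orthogonal case, but arranging it so that the two-parameter sum closes is the technical heart of the argument.
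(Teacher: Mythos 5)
Your proposal is correct and follows essentially the same route as the paper: reduce to polynomials via Theorem \ref{PizzInv}, realize $I_1=\Delta_u+\Delta_v$ and $I_2=\Delta_u\Delta_v-\langle\nabla_u,\nabla_v\rangle^2-\langle\nabla_u,J\nabla_v\rangle^2$ in the real coordinates of \eqref{jdef-beta2}, and then combine the uniqueness statement of Proposition \ref{onlyu} with Proposition \ref{prop} applied to $m=n$, $k=2$, $J^{(0)}=\eins_{2n}$, $J^{(1)}=J$. Your closing remarks on the $J$-twisted commutators correctly locate where the technical work lies, and that work is exactly what the paper's appendix (Lemmas \ref{Bu1} and \ref{AB1}) already carries out in the general Clifford-algebra setting.
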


\subsection{Pizzetti formula for $\St^{(4)}(n,n-2)$} 

We denote the indeterminate matrix by $X\in \mH^{n\times 2}\hookrightarrow \mC^{2n \times 4}$. By Eq.~\eqref{invariantsI-beta4} we have
\begin{equation}
I_1:=\tr(\nabla_{X}^\dagger\nabla_X)\qquad \mbox{and }\qquad I_2:=\Pf(I\nabla_X^\dagger\nabla_X),
\end{equation}
with $I$ given in equation~\eqref{I-def}. We define the column vectors $\widetilde{u},\widetilde{v}\in\mH^{n}$ of the matrix $X=(\widetilde{u},\widetilde{v})$ and use Eq.~\eqref{quat-Pauli} to define
\begin{equation}
\widetilde{u}=u^{(1)}+\imath\tau_3 u^{(2)}+\imath\tau_2 u^{(3)}+\imath\tau_1  u^{(4)}\quad\mbox{and}\quad\widetilde{v}=v^{(1)}+\imath\tau_3 v^{(2)}+\imath\tau_2 v^{(3)}+\imath\tau_1 v^{(4)}
\end{equation}
 with $u^{(a)},v^{(a)}\in\mR^{n\times 1}$. Finally we define $u,v\in\mR^{4n\times 1}$ by $$u=\left(\begin{array}{c}u^{(1)}\\ u^{(2)}\\u^{(3)}\\u^{(4)}\end{array}\right)\quad \mbox{and}\quad v=\left(\begin{array}{c}v^{(1)}\\ v^{(2)}\\v^{(3)}\\ v^{(4)}\end{array}\right).$$
Then we can calculate
\begin{equation}
I_1=\Delta_{u}+\Delta_v\qquad\mbox{and }\quad I_2=\Delta_u\Delta_v-\langle \nabla_u,\nabla_v\rangle^2-\langle \nabla_u,J^{(1)}\nabla_v\rangle^2-\langle \nabla_u,J^{(2)}\nabla_v\rangle^2-\langle \nabla_u,J^{(3)}\nabla_v\rangle^2,
\end{equation}
with
\begin{equation}\label{Jquatdef}
J^{(1)}=\left(\begin{array}{cccc}0 & \eins_n&0&0\\ -\eins_n& 0&0&0\\0&0&0&-\eins_n\\0&0&\eins_n&0\end{array}\right),\,\,\, J^{(2)}=\left(\begin{array}{cccc}0 & 0&\eins_n&0\\ 0& 0&0&\eins_n\\-\eins_n&0&0&0\\0&-\eins_n&0&0\end{array}\right)\end{equation}
and
\begin{equation}\label{Jquatdef2}
J^{(3)}=J^{(1)}J^{(2)}=\left(\begin{array}{cccc}0 & 0&0&\eins_n\\ 0 & 0&-\eins_n&0\\0&\eins_n&0&0\\-\eins_n&0&0&0\end{array}\right).
\end{equation}
Thus proposition \ref{prop} for $m=n$, $k=2$, $J^{(0)}=\eins_{2n}$ and $J^{(1)}=J$ and proposition \ref{onlyu} yield the following result.

\begin{theorem}
\label{stiefelbeta4}
For $f$ a polynomial on $\Gl^{(4)}(n,n-2)$ we have
\begin{equation}\label{theorem-b4}
\int_{\St^{(4)}(n,n-2)}f=\sum_{j=0}^{\infty}\frac{\Gamma(2n)}{4^j \Gamma(j+2n)}\sum_{l=0}^{\lfloor j/2\rfloor}\frac{\Gamma(2n-2)}{\Gamma(l+2n-2)}\left(\frac{I_1^{j-2l}}{(j-2l)!}\frac{I_2^l}{l!}f\right)(0).
\end{equation}
\end{theorem}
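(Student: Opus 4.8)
The plan is to prove the identity not by evaluating the integral directly, but by matching the right-hand side with the universal power series attached to $\St^{(4)}(n,n-2)$ by Theorem~\ref{PizzInv}. That theorem already tells us that $\int_{\St^{(4)}(n,n-2)}f=(\Psi_{n,n-2}^{(4)}(I_1,I_2)f)(0)$ for a unique formal power series $\Psi_{n,n-2}^{(4)}$, so it suffices to show that the explicit double sum
\[
\phi(x_1,x_2)=\sum_{j=0}^{\infty}\frac{\Gamma(2n)}{4^j\Gamma(j+2n)}\sum_{l=0}^{\lfloor j/2\rfloor}\frac{\Gamma(2n-2)}{\Gamma(l+2n-2)}\frac{x_1^{j-2l}}{(j-2l)!}\frac{x_2^{l}}{l!}
\]
equals $\Psi_{n,n-2}^{(4)}$. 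For this I would appeal to Proposition~\ref{onlyu}: since $\Psi_{n,n-2}^{(4)}$ is characterised as the unique $\phi$ satisfying the single shift identity $(\phi(I_1,I_2)\,u^2 f)(0)=(\phi(I_1,I_2)f)(0)$ with $u^2=(X^\dagger X)_{11}$, the whole theorem collapses to verifying this one condition for the series above acting on polynomials $f$.

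First I would work in the real coordinates already fixed, writing $X=(\widetilde u,\widetilde v)$, expanding each quaternionic column into its four Pauli components, and assembling $u,v\in\mR^{4n}$. In these coordinates $I_1=\Delta_u+\Delta_v$ and $I_2=\Delta_u\Delta_v-\sum_{i=0}^{3}\langle\nabla_u,J^{(i)}\nabla_v\rangle^2$ with $J^{(0)}=\eins_{4n}$ and $J^{(1)},J^{(2)},J^{(3)}$ as in \eqref{Jquatdef}--\eqref{Jquatdef2}, while $u^2=(X^\dagger X)_{11}$ is identified with $\langle u,u\rangle$. The only genuinely algebraic point here is that the quaternionic Pfaffian $\Pf(I\nabla_X^\dagger\nabla_X)$ unfolds into a sum of four real squares: this is exactly the four-square (quaternion norm) identity, and it is precisely what forces the appearance of three mutually anticommuting complex structures alongside the identity.

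With $I_1$, $I_2$ and $u^2$ put in this shape, the shift identity \eqref{eqonlyu} is the content of Proposition~\ref{prop} applied with size parameter $m=n$, number of structures $k=4$, and the four matrices $J^{(0)},J^{(1)},J^{(2)},J^{(3)}$ above. Hence the remaining work is to check the hypotheses of that proposition, namely that the $J^{(i)}$ are skew-symmetric orthogonal matrices with $(J^{(i)})^2=-\eins_{4n}$ for $i\ge1$, that they pairwise anticommute, and that they close multiplicatively with $J^{(3)}=J^{(1)}J^{(2)}$, i.e. they realise the imaginary quaternions on $\mR^{4n}$. These are confirmed by an elementary block computation on \eqref{Jquatdef}--\eqref{Jquatdef2}. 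Once the hypotheses hold, Proposition~\ref{prop} delivers \eqref{eqonlyu} for $\phi$, and Proposition~\ref{onlyu} then forces $\phi=\Psi_{n,n-2}^{(4)}$, establishing \eqref{theorem-b4}.

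The real obstacle is concentrated in Proposition~\ref{prop} itself, and I expect it to be the main difficulty. Verifying \eqref{eqonlyu} amounts to commuting the multiplication operator $u^2$ through the power series $\phi(I_1,I_2)$ and showing that, after using the (twisted analogues of the) relations in Lemma~\ref{commrel} to move every annihilation operator to the right so that it kills the constant term at the origin, the surviving coefficients telescope back into $\phi$ with a shifted argument. The extra structures $J^{(1)},J^{(2)},J^{(3)}$ make the commutators $[u^2,I_2]$ and $[u^2,I_1]$ substantially more intricate than in the $\beta=1,2$ cases, and it is exactly the multiplicative closure of the $J^{(i)}$ that makes the resulting combinatorial sum collapse to the clean ratio of Gamma functions $\Gamma(km/2)$ and $\Gamma(k(m-1)/2)$ evaluated at $m=n$, $k=4$. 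This is the place where the division-algebra structure — and hence the restriction to $\beta\in\{1,2,4\}$ — is indispensable, but once that structure is in place Proposition~\ref{prop} absorbs the entire computation uniformly.
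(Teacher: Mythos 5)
Your proposal is correct and takes essentially the same route as the paper: express $I_1,I_2$ in the real coordinates $u,v\in\mR^{4n}$, reduce via Theorem~\ref{PizzInv} and Proposition~\ref{onlyu} to the single shift identity \eqref{eqonlyu}, and verify it by Proposition~\ref{prop} with $k=4$, $m=n$ — indeed your parameters are the correct ones, since they produce $\Gamma(km/2)=\Gamma(2n)$ and $\Gamma(k(m-1)/2)=\Gamma(2n-2)$, whereas the paper's text at this point carries a copy-paste slip from the $\beta=2$ case (it says $k=2$, $J^{(0)}=\eins_{2n}$). One minor correction to your commentary: the multiplicative closure $J^{(3)}=J^{(1)}J^{(2)}$ is not among the hypotheses of Proposition~\ref{prop} and is not what makes the sum collapse — only the relations \eqref{Clifford} (skew-symmetry plus pairwise anticommutation) are used, e.g.\ to kill the cross terms in the proof of Lemma~\ref{Bu1} — so the quaternion algebra structure is a pleasant extra (noted in the paper's remark after the theorem) rather than a required ingredient.
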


\begin{remark}
Note that the matrices $J^{(1)}$, $J^{(2)}$ and $J^{(3)}$ satisfy the relations $J^{(1)}J^{(2)}=J^{(3)}$, $J^{(2)}J^{(3)}=J^{(1)}$ and $J^{(3)}J^{(1)}=J^{(2)}$ as well as $(J^{(i)})^2=-\eins_{4n}$. This implies that these matrices generate an algebra isomorphic to $\mH$, see also remark \ref{remCli}.
\end{remark}

\section{Integration of Schwartz functions over $\rmU^{(\beta)}(n)$}\label{sec:Schwartz}

In the previous sections we proved the Pizzetti formulae of the Stiefel manifold $\St^{(\beta)}(n,n-2)$ on functions with a certain analyticity condition. What remains are two generalizations. First we want to go over to integrations over Schwartz functions, where $\cS(M)$ is the set of Schwartz functions on the manifold $M$. Second we want to find Pizzetti formulae for general Stiefel manifolds $\St^{(\beta)}(n,m)$. 

As noted in section \ref{sketch}, the classical Pizzetti formula \eqref{classPizzetti} for $\St^{(\beta)}(n,n-1)$ is applicable for Schwartz functions on $\Gl^{(\beta)}(n,n-1)\cong \mR^{\beta n}$. Let us first concentrate on the case $\beta=1$. For a function $f\in\cS(\mR^{n\times k})$ we define the Fourier transform in the $k$th $n$-dimensional variable as
\begin{equation}
\cF_{n,k}^{(1)}[f](u_1,\dots,u_{k-1},y_k):=\frac{1}{(2\pi)^n}\int_{\mR^n}d[u_k]\exp[\imath\langle y_k,u_k\rangle]f(u_1,\cdots,u_k)\ \in\cS(\mR^{n\times k}).
\end{equation}
Subsequently we define a transform from $\cS(\mR^{n\times k})$ to $\cS(\mR^{n\times (k-1)})$ as
\begin{equation}\label{T-def}
T_{n,k}^{(1)}[f](u_1,\cdots,u_{k-1}):=\int_{\mR^n}d[y_k]\, \Psi_{(n-k-1)/2}\left(\sqrt{y_k^2-\sum_{i=1}^{k-1}\langle u_i,y_k\rangle^2}\right)\,\cF_{n,k}^{(1)}[f](u_1,\cdots,u_{k-1},y_k).
\end{equation}
Recall that $\Psi_p$ is the renormalized Bessel function~\eqref{renormBessel}.

As before we can rewrite this into the following Pizzetti-like expression:
\begin{eqnarray}
\nonumber
T_{n,k}^{(1)}[f](u_1,\cdots,u_{k-1})&=&\sum_{j=0}^\infty\frac{\Gamma\left[(n-k+1)/2\right]}{4^j j!\Gamma\left[j+(n-k+1)/2\right]}\left(\left(\Delta_{u_k}-\sum_{i=1}^{k-1}\langle u_i,\nabla_{u_k}\rangle^2\right)^j f\right)(u_1,\cdots,u_{k-1},0)\\
\label{T-reform}
&=&\left(\Psi_{(n-k-1)/2}\left(\sqrt{\sum_{i=1}^{k-1}\langle u_i,\nabla_{u_k}\rangle^2-\Delta_{u_k}}\right)f\right)(u_1,\cdots,u_{k-1},0)
\end{eqnarray}

The Pizzetti formula for $\St^{(1)}(n,n-k)=\SO(n)/\SO(n-k)$ is stated in the following theorem.
\begin{theorem}
For $f\in\cS(\mR^{n\times k})$ with $k<n$, the $\SO(n)$-invariant integration over the Stiefel manifold $\St^{(1)}(n,n-k)$ (canonically embedded in $\mR^{n\times k}$) is given by
\begin{equation}\label{Pizzetti-ful-beta1}
\int_{\St^{(1)}(n,n-k)}f=T_{n,1}^{(1)}\circ T_{n,2}^{(1)}\circ \cdots \circ T_{n,k}^{(1)} [f].
\end{equation}
\end{theorem}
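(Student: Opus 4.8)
The plan is to prove the identity by induction on $k$, realising each transform $T_{n,k}^{(1)}$ as the integration over a single fibre sphere in the tower of homogeneous bundle projections
\[
\St^{(1)}(n,n-k)\xrightarrow{\ \pi\ }\St^{(1)}(n,n-k+1),\qquad \pi(u_1,\dots,u_k)=(u_1,\dots,u_{k-1}),
\]
which forgets the last column. The fibre of $\pi$ over an orthonormal $(k-1)$-frame $(u_1,\dots,u_{k-1})$ is the set of unit vectors orthogonal to all the $u_i$, i.e. the unit sphere $\mS(W)$ of the $(n-k+1)$-dimensional orthogonal complement $W=(\Span\{u_1,\dots,u_{k-1}\})^\perp\cong\mR^{n-k+1}$, so the fibre is $\mS^{n-k}$. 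Since $\SO(n)$ acts transitively on all spaces involved, the normalised Haar measures are compatible with this fibration (a standard disintegration for homogeneous spaces, with all probability normalisations automatically consistent), giving
\[
\int_{\St^{(1)}(n,n-k)}f=\int_{\St^{(1)}(n,n-k+1)}\left(\int_{\mS(W)}f(u_1,\dots,u_{k-1},\cdot)\,d\sigma\right)d\mu .
\]

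The key lemma is to identify $T_{n,k}^{(1)}[f]$, evaluated on an orthonormal frame, with this inner fibre integral. I would mimic the derivation of subsection \ref{originalPiz}, but carried out inside the subspace $W$ rather than on all of $\mR^n$. For orthonormal $u_1,\dots,u_{k-1}$ a point $u_k$ of the fibre lies in $W$, so $\langle y_k,u_k\rangle=\langle y_k^W,u_k\rangle$ with $y_k^W$ the orthogonal projection of $y_k$ onto $W$ and $|y_k^W|^2=y_k^2-\sum_{i=1}^{k-1}\langle u_i,y_k\rangle^2$. The angle average \eqref{angleaverage} applied inside $W$ then gives
\[
\int_{\mS(W)}\exp[\imath\langle y_k,u_k\rangle]\,d\sigma(u_k)=\Psi_{(n-k-1)/2}\left(\sqrt{y_k^2-\sum_{i=1}^{k-1}\langle u_i,y_k\rangle^2}\right),
\]
which is precisely the kernel of \eqref{T-def}. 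Substituting this into \eqref{T-def} and exchanging the $y_k$- and $u_k$-integrations (Fourier inversion) reproduces $f$ from its Fourier transform, so that $T_{n,k}^{(1)}[f](u_1,\dots,u_{k-1})=\int_{\mS(W)}f(u_1,\dots,u_{k-1},\cdot)\,d\sigma$ on the orthonormal locus; the equivalent differential form is \eqref{T-reform}, in which $\Delta_{u_k}-\sum_i\langle u_i,\nabla_{u_k}\rangle^2$ is exactly the flat Laplacian of $W$.

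With the lemma the induction closes. The base case $k=1$ is the classical Pizzetti formula for Schwartz functions from subsection \ref{originalPiz}, since $\St^{(1)}(n,n-1)\cong\mS^{n-1}$ and the $k=1$ kernel is $\Psi_{n/2-1}(\sqrt{y_1^2})$. For the inductive step, combining the disintegration with the lemma gives $\int_{\St^{(1)}(n,n-k)}f=\int_{\St^{(1)}(n,n-k+1)}g$ with $g:=T_{n,k}^{(1)}[f]$; note that only the restriction of $g$ to the compact orthonormal locus enters here. Applying the inductive hypothesis to $g\in\cS(\mR^{n\times(k-1)})$ yields $\int_{\St^{(1)}(n,n-k+1)}g=(T_{n,1}^{(1)}\circ\cdots\circ T_{n,k-1}^{(1)})[g]$, and substituting $g=T_{n,k}^{(1)}[f]$ produces \eqref{Pizzetti-ful-beta1}.

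The main obstacle I anticipate is analytic rather than structural: one must verify that $T_{n,k}^{(1)}$ genuinely maps $\cS(\mR^{n\times k})$ into $\cS(\mR^{n\times(k-1)})$, so that the inductive hypothesis applies to $g$, and that the Fubini/Fourier-inversion swap in the lemma is legitimate. The delicate point is that away from the orthonormal locus the argument $y_k^2-\sum_i\langle u_i,y_k\rangle^2$ of $\Psi_{(n-k-1)/2}$ can become negative, where the (entire, even) function $\Psi_{(n-k-1)/2}(\sqrt{\,\cdot\,})$ grows like a modified Bessel function; I would control the $y_k$-integral using the rapid decay of $\cF_{n,k}^{(1)}[f]$ together with the super-factorial decay $\Gamma[(n-k+1)/2]/(4^j j!\,\Gamma[j+(n-k+1)/2])$ of the Taylor coefficients in \eqref{T-reform}, which dominates the polynomial growth in $u_1,\dots,u_{k-1}$ generated at each order. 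Structurally the induction only needs $g$ on the compact orthonormal locus for the disintegration step, so this global Schwartz estimate is required solely to invoke the inductive hypothesis, and it is there that the bulk of the technical effort lies.
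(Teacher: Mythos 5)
Your proposal is correct in its overall structure, but it takes a genuinely different route from the paper's proof. The paper also argues by induction on $k$ with the classical Pizzetti formula as base case, but its inductive step is a uniqueness argument in the spirit of proposition~\ref{onlyu}: it moves $\Delta_{y_k}$ onto the kernel of \eqref{T-def}, uses the Bessel differential equation together with the inductive hypothesis (to replace $\langle u_i,u_j\rangle$ by $\delta_{ij}$ under the outer transforms) to show that the composite functional satisfies $T[u_k^2f]=T[f]$ and $T[\langle u_i,u_k\rangle f]=0$, hence factors through restriction to $\St^{(1)}(n,n-k)$; it then checks $\SO(n)$-equivariance of $T^{(1)}_{n,k}$ by the substitution $y_k\to Ay_k$ and concludes by uniqueness of the normalized invariant functional. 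You instead promote the telescopic factorization of the Stiefel manifold --- which the paper mentions only as motivation \emph{after} the theorem --- to the actual proof: Weil-type disintegration of the normalized Haar measure along $\St^{(1)}(n,n-k)\to\St^{(1)}(n,n-k+1)$, together with the identification of $T^{(1)}_{n,k}[f]$ on the orthonormal locus with the fibre-sphere integral, obtained from the plane-wave average \eqref{angleaverage} applied inside $W\cong\mR^{n-k+1}$ and Fourier inversion. Your route is more transparent about where the kernel in \eqref{T-def} comes from, and it trades the Bessel-ODE computation and the appeal to uniqueness of invariant integration for two standard facts (disintegration and Fubini on a compact fibre); the paper's route avoids the fibration picture entirely and stays inside the invariant-theory framework of section~\ref{secinv}.

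One caveat on the analytic point you flag at the end. You are right that both proofs need $T^{(1)}_{n,k}[f]$ to be Schwartz on all of $\mR^{n\times(k-1)}$ (the inductive hypothesis is applied to it, and the subsequent transforms take Fourier transforms over the full space, not just the orthonormal locus); the paper is on no firmer ground, since it simply asserts this mapping property when defining \eqref{T-def} and its own induction uses it identically, so your proof meets the paper's standard of rigour. However, the repair you sketch would not work as stated: for frames with a singular value exceeding $1$ the argument of $\Psi_{(n-k-1)/2}$ in \eqref{T-def} is negative on a cone of $y_k$-directions and the kernel grows like $\exp(c|y_k|)$, which the merely rapid (not exponential) decay of $\cF^{(1)}_{n,k}[f]$ cannot dominate; likewise the series \eqref{T-reform} need not converge for a general Schwartz function, since by Borel's theorem the derivatives $(\Delta^j_{u_k}f)(\cdot,0)$ can grow faster than $4^jj!\,\Gamma[j+(n-k+1)/2]$. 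So this is a defect of definition \eqref{T-def} itself (absent, e.g., for integrands whose $u_k$-Fourier transform is compactly supported or Gaussian-like, as in the applications of section~\ref{appl}), and closing it would require restricting the function class or reinterpreting the transform --- for your proof and for the paper's alike.
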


Note that the idea behind this theorem is the telescopic factorization of the Stiefel manifold
\begin{eqnarray}
\St^{(1)}(n,n-k)&=&\SO(n)/\SO(n-k)\nonumber\\
&\cong&\SO(n)/\SO(n-1)\times\SO(n-1)/\SO(n-2)\times\cdots\times\SO(n-k+1)/\SO(n-k)\nonumber\\
&=&\St^{(1)}(n,n-1)\times\St^{(1)}(n-1,n-2)\times\cdots\times\St^{(1)}(n-k+1,n-k).
\end{eqnarray}
 Thereby the Stiefel manifolds in the last line are embedded in $\St^{(1)}(n,n-k)$ in a specific way. The corresponding additional conditions are the reason for the non-trivial argument of the renormalized Bessel function $\Psi_p$ in Eq.~\eqref{T-reform}.
 
\begin{proof}
We prove this statement by induction. The case $k=1$ is certainly true because $\St^{(1)}(n,n-1)\cong\mathbb{S}^{n-1}$ for which we can apply the original Pizzetti formula~\eqref{classPizzetti}, see Eq.~\eqref{connPizz}. Hence we consider the induction $k-1\to k$ where we assume that Eq.~\eqref{Pizzetti-ful-beta1} holds for an $k-1\in\mathbb{N}$. Since
\begin{equation}
\label{uk2}
T_{n,k}^{(1)}[ u_k^2 f]=-\int_{\mR^n}d[y_k]\left(\Delta_{y_k} \Psi_{(n-k-1)/2}\left(\sqrt{y_k^2-\sum_{i=1}^{k-1}\langle u_i,y_k\rangle^2}\right)\right)\cF_{n,k}^{(1)}[f](u_1,\cdots,u_{k-1},y_k),
\end{equation} 
we calculate
\begin{eqnarray}
&&\Delta_{y_k} \Psi_{(n-k-1)/2}\left(\sqrt{y_k^2-\sum_{i=1}^{k-1}\langle u_i,y_k\rangle^2}\right)\nonumber\\
&=& \left(\frac{\Psi''_{(n-k-1)/2}\left(\sqrt{y_k^2-\sum_{i=1}^{k-1}\langle u_i,y_k\rangle^2}\right)}{y_k^2-\sum_{i=1}^{k-1}\langle u_i,y_k\rangle^2}-\frac{\Psi'_{(n-k-1)/2}\left(\sqrt{y_k^2-\sum_{i=1}^{k-1}\langle u_i,y_k\rangle^2}\right)}{(y_k^2-\sum_{i=1}^{k-1}\langle u_i,y_k\rangle^2)^{3/2}}\right)\nonumber\\
&&\times\left(y_k^2-2\sum_{i=1}^{k-1}\langle u_i,y_k\rangle^2+\sum_{i,j=1}^{k-1}\langle u_i,u_j\rangle\langle u_i,y_k\rangle\langle u_j,y_k\rangle\right)\nonumber\\
&&+\frac{\Psi'_{(n-k+1)/2-1}\left(\sqrt{y_k^2-\sum_{i=1}^{k-1}\langle u_i,y_k\rangle^2}\right)}{\sqrt{y_k^2-\sum_{i=1}^{k-1}\langle u_i,y_k\rangle^2}}\left(n-\sum_{i=1}^{k-1}u_i^2\right).
\end{eqnarray}
Since we study the evaluation of $T_{n,1}^{(1)}\circ\cdots\circ T_{n,k-1}^{(1)}$ on the function \eqref{uk2}, the induction step implies that $\langle u_i,u_j\rangle$ can be replaced by the Kronecker symbol $\delta_{ij}$, which yields
\begin{eqnarray}
&&\Psi''_{(n-k-1)/2}\left(\sqrt{y_k^2-\sum_{i=1}^{k-1}\langle u_i,y_k\rangle^2}\right)+\left(n-k\right)\frac{\Psi'_{(n-k-1)/2}\left(\sqrt{y_k^2-\sum_{i=1}^{k-1}\langle u_i,y_k\rangle^2}\right)}{\sqrt{y_k^2-\sum_{i=1}^{k-1}\langle u_i,y_k\rangle^2}}\nonumber\\
&=&-\Psi_{(n-k-1)/2}\left(\sqrt{y_k^2-\sum_{i=1}^{k-1}\langle u_i,y_k\rangle^2}\right),
\end{eqnarray}
which follows from the defining differential equation of the Bessel function. Therefore we obtain 
\begin{equation}
T_{n,1}^{(1)}\circ\cdots\circ T_{n,k}^{(1)}[u_k^2f]=T_{n,1}^{(1)}\circ\cdots\circ T_{n,k}^{(1)}[f].
\end{equation}
 Similar to this calculation, one can prove $T_{n,1}^{(1)}\circ \cdots\circ T_{n,k}^{(1)}[\langle u_i,u_k\rangle f]=0$. But this follows also immediately from the idea in proposition \ref{onlyu}. Hence, the functional only depends on the restriction of the function to $\St^{(1)}(n,n-k)\subset\mR^{n\times k}$. 

What remains is to prove that $T_{n,k}^{(1)}: \cS(\mR^{n\times k})\to \cS(\mR^{n\times(k-1)})$ is $\SO(n)$-equivariant, from which the $\SO(n)$ invariance of $T_{n,1}^{(1)}\circ\cdots\circ T_{n,k}^{(1)}$ follows. Therefore the functional must be the unique invariant integration.

Take $A\in \SO(n)$
and define $(Af)(u_1,\cdots,u_k)=f(A^{-1}u_1,\cdots, A^{-1}u_k)$. Then,
\begin{eqnarray}
&&T_{n,k}^{(1)}[Af](u_1,\cdots,u_{k-1})\nonumber\\
&=&\int_{\mR^n}d[y_k] \Psi_{(n-k-1)/2}\left(\sqrt{y_k^2-\sum_{i=1}^{k-1}\langle u_i,y_k\rangle^2}\right)\cF_{n,k}^{(1)}[f](A^{-1}u_1,\cdots,A^{-1}u_{k-1},A^{-1}y_k)\nonumber\\
&\overset{y_k\to Ay_k}{=}&\int_{\mR^n}d[y_k] \Psi_{(n-k-1)/2}\left(\sqrt{y_k^2-\sum_{i=1}^{k-1}\langle A^{-1}u_i,y_k\rangle^2}\right)\cF_{n,k}^{(1)}[f](A^{-1}u_1,\cdots,A^{-1}u_{k-1},y_k)\nonumber\\
&=&T_{n,k}^{(1)}[f](A^{-1}u_1,\cdots,A^{-1}u_{k-1})=(AT_{n,k}^{(1)}[f])(u_1,\cdots,u_{k-1})
\end{eqnarray}
holds, which concludes the proof.
\end{proof}

This proof can be readily generalized to arbitrary Dyson index $\beta=1,2,4$ such that we have the following theorem.

\begin{theorem}\label{recursion-theorem}
For $f\in\cS(\mR^{\beta n\times k})$, the $\rmU^{(\beta)}(n)$-invariant integration over the Stiefel manifold $\St^{(\beta)}(n,n-k)$ (canonically embedded in $\mR^{\beta n\times k}$) is given by
\begin{equation}
\int_{\St^{(\beta)}(n,n-k)}f=T_{n,1}^{(\beta)}\circ T_{n,2}^{(\beta)}\circ \cdots \circ T_{n,k}^{(\beta)} [f],
\end{equation}
with
\begin{equation}
T_{n,k}^{(\beta)}[f](u_1,\cdots u_{k-1})=\left(\Psi_{\beta(n-k+1)/2-1}\left(\sqrt{\sum_{i=1}^{k-1}\sum_{l=0}^{\beta-1}\langle u_i,J^{(l)}\nabla_{u_k}\rangle^2-\Delta_{u_k}}\right)f\right)(u_1,\cdots,u_{k-1},0)
\end{equation}
with $u_j\in\mathbb{R}^{\beta n}$ (recall that $\mathbb{C}^n\cong\mathbb{R}^{2n}$ and $\mathbb{H}^n\cong\mathbb{R}^{4n}$) and $J^{(0)}=\eins_{\beta n}$ and $J^{(l\neq0)}$ given by Eqs.~\eqref{jdef-beta2} or \eqref{Jquatdef} and \eqref{Jquatdef2} according to the Dyson index. 
\end{theorem}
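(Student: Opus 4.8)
The plan is to reproduce the proof of the $\beta=1$ case essentially verbatim, tracking only the two places where the complex or quaternionic structure intervenes. As before, I would argue by induction on $k$, using the telescopic factorization
\[
\St^{(\beta)}(n,n-k)\cong\St^{(\beta)}(n,n-1)\times\St^{(\beta)}(n-1,n-2)\times\cdots\times\St^{(\beta)}(n-k+1,n-k).
\]
The base case $k=1$ is $\St^{(\beta)}(n,n-1)\cong\mathbb{S}^{\beta n-1}$, for which $T_{n,1}^{(\beta)}$ (with its inner sums empty) reduces to the classical Pizzetti formula \eqref{classPizzetti} with order $\beta n/2-1$, cf.\ \eqref{connPizz}.

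For the induction step $k-1\to k$ I would first pass to the integral representation
\[
T_{n,k}^{(\beta)}[f]=\int_{\mathbb{R}^{\beta n}}d[y_k]\,\Psi_{\beta(n-k+1)/2-1}\left(\sqrt{y_k^2-\sum_{i=1}^{k-1}\sum_{l=0}^{\beta-1}\langle u_i,J^{(l)}y_k\rangle^2}\right)\cF_{n,k}^{(\beta)}[f](u_1,\dots,u_{k-1},y_k),
\]
where $\cF_{n,k}^{(\beta)}$ is the Fourier transform in the $k$th ($\beta n$-dimensional) real variable. One checks that writing $(X^\dagger X)_{ik}=\widetilde{u}_i^\dagger\widetilde{u}_k$ in real coordinates produces exactly the $\beta$ real bilinears $\langle u_i,J^{(l)}u_k\rangle$, $0\le l\le\beta-1$, so that the Stiefel constraints in the $k$th column are $u_k^2=1$ together with $\langle u_i,J^{(l)}u_k\rangle=0$. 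The core computation is then the analogue of the $\beta=1$ identity: multiplication by $u_k^2$ becomes $-\Delta_{y_k}$ under $\cF_{n,k}^{(\beta)}$, and I would compute $\Delta_{y_k}$ applied to $\Psi_p(\sqrt{R})$ with $R=y_k^2-\sum_{i,l}\langle u_i,J^{(l)}y_k\rangle^2$.

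The main obstacle is to see that, after invoking the induction hypothesis (which permits replacing $\langle u_i,u_j\rangle$ by $\delta_{ij}$ and $\langle u_i,J^{(l)}u_j\rangle$ by $0$ for $l\neq0$, $i,j\le k-1$), the many cross terms collapse so that only the radial Bessel operator survives. Here the algebra relations $(J^{(l)})^2=-\eins_{\beta n}$, the antisymmetry $(J^{(l)})^T=-J^{(l)}$, and the quaternionic multiplication table $J^{(l)}J^{(l')}=\pm J^{(l'')}$ are essential: they reduce quantities such as $\langle J^{(l)}u_i,J^{(l')}u_{i'}\rangle=-\langle u_i,J^{(l)}J^{(l')}u_{i'}\rangle$ to $\delta_{ii'}$-multiples (when $l=l'$) or to terms killed by the induction hypothesis (when $l\neq l'$). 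Granting this collapse, the resulting second-order radial operator is precisely the one annihilated by the defining Bessel differential equation, yielding $T_{n,1}^{(\beta)}\circ\cdots\circ T_{n,k}^{(\beta)}[u_k^2f]=T_{n,1}^{(\beta)}\circ\cdots\circ T_{n,k}^{(\beta)}[f]$. The companion vanishing $T_{n,1}^{(\beta)}\circ\cdots\circ T_{n,k}^{(\beta)}[\langle u_i,J^{(l)}u_k\rangle f]=0$ I would obtain either by the same type of computation or, more cleanly, from the module-theoretic argument underlying Proposition~\ref{onlyu}.

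Finally, I would establish $\rmU^{(\beta)}(n)$-equivariance of each $T_{n,k}^{(\beta)}$ via the substitution $y_k\to Ay_k$ for $A\in\rmU^{(\beta)}(n)$, exactly as in the $\beta=1$ argument. The single new ingredient is that $\rmU^{(\beta)}(n)$ embeds in $\SO(\beta n)$ as precisely the subgroup commuting with every $J^{(l)}$, so that $\langle A^{-1}u_i,J^{(l)}y_k\rangle=\langle u_i,J^{(l)}Ay_k\rangle$ and the radial argument $R$ is invariant under the change of variables. Equivariance of each factor forces $\rmU^{(\beta)}(n)$-invariance of the full composition, and since the normalized $\rmU^{(\beta)}(n)$-invariant functional on $\St^{(\beta)}(n,n-k)$ is unique, the composition must be the Haar integral, closing the induction.
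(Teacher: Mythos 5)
Your proposal is correct and follows essentially the same route as the paper: induction on $k$ via the telescopic coset factorization, the Fourier representation of $T_{n,k}^{(\beta)}$, the Bessel differential equation to show $T[u_k^2f]=T[f]$ after the induction hypothesis collapses the cross terms, the vanishing of $T[\langle u_i,J^{(l)}u_k\rangle f]$, and finally equivariance plus uniqueness of the invariant functional. The paper in fact only spells this out for $\beta=1$ and asserts the generalization "works along the same line"; your explicit tracking of the Clifford relations \eqref{Clifford} and of the fact that $\rmU^{(\beta)}(n)$ sits inside $\SO(\beta n)$ as the subgroup commuting with every $J^{(l)}$ is precisely the content that assertion suppresses.
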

The proof for  $T_{n,1}^{(\beta)}\circ T_{n,2}^{(\beta)}\circ \cdots \circ T_{n,k}^{(\beta)}[u_k^2f]=T_{n,1}^{(\beta)}\circ T_{n,2}^{(\beta)}\circ \cdots \circ T_{n,k}^{(\beta)}[f]$ and for $T_{n,1}^{(\beta)}\circ T_{n,2}^{(\beta)}\circ \cdots \circ T_{n,k}^{(\beta)}[\langle u_j,u_i\rangle f]=0$ is identical to the one for the case $\beta=1$. Also the calculation for $T_{n,1}^{(\beta)}\circ T_{n,2}^{(\beta)}\circ \cdots \circ T_{n,k}^{(\beta)}[\langle u_j, J^{(l)}u_i\rangle f]=0$ with $l=1$ for $\beta=2$, see Eq.~\eqref{jdef-beta2}, and  $l=1,2,3$ for $\beta=4$, see Eq.~\eqref{Jquatdef}, \eqref{Jquatdef2} works along the same line and does not cause any problems.

We underline that the recursion presented here is similar to the ones found in Refs.~\cite{MR1893696,BergereEynard}. Also here we split the group $\rmU^{(\beta)}(n)$ in cosets $\rmU^{(\beta)}(n)/\rmU^{(\beta)}(n-1)$ which are all isomorphic to spheres. These spheres can be understood as the column vectors which were split off in the fundamental representation of $\rmU(N)$ presented in Ref.~\cite{MR1893696} where the connection to Gelfand-Zeitlin\footnote{In other transliterations Zeitlin was written as Cetlin, Zetlin, Tzetlin or Tsetlin. However the origin of the Cyrillic spelling {\cyr Ce{\u i}tlin} supports our choice. } coordinates were pointed out.

\section{Itzykson-Zuber integral for $\SO(4)/[\SO(2)\times\SO(2)]$}\label{appl}

Now we apply our results in section \ref{sec2} to the integral~\eqref{ItzZub}. In the first step we extend the integral over the coset  $\SO(4)/[\SO(2)\times\SO(2)]$ to the Stiefel manifold $\St^{(1)}(4,2)$ such that we consider the integral
\begin{eqnarray}\label{ItzZub-sec6}
 I(H)&=&\int_{\St^{(1)}(4,2)}d\mu(X)\exp\left[-2\tr XX^T H\right]
\end{eqnarray}
with $H={\rm diag}(E_1,E_2,E_3,E_4)$. As $\exp\left[-2\tr XX^T H\right]\in\cA^{(1)}_{4,2}$, we can apply the formula in theorem~\ref{resStiefel},
\begin{equation}
I(H)=\sum_{j=0}^\infty\frac{1}{4^j(j+1)!}\sum_{l=0}^{\lfloor j/2\rfloor}\frac{\Gamma(3/2)}{\Gamma(l+3/2)}\left(\frac{(\tr\nabla_X^T\nabla_X)^{j-2l}}{(j-2l)!}\frac{\det(\nabla_X^T\nabla_X)^l}{l!}\exp\left[-2\tr XX^T H\right]\right)(0).
\end{equation}

To evaluate this expression we will, in the spirit of section \ref{sketch}, use the Fourier transform. To have a Schwartz function, which implies in particular that the Fourier transform exists, we choose H to be positive definite and relax this assumption at the end of our calculation since the result is analytic in $H$. This allows to rewrite $I(H)$ as
\begin{equation}
\frac{1}{(8\pi)^4\det H}\int_{\mathbb{R}^{4\times 2}} d[B]\sum_{j,l|l\le \lfloor j/2\rfloor}\frac{\Gamma(3/2)(\tr B^TB)^{j-2l}\det(B^TB)^l}{4^j(j+1)!\Gamma(l+3/2)(j-2l)!l!}\exp\left[-\frac{1}{8}\tr B^TB H^{-1}\right],
\end{equation}
where we underline that the integrals and sums are absolutely convergent such that we can exchange them without any problems. As for any $M\in \mR^{2\times 2}$, we have
\begin{eqnarray}
\int_0^{2\pi}\frac{d\varphi}{2\pi}e^{-\imath j\varphi}{\det}^{1/2+j}(\eins_2+e^{\imath\varphi}M)&=&\int_0^{2\pi}\frac{d\varphi}{2\pi}e^{-\imath j\varphi}\left(1+e^{\imath \phi}\tr M+e^{2\imath\phi}\det M)\right)^{j+1/2}\\
\nonumber
&=&\sum_{l=0}^{\lfloor j/2\rfloor}\frac{\Gamma(j+3/2)}{\Gamma(l+3/2)(j-2l)!l!}(\tr M)^{j-2l}(\det M)^{l},
\end{eqnarray}
we can apply this to $B^T B$ to obtain
\begin{eqnarray}
 I(H)&=&\frac{2}{(8\pi)^4\det H}\sum_{j=0}^\infty\frac{1}{(2j+2)!}\int_0^{2\pi} \frac{d\varphi}{2\pi}e^{-\imath j\varphi}\nonumber\\
&&\times\int_{\mathbb{R}^{4\times 2}} d[B]{\det}^{1/2+j}(\eins_4-e^{\imath\varphi}  BB^T)\exp\left[-\frac{1}{8}\tr BB^TH^{-1}\right].\label{ItzZub-sec6.a}
\end{eqnarray}
Here we used $\det(\eins_{2}-e^{\imath\varphi}B^TB)=\det(\eins_{4}-e^{\imath\varphi}BB^T).$

In the next step we replace the matrix $H^{-1}$ in the exponential function by a full $4\times4$ real symmetric matrix  $K=\{K_{ab}=K_{ba}\}$. This step is legitimate because the integral is invariant under $B\to OB$ with $O\in\SO(4)$. Defining the gradient in the matrix $K$,
\begin{equation}\label{graddef-K}
 \nabla_K=\left(\begin{array}{cccc} \displaystyle\frac{\partial}{\partial K_{11}} & \displaystyle\frac{1}{2}\frac{\partial}{\partial K_{12}} & \displaystyle\frac{1}{2}\frac{\partial}{\partial K_{13}} & \displaystyle\frac{1}{2}\frac{\partial}{\partial K_{14}} \\ \displaystyle\frac{1}{2}\frac{\partial}{\partial K_{12}} & \displaystyle\frac{\partial}{\partial K_{22}} & \displaystyle\frac{1}{2}\frac{\partial}{\partial K_{23}} & \displaystyle\frac{1}{2}\frac{\partial}{\partial K_{23}} \\ \displaystyle\frac{1}{2}\frac{\partial}{\partial K_{13}} & \displaystyle\frac{1}{2}\frac{\partial}{\partial K_{23}} & \displaystyle\frac{\partial}{\partial K_{33}} & \displaystyle\frac{1}{2}\frac{\partial}{\partial K_{34}} \\ \displaystyle\frac{1}{2}\frac{\partial}{\partial K_{14}} & \displaystyle\frac{1}{2}\frac{\partial}{\partial K_{24}} & \displaystyle\frac{1}{2}\frac{\partial}{\partial K_{34}} & \displaystyle\frac{\partial}{\partial K_{44}} \end{array}\right),
\end{equation}
we can replace the polynomial in $BB^T$ in front of the exponential function by this gradient. Then we integrate over $B$ and rescale $e^{\imath\varphi}\to e^{\imath\varphi}/8$ such that we find
\begin{eqnarray}
 I(H)&=&\frac{2}{\det H}\sum_{j=0}^\infty\frac{8^{j}}{(2j+2)!}\int_0^{2\pi} \frac{d\varphi}{2\pi}e^{-\imath j\varphi}\left.{\det}^{1/2+j}(e^{\imath\varphi}  \nabla_K+\eins_4)\frac{1}{{\det}K}\right|_{K=H^{-1}}.\label{ItzZub-sec6.b}
\end{eqnarray}
The prefactors $1/2$ in the off-diagonal elements of the gradient are important because of the symmetric structure of $K$ and guarantees the invariance $K\to OKO^T\Rightarrow \nabla_K\to O\nabla_KO^T$ with $O\in\SO(4)$. This invariance also allows us to diagonalize $K$ and to drop the angular derivatives in the differential operator such that it depends on the eigenvalues, only. Then we can replace the eigenvalues of $K$ by those of $H^{-1}$. For this purpose we express the integral over $\varphi$ in terms of four invariant differential operators,
\begin{eqnarray}\label{contour-integral}
 &&\int_0^{2\pi} \frac{d\varphi}{2\pi}e^{-\imath j\varphi}{\det}^{1/2+j}(e^{\imath\varphi}  \nabla_K+\eins_4)\\
 &=&\sum_{l=1}^{\lfloor j/4\rfloor}\sum_{k=1}^{\lfloor (j-4l)/3\rfloor}\sum_{p=1}^{\lfloor (j-4l-3k)/2\rfloor}\frac{\Gamma(3/2+j)}{l!k!p!\Gamma(3/2+3l+2k+p)} \widehat{I}_4^{j-4l-3k-2p} \widehat{I}_3^p \widehat{I}_2^k\widehat{I}_1^l\nonumber
\end{eqnarray}
with
\begin{eqnarray}\label{differential-op}
 \widehat{I}_1=\det\nabla_K,\ \widehat{I}_2=\frac{1}{6}(\tr^3\nabla_K-3\Delta_K\tr \nabla_K+2\tr\nabla_K^3) ,\ \widehat{I}_3=\frac{1}{2}(\tr^2\nabla_K-\Delta_K),\ \widehat{I}_4=\tr\nabla_K.
\end{eqnarray}
These four differential operators are the coefficients of the characteristic polynomial $\det(\nabla_K-\lambda\eins_4)$. After diagonalizing $K=OkO^T$ with $k={\rm diag}(k_1,k_2,k_3,k_4)$ and $O\in\SO(4)$ this characteristic polynomial can be written as
\begin{eqnarray}
 \det(\nabla_K-\lambda\eins_4)\rightarrow \widehat{D}(\lambda)&=&\frac{1}{16\Delta_{4}(k)}\sum_{r\in\{0,1\}^4}\prod\limits_{j=1}^4\left(\frac{\partial}{\partial k_a}-\lambda\right)^{r_a}\Delta_4(k)\prod\limits_{j=1}^4\left(\frac{\partial}{\partial k_a}-\lambda\right)^{1-r_a}\nonumber\\
 &=&\frac{1}{\Delta_4(x)}\det\left[k_a^{4-b}\left(\frac{\partial}{\partial k_a}+\frac{4-b}{2}\frac{1}{k_a}-\lambda\right)\right]_{1\leq a,b\leq 4}.\label{Sekiguchi}
\end{eqnarray}
This operator is a Sekiguchi-like differential operator and was derived by one of the authors in Ref.~\cite{KGG}. The expansion in $\lambda$ yields a set of operators building an algebraic basis of Casimir operators only expressed in the eigenvalues of $K$, in particular they commute with each other. This reads for the operators~\eqref{differential-op}
\begin{eqnarray}
\widehat{I}_1 &\rightarrow& \widehat{D}_1=\widehat{D}(0),\nonumber\\
\widehat{I}_2 &\rightarrow& \widehat{D}_2=-\frac{\partial \widehat{D}}{\partial\lambda}(0)=[\widehat{D}(0),\tr k]=\widehat{D}(0)\tr k-\tr k \widehat{D}(0),\nonumber\\
\widehat{I}_3 &\rightarrow& \widehat{D}_3=\sum_{1\leq a<b\leq 4}\left[\frac{\partial^2}{\partial k_a\partial k_b}-\frac{1}{2}\frac{1}{k_a-k_b}\left(\frac{\partial}{\partial k_a}-\frac{\partial}{\partial k_b}\right)\right],\nonumber\\
\widehat{I}_4 &\rightarrow& \widehat{D}_4=\sum_{a=1}^4\frac{\partial}{\partial k_a}.\label{differential-op.b}
\end{eqnarray}
Indeed the application of $\widehat{D}_j$ on ${\det}^{-1}k$ is non-trivial but surprisingly the most complicated operator $\widehat{D}_1$ has a simple action,
\begin{equation}
 \widehat{D}_1{\det}^a k=\frac{2a(2a+1)(2a+2)(2a+3)}{16} {\det}^{a-1} k\ \Rightarrow\ \widehat{D}_1{\det}^{-1} k=0.
\end{equation}
Therefore one sum vanishes and the integral~\eqref{ItzZub-sec6.b} reads
\begin{eqnarray}
 I(H)&=&\frac{2}{\det H}\sum_{j=0}^\infty\frac{8^{j}}{(2j+2)!}\sum_{k=1}^{\lfloor j/3\rfloor}\sum_{p=1}^{\lfloor (j-3k)/2\rfloor}\frac{\Gamma(3/2+j)}{k!p!\Gamma(3/2+2k+p)}\left. \widehat{D}_4^{j-3k-2p} \widehat{D}_3^p\widehat{D}_2^k{\det}^{-1} k\right|_{k=H^{-1}}.\label{ItzZub-sec6.c}
\end{eqnarray}
One can also show that
\begin{equation}
\widehat{D}_2{\det}^{-1} k=\widehat{D}_1\tr k {\det}^{-1} k=0
\end{equation}
such that we have
\begin{eqnarray}
 I(H)&=&\frac{1}{\det H}\sum_{j=0}^\infty\sum_{p=1}^{\lfloor j/2\rfloor}\frac{2^{j+2p}}{(j+1)!(2p+1)!}\left. \widehat{D}_4^{j-2p}\widehat{D}_3^p{\det}^{-1} k\right|_{k=H^{-1}}.\label{ItzZub-sec6.d}
\end{eqnarray}
Also the action of $\widehat{D}_4^{j-2p}$ can be simply done though it does not vanish,
\begin{eqnarray}
 I(H)&=&\frac{1}{\det H}\sum_{j=0}^\infty\sum_{p=1}^{\lfloor j/2\rfloor}\frac{(-2)^{j+2p}(j+2p)!}{(j+1)!(2p+1)!}\left.\widehat{D}_3^p \frac{1}{\Delta_4(k)}\det\left[\begin{array}{c} \displaystyle\left\{ k_b^{3-a}\right\}\underset{1\leq b\leq 4}{\underset{1\leq a\leq3}{\ }} \\ \displaystyle\left\{ k_b^{2p-j-1}\right\}\underset{1\leq b\leq 4}{\ } \end{array}\right]\right|_{k=H^{-1}}.\label{ItzZub-sec6.e}
\end{eqnarray}
The remaining action of $\widehat{D}_3^p$ becomes quite cumbersome but can be readily numerically evaluated. Hence we only rewrite everything in $H={\rm diag}(E_1,E_2,E_3,E_4)$ and end up with
\begin{eqnarray}
 I(H)&=&\frac{1}{\det H}\sum_{j=0}^\infty\sum_{p=1}^{\lfloor j/2\rfloor}\frac{(-2)^{j+2p}(j+2p)!}{(j+1)!(2p+1)!}D_3^p \frac{1}{\Delta_4(H)}\det\left[\begin{array}{c} \displaystyle\left\{ E_b^{a}\right\}\underset{1\leq b\leq 4}{\underset{1\leq a\leq3}{\ }} \\ \displaystyle\left\{ E_b^{j-2p+4}\right\}\underset{1\leq b\leq 4}{\ } \end{array}\right]\label{ItzZub-sec6.f}
\end{eqnarray}
with
\begin{equation}
D_3=\sum_{1\leq a<b\leq 4}\left[E_a^2E_b^2\frac{\partial^2}{\partial E_a\partial E_b}-\frac{1}{2}\frac{E_aE_b}{E_a-E_b}\left(E_a^2\frac{\partial}{\partial E_a}-E_b^2\frac{\partial}{\partial E_b}\right)\right]
\end{equation}
which is our main result of this section. This result is more explicit than the ones of other approaches, cf. Refs.~\cite{GuhrKohler-b,Heiner-thesis,MR1893696,BergereEynard}, and hopefully may contribute to the discussion of the transition between Poisson and GOE statistics in random matrix theory. Hereby we underline that the term on which the differential operator $D_3^p$ acts is a Schur polynomial and the operator $D_3$ is essentially the Laplace operator for the case of diagonalized real symmetric random matrices. Therefore both objects are well-known to random matrix theorists and many properties of them are known.

\section{Conclusions}\label{conclusio}

We generalized the idea of Pizzetti's formula, originally stated for integrations over spheres~\cite{Pizzetti}, to integrations over Stiefel manifolds. This formula rewrites the integral of a function as an action of a differential operator followed by evaluation in the origin of the same function.  Thereby we dealt with the real case ($\beta=1$, $\St^{(1)}(n,m)=\SO(n)/\SO(m)$), the complex case ($\beta=2$, $\St^{(2)}(n,m)=\rmU(n)/\rmU(m)$), and the quaternion case ($\beta=4$, $\St^{(4)}(n,m)=\rmUSp(n)/\rmUSp(m)$) in a unifying way. The special case $m=n-1$ recovers the classical Pizzetti formula over the unit sphere.

We found:
\begin{itemize}
\item[1)] A formula in terms of the traces of the powers of the multiplication of the gradient with its conjugate for $\beta$ equal to $2$ or $4$, summarized in subsection \ref{newPiz}.
\item[2)] A very compact and explicit formula in terms of two such traces for $m= n-2$ and all $\beta$.  These formulae were rigorously proven in theorems~\ref{Pizzettinew}, \ref{stiefelbeta2}, and \ref{stiefelbeta4}.
\item[3)] A recursion of the differential operators, which holds for arbitrary $n$, $m$, $\beta$ and is summarized in theorem~\ref{recursion-theorem}.
\item[4)] An interesting Howe dual pair associated to each Stiefel manifold and an alternative interpretation of our integral formulae in that context.
\end{itemize} 
The recursions mentioned in point 3) are reminiscent to those recursions found in Refs.~\cite{MR1893696,BergereEynard} since the main idea is similar in all these approaches by splitting off the columns of the group elements in the fundamental representation one by one which are essentially integrations over spheres. In this way we applied the original Pizzetti formula, recursively.

We applied the Pizzetti  formula in point 2) for the case $\St^{(1)}(4,2)$ to an Itzykson-Zuber integral frequently appearing in random matrix theory. Thereby we found an expression more compact than the already known expressions which were derived by recursions~\cite{Heiner-thesis,GuhrKohler-b}. We hope that this expression might help in solving the tremendously complicated calculation of the two-point correlation function of the transition between Poisson and GOE statistics, see Refs.~\cite{GM-GW,Haakebook}. We underline that our result can be certainly improved since it still depends on the Laplace operator known for diagonalized real symmetric random matrices~\cite{ItzZub,MR1893696} and on the Schur polynomials.

Moreover our results may also shed some light on the explicit form of the Jack polynomials corresponding to the Dyson index $\beta=1$ and $\beta=4$. Up to now, only recursive formulae are known of these polynomials, see Refs.~\cite{Muirhead,OkoOls}. Especially the algebraic structures  we derived in terms of determinants and Pfaffian determinants might be helpful for this task. We emphasize that the formulae~\eqref{Pizzetti-beta2}, \eqref{Pizzetti-beta4-ev}, and \eqref{Pizzetti-beta4-odd} [corresponding to point 1)] reflect many of the algebraic structures found for many group integrals which were successfully calculated before. Though these formulae are not rigorously proven we are nonetheless quite confident that also these conjectures hold for group integrals over polynomials and Schwartz functions.

\section*{Acknowledgements}
KC is a Post-doctoral Fellow of the Research Foundation - Flanders (FWO). MK acknowledges financial support by the Alexander-von-Humboldt foundation.

\appendix

\section{Derivation of Pizzetti formula for $\St^{(4)}(m,n)$}\label{app1}

We want to simplify integral \eqref{groupint} for $\beta=4$,
\begin{eqnarray}\label{groupint-bet4.a}
 \widehat{\Psi}^{(4)}_{n,m}(B)=\frac{\int_{\Gl^{(4)}(n,n-m)}d[A]\delta(A^\dagger A-\eins_{\gamma (n-m)})\exp\left[\imath\tr B^\dagger A/2\right]}{\int_{\Gl^{(4)}(n,n-m)}d[A]\delta(A^\dagger A-\eins_{\gamma (n-m)})},
\end{eqnarray}
to obtain the Pizzetti formulae \eqref{Pizzetti-beta4-ev} and \eqref{Pizzetti-beta4-odd}. Recall from Eq.~\eqref{descimH} that the complex conjugation of $B$ is $B^*=\tau_2^{(n)}B\tau_2^{(n-m)}$. We extend the matrix $B$ to the $2n\times 2n$ square matrix $\widehat{B}=\left[\begin{array}{cc} B & 0 \end{array}\right]\in \Gl^{(4)}(n,0)$ and the integral to one over the full group $\rmUSp(2n)$ such that we have
\begin{eqnarray}\label{groupint-bet4.g}
 \widehat{\Psi}^{(4)}_{n,m}(B)=\frac{\int_{\rmUSp(2n)}d\mu(U)\exp\left[\imath\tr \widehat{B}^\dagger U/2\right]}{\int_{\rmUSp(2n)}d\mu(U)} =\widehat{\Psi}^{(4)}_{n,0}(\widehat{B}).
\end{eqnarray}
This group integral can be understood as the result of a saddlepoint approximation of a group integral over $\rmU(2n)$ via the limit
\begin{eqnarray}\label{groupint-bet4.c}
 \widehat{\Psi}^{(4)}_{n,m}(B)=\lim_{N\to\infty}\frac{\int_{\rmU(2n)}d\mu(U){\det}^{-2N}U\exp\left[N\tr U\tau_2^{(n)}U^T\tau_2^{(n)}+\imath\tr \widehat{B}^\dagger U/2\right]}{\int_{\rmU(2n)}d\mu(U){\det}^{-2N}U\exp\left[N\tr U\tau_2^{(n)}U^T\tau_2^{(n)}\right]}
\end{eqnarray}
because the saddlepoint equation is $\tau_2^{(n)}U^T\tau_2^{(n)}=U^{-1}$, see Ref.~\cite{KVZ14}.  The auxiliary variable $N$ can be chosen as a positive integer such we can apply a reversed version of the superbosonization formula \cite{Sommers,LSZ} and replace the integral over $U\in\rmU(2n)$ by an integral over a complex rectangular $2N\times2n$ matrix $V$ whose matrix elements are Grassmann variables (anti-commuting variables), only. For an introduction to superanalysis we refer to the textbook by Berezin~\cite{Berezin}.

The integral reads now
\begin{eqnarray}\label{groupint-bet4.h}
 \widehat{\Psi}^{(4)}_{n,m}(B)=\lim_{N\to\infty}\frac{\int d[V]\exp\left[-N\tr V^\dagger V\tau_2^{(n)}V^TV^*\tau_2^{(n)}+\imath\tr \widehat{B}^\dagger V^\dagger V/2\right]}{\int d[V]\exp\left[-N\tr V^\dagger V\tau_2^{(n)}V^TV^*\tau_2^{(n)}\right]}.
\end{eqnarray}
We linearize the quartic terms in $V$ by an auxiliary complex symmetric matrix $H\in{\rm Sym}_{\mathbb{C}}(2N)=\{K\in\mathbb{C}^{2N\times2N}|K=K^T\}$ such that the expression for $\widehat{\Psi}^{(4)}_{n,m}(B)$ becomes 
\begin{equation}
\lim_{N\to\infty}\frac{\int d[V]\int_{{\rm Sym}_{\mathbb{C}}(2N)} d[H]\exp\left[-\tr H^\dagger H/N+\tr HV\tau_2^{(n)}V^T+\tr H^\dagger V^*\tau_2^{(n)}V^\dagger+\imath\tr \widehat{B}^\dagger V^\dagger V/2\right]}{\int d[V]\int_{{\rm Sym}_{\mathbb{C}}(2N)} d[H]\exp\left[-\tr H^\dagger H/N+\tr HV\tau_2^{(n)}V^T+\tr H^\dagger V^*\tau_2^{(n)}V^\dagger\right]}.
 \label{groupint-bet4.e}
\end{equation}
The integral over $V$ yields a Pfaffian,
\begin{eqnarray}\label{int-Pfaff}
 &&\int d[V]\exp\left[\tr HV\tau_2^{(n)}V^T+\tr H^\dagger V^*\tau_2^{(n)}V^\dagger+\imath\tr \widehat{B}^\dagger V^\dagger V/2\right]\\
 &\propto& \Pf\left[\begin{array}{cc} H\otimes\tau_2^{(n)} & \displaystyle -\frac{\imath}{4}\eins_{2N}\otimes  \widehat{B}^\dagger \\ \displaystyle \frac{\imath}{4}\eins_{2N}\otimes \tau_2^{(n)}  \widehat{B}\tau_2^{(n)}   & H^\dagger\otimes\tau_2^{(n)}  \end{array}\right]
\propto \Pf\left[H^\dagger H\otimes\tau_2^{(n)}-\frac{1}{16}\eins_{2N}\otimes \widehat{B}^\dagger\widehat{B} \tau_2^{(n)}\right].\nonumber
\end{eqnarray}
Notice that this expression makes it obvious that the integral~\eqref{groupint-bet4.a} only depends on the singular values $\Lambda={\rm diag}(\Lambda_1,\ldots,\Lambda_{n-m})$ of $B$, which are all Kramer's degenerate.

After diagonalizing $H=\widetilde{U}E\widetilde{U}^T$ with $E={\rm diag}(E_1,\ldots,E_{2N})\in\mathbb{R}_+^{2N}$ and $\widetilde{U}\in\rmU(2N)$ and integrating over $\widetilde{U}$, the integral~\eqref{groupint-bet4.a} is equal to a partition function and reads
\begin{eqnarray}
 \widehat{\Psi}^{(4)}_{n,m}(B)&=&\lim_{N\to\infty}\frac{\int_{\mathbb{R}_+^{2N}} d[E]|\Delta_{2N}(E^2)|{\det}^{2m}E\exp\left[-\tr E^2/N\right]\prod_{j=1}^{n-m}\det(E^2-\Lambda_j^2/16\eins_{2N})}{\int_{\mathbb{R}_+^{2N}} d[E]|\Delta_{2N}(E^2)|{\det}^{2n}E\exp\left[-\tr E^2/N\right]}.\label{groupint-bet4.f}
\end{eqnarray}

\textbf{For $\mathbf{n-m}$ even}, this kind of partition function is equal to the one of the real Laguerre ensemble~\cite{Mehtabook} and its result is well-known in terms of a Pfaffian,
\begin{eqnarray}
\widehat{\Psi}^{(4)}_{n,m}(B)&=&\frac{1}{\Delta_{n-m}(\Lambda^2/16)}\lim_{N\to\infty}\frac{I_{2N+n-m,m}}{I_{2N,n}}\left(\frac{I_{2N+n-m-2,m+2}}{I_{2N+n-m,m}}\right)^{(n-m)/2}N^{(n-m)(n-m-2)/2}\nonumber\\
&&\times\Pf\left[\frac{\Lambda_a^2-\Lambda_b^2}{16}\widetilde{\Psi}^{(4,2N+n-m-2)}_{m+2,m}\left(\Lambda_a^2,\Lambda_b^2\right)\right]_{1\leq a,b\leq n-m},\label{Pfaffeven-N}
\end{eqnarray}
where we used the finite-$N$ partition function
\begin{eqnarray}\label{partfunc-N}
&&\widetilde{\Psi}^{(4,2N+n-m-2)}_{m+2,m}\left(\Lambda_a^2,\Lambda_b^2\right)\\
&:=&\frac{\int_{\mathbb{R}_+^{2N+n-m-2}} d[E]|\Delta_{2N+n-m-2}(E^2)|{\det}^{2m}E\exp\left[-\tr E^2/N\right]\prod_{j=a,b}\det(E^2-\Lambda_j^2/16\eins_{2N+n-m-2})}{\int_{\mathbb{R}_+^{2N+n-m-2}} d[E]|\Delta_{2N+n-m-2}(E^2)|{\det}^{2m+4}E\exp\left[-\tr E^2/N\right]}\nonumber
\end{eqnarray}
and the Selberg-integral
\begin{eqnarray}\label{partfuncconst-N}
I_{l,n}=\int_{\mathbb{R}_+^{l}} d[E]|\Delta_{l}(E^2)|{\det}^{2n}E\exp\left[-\tr E^2\right]=\left(\frac{4}{\pi}\right)^{l/2}\prod\limits_{j=1}^{l}\Gamma\left[1+\frac{j}{2}\right]\Gamma\left[n+\frac{j}{2}\right],
\end{eqnarray}
see~\cite{Mehtabook}. The limit $N\to \infty$ yields
\begin{eqnarray}\label{Pfaffeven-N.b}
\widehat{\Psi}^{(4)}_{n,m}(B)=\left(\prod\limits_{j=1}^{n-m}\frac{2^{-2j-1}\Gamma(2j+2m-1)}{\sqrt{\Gamma(2m+3)\Gamma(2m+1)}}\right)\frac{1}{\Delta_{n-m}(\Lambda^2/16)}\Pf\left[\frac{\Lambda_a^2-\Lambda_b^2}{16}\widetilde{\Psi}^{(4)}_{m+2,m}\left(\Lambda_a^2,\Lambda_b^2\right)\right]_{1\leq a,b\leq n-m},
\end{eqnarray}
with $\widetilde{\Psi}^{(4)}_{m+2,m}$ as introduced in Eq.~\eqref{phitilde}.

\textbf{For $\mathbf{n-m}$ odd}, we add an additional $\Lambda_{0}^2$ (after which we take its limit to infinity) in the finite-$N$ expression~\eqref{groupint-bet4.f}, in order to have an even number,
\begin{eqnarray}
 \widehat{\Psi}^{(4)}_{n,m}(B)=\lim_{N\to\infty}\lim_{\Lambda_{0}\to\infty}\frac{\int_{\mathbb{R}_+^{2N}} d[E]|\Delta_{2N}(E^2)|{\det}^{2m}E\exp\left[-\tr E^2/N\right]\prod_{j=0}^{n-m}\det(E^2-\Lambda_j^2/16\eins_{2N})}{(\Lambda_{0}/4)^{4N}\int_{\mathbb{R}_+^{2N}} d[E]|\Delta_{2N}(E^2)|{\det}^{2n}E\exp\left[-\tr E^2/N\right]}.\label{groupint-bet4.odd}
\end{eqnarray}
We underline that the order of the limits is crucial and cannot be switched. Now we can apply the intermediate result~\eqref{Pfaffeven-N} and take the limit $\Lambda_{0}\to\infty$, such that we find $\widehat{\Psi}^{(4)}_{n,m}(B)=$
\begin{eqnarray}
&&\frac{1}{\Delta_{n-m}(\Lambda^2/16)}\lim_{N\to\infty}\lim_{\Lambda_{0}\to\infty}\frac{1}{(\Lambda_{0}/4)^{4N}\det(\Lambda_0^2/16\eins_{n-m}-\Lambda^2/16)}\left(\frac{I_{2N+n-m-1,m+2}}{I_{2N+n-m+1,m}}\right)^{(n-m+1)/2}\nonumber\\
&&\times \frac{I_{2N+n-m+1,m}}{I_{2N,n}}N^{((n-m)^2-1)/2}\Pf\left[\frac{\Lambda_a^2-\Lambda_b^2}{16}\widetilde{\Psi}^{(4,2N+n-m-1)}_{m+2,m}\left(\Lambda_a^2,\Lambda_b^2\right)\right]_{0\leq a,b\leq n-m}\nonumber\\
&=&\frac{1}{\Delta_{n-m}(\Lambda^2/16)}\lim_{N\to\infty}\left(\frac{I_{2N+n-m-1,m+2}}{I_{2N+n-m+1,m}}\right)^{(n-m-1)/2}\frac{I_{2N+n-m-1,m+1}}{I_{2N,n}}N^{(n-m-1)^2/2}\label{Pfaffodd-N}\\
&&\times \Pf\left[\begin{array}{cc} 0 & \displaystyle\left\{\widetilde{\Psi}^{(4,2N+n-m-1)}_{m+1,m}\left(\Lambda_b^2\right)\right\}_{1\leq b\leq n-m} \nonumber\\ \displaystyle\left\{-\widetilde{\Psi}^{(4,2N+n-m-1)}_{m+1,m}\left(\Lambda_a^2\right)\right\}_{1\leq a\leq n-m} & \displaystyle\left\{\frac{\Lambda_a^2-\Lambda_b^2}{16}\widetilde{\Psi}^{(4,2N+n-m-1)}_{m+2,m}\left(\Lambda_a^2,\Lambda_b^2\right)\right\}_{1\leq a,b\leq n-m} \end{array}\right]
\end{eqnarray}
with
\begin{eqnarray}\label{partfunc-N.b}
&&\widetilde{\Psi}^{(4,2N+n-m-1)}_{m+1,m}\left(\Lambda_a^2\right)\\
&:=&\frac{\int_{\mathbb{R}_+^{2N+n-m-1}} d[E]|\Delta_{2N+n-m-1}(E^2)|{\det}^{2m}E\exp\left[-\tr E^2/N\right]\det(E^2-\Lambda_a^2/16\eins_{2N+n-m-2})}{\int_{\mathbb{R}_+^{2N+n-m-1}} d[E]|\Delta_{2N+n-m-1}(E^2)|{\det}^{2m+2}E\exp\left[-\tr E^2/N\right]}.\nonumber
\end{eqnarray}
Therefore we end up with
\begin{eqnarray}
\widehat{\Psi}^{(4)}_{n,m}(B)&=&\left(\prod\limits_{j=1}^{n-m-1}\frac{2^{-2j+1}\Gamma(2j+2m+1)}{\sqrt{\Gamma(2m+3)\Gamma(2m+1)}}\right)\frac{1}{\Delta_{n-m}(\Lambda^2/16)}\nonumber\\
&&\times \Pf\left[\begin{array}{cc} 0 & \displaystyle\left\{\widetilde{\Psi}^{(4)}_{m+1,m}\left(\Lambda_b^2\right)\right\}_{1\leq b\leq n-m} \\ \displaystyle\left\{-\widetilde{\Psi}^{(4)}_{m+1,m}\left(\Lambda_a^2\right)\right\}_{1\leq a\leq n-m} & \displaystyle\left\{\frac{\Lambda_a^2-\Lambda_b^2}{16}\widetilde{\Psi}^{(4)}_{m+2,m}\left(\Lambda_a^2,\Lambda_b^2\right)\right\}_{1\leq a,b\leq n-m} \end{array}\right].\label{Pfaffodd-N.b}
\end{eqnarray}
Note that by Eq. \eqref{connPizz2}, $\widetilde{\Psi}^{(4)}_{m+1,m}=\Psi_{2m+1}$, but we still have to calculate $\widetilde{\Psi}^{(4)}_{m+2,m}$, which will be done at the end of this appendix.

The diagonal $(n-m)\times(n-m)$ matrix $\Lambda$ comprises all singular values of the $2(n-m)\times2(n-m)$ Hermitian self-dual matrix $B^\dagger B$. We rewrite the function $\widehat{\Psi}^{(4)}_{n,m}$ in terms of the full matrix $B$ by extending the ratio with an additional Vandermonde determinant. Now we employ the identity $\det A \Pf C=\Pf A^T CA$ for an arbitrary square matrix $A$ and antisymmetric matrix $C$. Then, \textbf{for $\mathbf{n-m}$ even}, we have
\begin{eqnarray}\label{groupintsymplectic-even.b}
 \widehat{\Psi}^{(4)}_{n,m}(B)=\left(\prod\limits_{j=1}^{n-m}\frac{2^{2j-5}\Gamma(2j+2m-1)}{\sqrt{\Gamma(2m+3)\Gamma(2m+1)}}\right)\frac{\Pf\left[\tr G_{ab}^{(m)}(B^\dagger B)\right]_{1\leq a,b\leq n-m}}{\det\left[\tr(B^\dagger B)^{(a+b-2)}\right]_{1\leq a,b\leq n-m}}
\end{eqnarray}
and \textbf{for $\mathbf{n-m}$ odd} we have
\begin{eqnarray}\label{groupintsymplectic-odd.b}
 \widehat{\Psi}^{(4)}_{n,m}(B)&=&\left(\prod\limits_{j=1}^{n-m-1}\frac{2^{2j+1}\Gamma(2j+2m+1)}{\sqrt{\Gamma(2m+3)\Gamma(2m+1)}}\right)\\
 &&\times\frac{\Pf\left[\begin{array}{cc} 0 & \left\{\tr g_{b}^{(m)}(B^\dagger B)\right\}_{1\leq b\leq n-m} \\ \left\{-\tr g_{a}^{(m)}(B^\dagger B)\right\}_{1\leq a\leq n-m} & \left\{\tr G_{ab}^{(m)}(B^\dagger B)\right\}_{1\leq a,b\leq n-m} \end{array}\right]}{\det\left[\tr(B^\dagger B)^{(a+b-2)}\right]_{1\leq a,b\leq n-m}},\nonumber
\end{eqnarray}
where we used two matrix valued functions defined as in Eq.~\eqref{aux-func}.
We underline that the function $G_{ab}^{(m)}$  acts in the tensor space $\mathbb{C}^{2(n-m)\times2(n-m)}\otimes\mathbb{C}^{2(n-m)\times2(n-m)}$ such that the trace $\tr G_{ab}^{(m)}(B^\dagger B)$ is the trace in this tensor space.

Finally we show that the function $\widetilde{\Psi}^{(4)}_{m+2,m}$ can be expressed in terms of the Bessel function $\Psi_\nu$. For this reason we define $\Lambda_{ab}={\rm diag}(\Lambda_a,\Lambda_a,\Lambda_b,\Lambda_b)$. The function $\widetilde{\Psi}^{(4)}_{m+2,m}$ is given by 
\begin{eqnarray}\label{def-psi-tilde}
 \widetilde{\Psi}^{(4)}_{m+2,m}(\Lambda_a,\Lambda_b)&=&\frac{\int_{\Gl^{(4)}(m+2,2)}d[A]\delta(A^\dagger A-\eins_{4})\exp\left[\imath\tr A\Lambda_{ab}/2\right]}{\int_{\Gl^{(4)}(m,2)}d[A]\delta(A^\dagger A-\eins_{4})}\\
 &=&\frac{\int_{\Herm^{(4)}(2)} d[H]\exp\left[\tr(\eins_4-\imath H)/2-\tr \Lambda_{ab}^2(\eins_4-\imath H)^{-1}/8\right]{\det}^{-m-2}(\eins_4-\imath H)}{\int_{\Herm^{(4)}(2)} d[H]\exp\left[\tr(\eins_4-\imath H)/2\right]{\det}^{-m-2}(\eins_4-\imath H)}.\nonumber
\end{eqnarray}
The second line is the intermediate result~\eqref{groupint-bet4.b}. The diagonalization of the $4\times4$ Hermitian self-dual matrix $H=U{\rm diag}(E_1,E_1,E_2,E_2)U^\dagger$ yields an Itzykson~Zuber integral which is well known \cite{MR1893696}, i.e.
\begin{eqnarray}\label{def-psi-tilde.b}
 \widetilde{\Psi}^{(4)}_{m+2,m}(\Lambda_a,\Lambda_b)&\propto&\int_{\mathbb{R}^2} dE_1dE_2 (E_1-E_2)(1-\imath E_1)^{-2(m+1)}(1-\imath E_2)^{-2(m+1)}\exp\left[2-\imath (E_1+E_2)\right]\\
 &&\hspace*{-2cm}\times \left[\frac{E_1-E_2}{(\Lambda_a^2-\Lambda_b^2)^2}\left(\exp\left[-\frac{\Lambda_a^2}{4(1-\imath E_1)}-\frac{\Lambda_b^2}{4(1-\imath E_2)}\right]+\exp\left[-\frac{\Lambda_b^2}{4(1-\imath E_1)}-\frac{\Lambda_a^2}{4(1-\imath E_2)}\right]\right)\right.\nonumber\\
 &&\hspace*{-2cm}\left.-8\imath\frac{(1-\imath E_1)(1-\imath E_2)}{(\Lambda_a^2-\Lambda_b^2)^3}\left(\exp\left[-\frac{\Lambda_a^2}{4(1-\imath E_1)}-\frac{\Lambda_b^2}{4(1-\imath E_2)}\right]-\exp\left[-\frac{\Lambda_b^2}{4(1-\imath E_1)}-\frac{\Lambda_a^2}{4(1-\imath E_2)}\right]\right)\right]\nonumber\\
 &\propto&\frac{1}{(\Lambda_a^2-\Lambda_b^2)^2}\left(\frac{\Psi_{2m-1}(\Lambda_a)\Psi_{2m+1}(\Lambda_b)}{(2m-1)!(2m+1)!}-\frac{2\Psi_{2m}(\Lambda_a)\Psi_{2m}(\Lambda_b)}{(2m)!(2m)!}+\frac{\Psi_{2m+1}(\Lambda_a)\Psi_{2m-1}(\Lambda_b)}{(2m+1)!(2m-1)!}\right)\nonumber\\
 &&-\frac{8}{(\Lambda_a^2-\Lambda_b^2)^3}\left(\frac{\Psi_{2m-1}(\Lambda_a)\Psi_{2m}(\Lambda_b)}{(2m-1)!(2m)!}-\frac{\Psi_{2m}(\Lambda_a)\Psi_{2m-1}(\Lambda_b)}{(2m)!(2m-1)!}\right)\nonumber.
\end{eqnarray}
In the last line we employed an integral representation of the rescaled Bessel function~\eqref{renormBessel}.

\section{Derivation of the Pizzetti formula for $\St^{(\beta)}(n,n-2)$}\label{app0}

In this appendix we demonstrate another approach to calculate the Pizzetti formulae for all three Dyson indices in a unifying way, specific to the case $m=n-2$. The idea is to rewrite the Dirac delta function in Eq.~\eqref{groupint} as the Fourier transform of a constant function. 

Concretely, for $\beta=2$ we consider 
\begin{equation}
\label{deltaasFT}
\delta(A^\dagger A-\eins_2)\propto \int_{\Herm^{(2)}(2)} d[H]\exp\left[-\tr(A^\dagger A-\eins_{2})(\eins_2-\imath H)\right],
\end{equation}
where the constant shift in the matrix $H$ is introduced to guarantee the absolute integrability of the integral over $A$ when used in Eq.~\eqref{groupint}. Note that we suppress the normalizations temporarily in order to re-introduce it later on. Using Eq.~\eqref{deltaasFT} in Eq.~\eqref{groupint} yields
\begin{eqnarray}\label{groupint-bet2.b}
 \widehat{\Psi}^{(2)}_{n,n-2}(B)\propto\int_{\Herm^{(2)}(2)} d[H]\exp\left[\tr(\eins_2-\imath H)-\frac{1}{4}\tr B^\dagger B(\eins_2-\imath H)^{-1}\right]{\det}^{-n}(\eins_2-\imath H).
\end{eqnarray}
Now we  can expand the exponential function incorporating the inverse matrix $(\eins_2-\imath H)^{-1}$. Then we employ the property that for arbitrary invertible $M\in\mC^{2\times 2}$, we have $\tr M=\tr(M^{-1})\det M$.
In our case this relation reads
\begin{eqnarray}\label{2-d-rel}
 \tr B^\dagger B(\eins_2-\imath H)^{-1}=\frac{\det B^\dagger B}{\det(\eins_2-\imath H)}\tr (B^\dagger B)^{-1}(\eins_2-\imath H).
\end{eqnarray}
Notice that we assume that $B^\dagger B$ is invertible. This assumption is not a restriction because the invertible dyadic matrices are dense in the set of all dyadic matrices since $n\geq2$ and the non-invertible matrices are a set of measure zero. Hence $\widehat{\Psi}^{(2)}_{n,n-2}(B)$ is proportional to
$$ \sum_{j=0}^\infty\frac{(-1)^j}{4^j}{\det}^jB^\dagger B\frac{1}{j!}\left.\frac{\partial^j}{\partial\mu^j}\right|_{\mu=0}\int_{\Herm^{(2)}(2)} d[H]\exp\left[\tr(\eins_2-\imath H)(\eins_2+\mu (B^\dagger B)^{-1})\right]{\det}^{-n-j}(\eins_2-\imath H)$$ \vspace{-4mm}
\begin{eqnarray}
\hspace{-14mm}\propto\sum_{j=0}^\infty\frac{(-1)^j}{4^j\Gamma(n+j)\Gamma(n+j-1)}{\det}^{2-n}B^\dagger B\frac{1}{j!}\left.\frac{\partial^j}{\partial\mu^j}\right|_{\mu=0}{\det}^{n+j-2}(B^\dagger B+\mu\eins_2),\label{groupint-bet2.c}
\end{eqnarray}
where we already integrated over $H$ in the last line by performing an Ingham-Siegel integral \cite{Ingham,Siegel}. Expanding the determinant we have
\begin{eqnarray}\label{groupint-bet2.d}
 \widehat{\Psi}^{(2)}_{n,n-2}(B) &=&\sum_{j=0}^\infty\sum_{l=0}^{\lfloor j/2\rfloor}\frac{(-1)^j\Gamma(n)\Gamma(n-1)}{4^j\Gamma(n+j)\Gamma(n-1+l)(j-2l)!l!}\tr^{j-2l}B^\dagger B{\det}^{l}B^\dagger B,
\end{eqnarray}
which yields Eq.~\eqref{Pizzetti-beta2-2bb} for $\beta=2$ after replacing $B$ by the gradient in the matrix $A$. 

This calculation can be readily extended to the real case $\beta=1$. Now we introduce an auxiliary matrix $H\in\mR^{2\times2}$ and find
\begin{eqnarray}\label{groupint-bet1.b}
 \widehat{\Psi}^{(1)}_{n,n-2}(B)\propto\int_{\Herm^{(1)}(2)} d[H]\exp\left[\tr(\eins_2-\imath H)-\frac{1}{4}\tr B^T B(\eins_2-\imath H)^{-1}\right]{\det}^{-n/2}(\eins_2-\imath H).
\end{eqnarray}
Again we expand the exponential function in $\tr B^T B(\eins_2-\imath H)^{-1}$ and apply relation~\eqref{2-d-rel}. As the analogue of the result in~\eqref{groupint-bet2.d} we therefore obtain
\begin{eqnarray}\label{groupint-bet1.d}
 \widehat{\Psi}^{(1)}_{n,n-2}(B) &=&\sum_{j=0}^\infty\sum_{l=0}^{\lfloor j/2\rfloor}\frac{(-1)^j\Gamma(n/2)\Gamma((n-1)/2)}{4^j\Gamma(n/2+j)\Gamma((n-1)/2+l)(j-2l)!l!}\tr^{j-2l}B^T B\,\,{\det}^{l}B^T B,
\end{eqnarray}
after performing the Ingham-Siegel integral \cite{Ingham,Siegel}. Replacing $B$ by the gradient in $A$ we find the result~\eqref{Pizzetti-beta2-2bb} for $\beta=1$.

For $\beta=4$, we take $H$ a Hermitian self-dual $4\times4$ matrix. Then the integral~\eqref{groupint} becomes
\begin{eqnarray}\label{groupint-bet4.b}
 \widehat{\Psi}^{(4)}_{n,n-2}(B)\propto\int_{\Herm^{(4)}(2)} d[H]\exp\left[\frac{1}{2}\tr(\eins_4-\imath H)-\frac{1}{8}\tr B^\dagger B(\eins_4-\imath H)^{-1}\right]{\det}^{-n}(\eins_4-\imath H).
\end{eqnarray}
When expanding the exponential in the term $\tr B^\dagger B(\eins_4-\imath H)^{-1}$, we have to modify the relation~\eqref{2-d-rel}, since the inverse of a $4\times4$ matrix of the form
\begin{eqnarray}\label{matele4}
M=\left[\begin{array}{cccc} a & 0 & z_1 & z_2 \\ 0 & a & -z_2^* & z_1^* \\ z_1^* & -z_2 & b & 0 \\ z_2^* & z_1 & 0 & b \end{array}\right]
\end{eqnarray}
is
\begin{eqnarray}\label{matinv4}
M^{-1}=\frac{1}{ab-|z_1|^2-|z_2|^2}\left[\begin{array}{cccc} b & 0 & -z_1 & -z_2 \\ 0 & b & z_2^* & -z_1^* \\ -z_1^* & z_2 & a & 0 \\ -z_2^* & -z_1 & 0 & a \end{array}\right].
\end{eqnarray}
Thus we find $\tr M=\Pf(IM)\,\tr M^{-1}$, so
\begin{eqnarray}\label{2-d-rel-b}
 \tr B^\dagger B(\eins_2-\imath H)^{-1}=\frac{\Pf I B^\dagger B}{\Pf I (\eins_2-\imath H)}\tr (B^\dagger B)^{-1}(\eins_4-\imath H),
\end{eqnarray}
with the definitions of the constant matrix $I$, see Eq.~\eqref{I-def},  and the Pfaffian~\eqref{Pfaffiandef}. After performing the Ingham-Siegel integral for $\beta=4$ over the matrix $H$, we encounter the Pfaffian $\Pf I(\eins_4+\mu (B^\dagger B)^{-1})=1+\mu\tr(B^\dagger B)^{-1}/2+\mu^2\Pf I(B^\dagger B)^{-1}$. The result is 
\begin{eqnarray}\label{groupint-bet4.d}
 \widehat{\Psi}^{(4)}_{n,n-2}(B) &=&\sum_{j=0}^\infty\sum_{l=0}^{\lfloor j/2\rfloor}\frac{(-1)^j\Gamma(2n)\Gamma(2(n-1))}{4^j\Gamma(2n+j)\Gamma(2(n-1)+l)(j-2l)!l!}\left(\frac{\tr B^\dagger B}{2}\right)^{j-2l}{\Pf}^{l}(IB^\dagger B),
\end{eqnarray}
which becomes Eq.~\eqref{Pizzetti-beta4-2b}. The factor $1/2$ in front of the trace $\tr B^\dagger B$ normalizes the terms correctly since the singular values are Kramers degenerate.

\section{An abstract type of invariant integral}
\label{secprop}

In this section we consider two variables $u,v\in\mR^{km}$ for $k,m\in\mN$, along with a set of $k$ real $km\times km$-matrices $\{J^{(0)},J^{(1)},\cdots, J^{(k-1)}\}$, which satisfy the following requirement:
\begin{equation}
\left(J^{(i)}\right)^T J^{(j)}+ \left(J^{(j)}\right)^T J^{(i)}=2\delta_{ij}\eins_{km},\qquad \mbox{for all}\quad i,j\in\{0,\cdots,k-1\}.
\end{equation}
In particular this implies that $J^{(j)}\in \SO(km)$ for $1\le j\le k$. Using the notation of the preliminaries we define two differential operators on $\mR^{km\times 2}$ as
\begin{equation}
\label{defAB}
A:= \Delta_u+\Delta_v\quad \mbox{and }\quad B:=\Delta_u\Delta_v-\sum_{j=0}^{k-1} \langle \nabla_u,J^{(j)}\nabla_v\rangle ^2.
\end{equation}

With this purpose in mind we consider the orthogonal transformation $u'=J^{(0)}u$, which shows that we could also have chosen $\{\eins_{km},\left(J^{(0)}\right)^TJ^{(1)},\cdots, \left(J^{(0)}\right)^TJ^{(k-1)}\}$ as an original choice of matrices. Thus we assume $J^{(0)}=\eins_{km}$ in the ensuing discussion without loss of generality, which implies that the $k-1$ remaining matrices $\{J^{(1)},\cdots, J^{(k-1)}\}$ satisfy 
\begin{equation}
\label{Clifford}
\begin{cases}
(J^{(i)})^T=-J^{(i)},&\mbox{for }i>0;\\
J^{(i)}J^{(j)}+J^{(j)}J^{(i)}=-2\delta_{ij}\eins_{km},&\mbox{for }\quad i,j\in\{1,\cdots, k-1\}.\end{cases}
\end{equation}

\begin{remark}
\label{remCli}
We note that the second condition in Eq. \eqref{Clifford} yields a Clifford algebra structure. Hence the definition of $k-1$ of such matrices can alternatively be described as an algebra morphism $\mR_{0,k-1}\to \End( \mR^{km})$ from the Clifford algebra $\mR_{0,k-1}$, with signature $0,k-1$, to the associative algebra of real $km\times  km$-matrices, which restricts to an injective morphism from $\mR^{k-1}$ (canonically embedded in $\mR_{0,k-1}$) to the space of anti-symmetric matrices. 
\end{remark}
\noindent The three cases that are explicitly used in the current paper satisfy $J^{(0)}=\eins_{km}$ and are the real ($\mR_0$), complex ($\mR_{0,1}$) and quaternion (a quotient of $\mR_{0,3}$) numbers.

With help of the matrices $J^{(i)}$ we can prove the following lemmas, propositions and theorems for general $k$.

\begin{lemma}
\label{Bu1}
For $l\in\mN$, the operator ${B}_u^{(l)}$ defined as
\begin{equation}
{B}^{(l)}_u=4l\left(\left(\mE_u+k\frac{m-1}{2}+l-1\right)\Delta_v-\sum_{j=0}^{k-1}\langle u,J^{(j)}\nabla_v\rangle \langle \nabla_u,J^{(j)}\nabla_v\rangle\right),
\end{equation}
satisfies $[B^l,u^2]={B}_u^{(l)}B^{l-1}$, with $B$ given in Eq. \eqref{defAB}.
\end{lemma}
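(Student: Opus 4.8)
The plan is to prove the identity by induction on $l$, after first recording the handful of commutation relations among the constituent first- and second-order operators that make the whole computation go through. Throughout I abbreviate $E_j:=\langle u,J^{(j)}\nabla_v\rangle$ and $D_j:=\langle \nabla_u,J^{(j)}\nabla_v\rangle$, so that $B=\Delta_u\Delta_v-\sum_{j=0}^{k-1}D_j^2$ and $B^{(l)}_u=4l\big((\mE_u+k\tfrac{m-1}{2}+l-1)\Delta_v-\sum_{j=0}^{k-1}E_jD_j\big)$. Since $\Delta_u,\Delta_v$ and all the $D_j$ have constant coefficients they mutually commute, so in particular $[B,D_j]=0$ and $[B,\Delta_v]=0$; the only operators that fail to commute with $B$ are the multiplication operator $u^2$ and the first-order operators $E_j$ and $\mE_u$.

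First I would settle the base case $l=1$, where $B^{l-1}=\eins$ and the claim reads $[B,u^2]=B^{(1)}_u$. The ingredients are $[\Delta_u,u^2]=4\mE_u+2km$ (the instance of Lemma \ref{commrel} for the $km$-dimensional variable $u$), $[D_j,u^2]=2E_j$, and the bracket
\begin{equation}
[D_i,E_j]=\sum_{b,d}\big((J^{(i)})^TJ^{(j)}\big)_{bd}\,\partial_{v_b}\partial_{v_d}=\delta_{ij}\Delta_v,
\end{equation}
where the last equality is exactly the point at which the defining relation $(J^{(i)})^TJ^{(j)}+(J^{(j)})^TJ^{(i)}=2\delta_{ij}\eins_{km}$ enters, since only the symmetric part of $(J^{(i)})^TJ^{(j)}$ survives against $\partial_{v_b}\partial_{v_d}$. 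From these one gets $[D_j^2,u^2]=2(D_jE_j+E_jD_j)=4E_jD_j+2\Delta_v$, and assembling $[B,u^2]=[\Delta_u\Delta_v,u^2]-\sum_j[D_j^2,u^2]$ yields $4\mE_u\Delta_v+2k(m-1)\Delta_v-4\sum_jE_jD_j$, which is precisely $B^{(1)}_u$.

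The structural facts driving the induction are three: $[\Delta_v,B]=0$, the degree count $[\mE_u,B]=-2B$ (because $B$ is homogeneous of degree $-2$ in $u$), and crucially $[B,E_j]=0$ for every $j$. The last of these is the main obstacle; I would prove it by the cancellation
\begin{equation}
[B,E_j]=[\Delta_u\Delta_v,E_j]-\sum_{i=0}^{k-1}[D_i^2,E_j]=2D_j\Delta_v-\sum_{i=0}^{k-1}2\delta_{ij}D_i\Delta_v=2D_j\Delta_v-2D_j\Delta_v=0,
\end{equation}
using $[\Delta_u,E_j]=2D_j$ together with the same bracket $[D_i,E_j]=\delta_{ij}\Delta_v$ above. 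Consequently $\sum_jE_jD_j$ commutes with $B$ as well (since $[B,D_j]=0$), so writing $S:=\mE_u\Delta_v-\sum_jE_jD_j$, one has the single clean relation $[B,S]=[B,\mE_u]\,\Delta_v=2\Delta_v B$.

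Finally I would run the induction via $[B^l,u^2]=B\,[B^{l-1},u^2]+[B,u^2]\,B^{l-1}$. Writing $B^{(l)}_u=4l\big(S+(k\tfrac{m-1}{2}+l-1)\Delta_v\big)$ and assuming $[B^{l-1},u^2]=B^{(l-1)}_uB^{l-2}$, the only non-trivial step is to commute $B$ past $B^{(l-1)}_u$, which by $[B,S]=2\Delta_v B$ and $[B,\Delta_v]=0$ gives $[B,B^{(l-1)}_u]=8(l-1)\Delta_v B$. Substituting and collecting terms reduces the whole identity to the scalar bookkeeping $B^{(l-1)}_u+8(l-1)\Delta_v+B^{(1)}_u=B^{(l)}_u$, i.e. matching the coefficients of $S$ (namely $4(l-1)+4=4l$) and of $\Delta_v$ (namely $4(l-1)(k\tfrac{m-1}{2}+l-2)+8(l-1)+4\cdot k\tfrac{m-1}{2}=4l(k\tfrac{m-1}{2}+l-1)$). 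This completes the induction and hence the proof.
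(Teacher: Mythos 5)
Your proof is correct and takes essentially the same route as the paper: both establish $[B,u^2]=B_u^{(1)}$ and then the key bracket $[B,B_u^{(1)}]=8B\Delta_v$ (your $[B,S]=2\Delta_v B$ is the same identity, resting on the same cancellation via the relation $(J^{(i)})^TJ^{(j)}+(J^{(j)})^TJ^{(i)}=2\delta_{ij}\eins_{km}$, which you cleanly isolate as $[B,E_j]=0$), before assembling the general case. The only cosmetic difference is that you assemble by induction on $l$ with coefficient matching, whereas the paper iterates the two brackets directly to obtain $[B^l,u^2]=4l(l-1)\Delta_vB^{l-1}+lB_u^{(1)}B^{l-1}$.
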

\begin{proof}
Application of the commutation relations in Lemma \ref{commrel} yields
\begin{equation}
[B,u^2]=4\left(\left(\mE_u+k\frac{m-1}{2}\right)\Delta_v-\sum_{j=0}^{k-1}\langle u,J^{(j)}\nabla_v\rangle \langle \nabla_u,J^{(j)}\nabla_v\rangle\right),\label{a:eq1}
\end{equation}
which equals $B_u^{(1)}$. Further application of Lemma \ref{commrel} also implies
\begin{eqnarray}
[B,B_u^{(1)}]&=&8B\Delta_v-4[B,\sum_{j=0}^{k-1}\langle u,J^{(j)}\nabla_v\rangle \langle \nabla_u,J^{(j)}\nabla_v\rangle]\nonumber\\
&=&8B\Delta_v-8\sum_{j=0}^{k-1}\Delta_v \langle \nabla_u,J^{(j)}\nabla_v\rangle^2+8\sum_{i,j=0}^{k-1}\langle\nabla_u,J^{(i)}\nabla_v\rangle\langle J^{(i)}\nabla_v,J^{(j)}\nabla_v\rangle \langle \nabla_u,J^{(j)}\nabla_v\rangle\nonumber\\
&=&8B\Delta_v+16\sum_{0\leq i<j\leq k-1}\langle \nabla_u,J^{(i)}\nabla_v\rangle\langle J^{(i)}\nabla_v,J^{(j)}\nabla_v\rangle \langle \nabla_u,J^{(j)}\nabla_v\rangle\nonumber\\
&=&8B\Delta_v,\label{a:eq2}
\end{eqnarray}
where we employed Eq.~\eqref{Clifford} in the last line. Combining the two relations~\eqref{a:eq1} and \eqref{a:eq2} yields
\begin{equation}
[B^l,u^2]=4l(l-1)\Delta_vB^{l-1}+lB_u^{(1)}B^{l-1},
\end{equation}
which proves the lemma.
\end{proof}

\begin{lemma}
\label{AB1}
For $k,l\in\mN$, $A$ as introduced in equation \eqref{defAB} and $B^{(l)}_u$ as introduced in lemma \ref{Bu1}, the relation $[A^k,{B}^{(l)}_u]=8klA^{k-1}B$ holds.
\end{lemma}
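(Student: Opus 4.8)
The plan is to reduce the identity to two elementary facts and then induct on the exponent in $A^{k}$. First I would establish the base commutator $[A,B^{(l)}_u]=8lB$ together with the commutativity $[A,B]=0$; granting these two facts, the statement for an arbitrary exponent follows by a one-line induction, with the commutativity being exactly what is needed to slide $B$ past the powers of $A$ on the right-hand side.

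The commutativity $[A,B]=0$ is immediate: $A=\Delta_u+\Delta_v$ is a sum of the two constant-coefficient Laplacians, and $B=\Delta_u\Delta_v-\sum_{j=0}^{k-1}\langle\nabla_u,J^{(j)}\nabla_v\rangle^2$ is built only from $\Delta_u$, $\Delta_v$ and the constant-coefficient operators $\langle\nabla_u,J^{(j)}\nabla_v\rangle$, all of which commute with $\Delta_u$ and with $\Delta_v$. For the base commutator I would compute term by term, writing $B^{(l)}_u=4l\bigl((\mE_u+k(m-1)/2+l-1)\Delta_v-\sum_{j=0}^{k-1}\langle u,J^{(j)}\nabla_v\rangle\langle\nabla_u,J^{(j)}\nabla_v\rangle\bigr)$. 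The scalar shift $k(m-1)/2+l-1$ and the factor $\Delta_v$ it multiplies commute with $A$, so the first contribution reduces to $[A,\mE_u\Delta_v]=[\Delta_u,\mE_u]\Delta_v=2\Delta_u\Delta_v$, using the standard Euler relation $[\Delta_u,\mE_u]=2\Delta_u$. For the remaining terms I would use the $J$-twisted form of the fourth relation in Lemma~\ref{commrel}, namely $[\Delta_u,\langle u,J^{(j)}\nabla_v\rangle]=2\langle\nabla_u,J^{(j)}\nabla_v\rangle$ (valid since $J^{(j)}$ is a constant matrix), together with $[A,\langle\nabla_u,J^{(j)}\nabla_v\rangle]=0$, to obtain
\[
[A,\langle u,J^{(j)}\nabla_v\rangle\langle\nabla_u,J^{(j)}\nabla_v\rangle]=2\langle\nabla_u,J^{(j)}\nabla_v\rangle^{2}.
\]
Summing the contributions yields $[A,B^{(l)}_u]=4l\bigl(2\Delta_u\Delta_v-2\sum_{j=0}^{k-1}\langle\nabla_u,J^{(j)}\nabla_v\rangle^{2}\bigr)=8lB$.

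With these in hand I would finish by induction on the exponent. The case of exponent $1$ is exactly $[A,B^{(l)}_u]=8lB$. For the inductive step I would apply the derivation identity $[A^{k+1},B^{(l)}_u]=A[A^{k},B^{(l)}_u]+[A,B^{(l)}_u]A^{k}$, substitute the inductive hypothesis $[A^{k},B^{(l)}_u]=8klA^{k-1}B$ and the base commutator, giving $8klA^{k}B+8lBA^{k}$, and then invoke $[A,B]=0$ to rewrite $BA^{k}=A^{k}B$; this produces $8(k+1)lA^{k}B$, closing the induction.

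The only genuinely computational point is the base commutator, where some care is required with the mixed terms $\langle u,J^{(j)}\nabla_v\rangle\langle\nabla_u,J^{(j)}\nabla_v\rangle$; everything else is formal. I would emphasize one clean contrast with Lemma~\ref{Bu1}: here the Clifford relations~\eqref{Clifford} are \emph{not} needed, because each index $j$ is treated separately and no cross terms $\langle J^{(i)}\nabla_v,J^{(j)}\nabla_v\rangle$ ever appear, so the proof goes through for the individual $J^{(j)}$ without appeal to their algebra.
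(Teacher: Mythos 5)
Your proof is correct and follows essentially the same route as the paper's: the paper likewise reduces the claim to the base commutator $[A,B^{(l)}_u]=8lB$ (derived from the relations in Lemma~\ref{commrel}) together with the fact that $A$ and $B$ commute, and then concludes immediately. You merely fill in the details the paper leaves implicit — the term-by-term computation of the base commutator, the induction on the exponent, and the (correct) observation that the Clifford relations~\eqref{Clifford} play no role here, unlike in Lemma~\ref{Bu1}.
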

\begin{proof}
The equations in lemma~\ref{commrel} imply $[A,B^{(l)}_u]=8lB$. Since $A$ and $B$ commute, the proposed result follows immediately.
\end{proof}

\begin{proposition}
\label{prop}
The functional $T$ on polynomials on $\mR^{km\times 2}$ defined as
\begin{equation}
\label{eqprop}
T(f)=\sum_{j=0}^\infty\frac{\Gamma(km/2)}{4^j\Gamma(j+km/2)}\sum_{l=0}^{\lfloor j/2\rfloor}\frac{\Gamma\left(k(m-1)/2\right)}{\Gamma\left(l+k(m-1)/2\right)}\left(\frac{A^{j-2l}}{(j-2l)!}\frac{B^l}{l!}f\right)(0),\end{equation}
with operators $A$ and $B$ as introduced in equation \eqref{defAB}, satisfies $T(u^2f)=T(f)$.
\end{proposition}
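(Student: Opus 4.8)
The plan is to prove the single identity $T(u^2f)=T(f)$ by rewriting each summand of $T(u^2f)$ as a commutator and then resumming. Throughout I write $p:=j-2l\ge 0$ and abbreviate the coefficient in \eqref{eqprop} as
\begin{equation*}
c_{p,l}:=\frac{\Gamma(km/2)\,\Gamma(k(m-1)/2)}{4^{p+2l}\,\Gamma(p+2l+km/2)\,\Gamma(l+k(m-1)/2)\,p!\,l!},
\end{equation*}
so that $T(g)=\sum_{p,l\ge0}c_{p,l}\,(A^pB^lg)(0)$; for a fixed polynomial $f$ this is a finite sum, since $A^pB^l$ has order $2(p+2l)$, so all manipulations and reindexings below are legitimate. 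The starting observation is that $u^2$ vanishes at the origin, hence $(u^2\,A^pB^lf)(0)=0$ and therefore $(A^pB^l(u^2f))(0)=([A^pB^l,u^2]f)(0)$. Thus everything reduces to evaluating the commutators $[A^pB^l,u^2]$ at the origin.

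Splitting $[A^pB^l,u^2]=[A^p,u^2]B^l+A^p[B^l,u^2]$, I would treat the two pieces with the lemmas already available. For the first piece, Lemma \ref{commrel} gives $[A,u^2]=4\mE_u+2km$, and the ladder terminates after two steps, $(\mathrm{ad}_A)^2u^2=8\Delta_u$ and $(\mathrm{ad}_A)^3u^2=0$; the finite Leibniz expansion $[A^p,u^2]=\sum_{r\ge1}\binom{p}{r}((\mathrm{ad}_A)^ru^2)A^{p-r}$ then yields
\begin{equation*}
[A^p,u^2]=4p\Bigl(\mE_u+\tfrac{km}{2}\Bigr)A^{p-1}+4p(p-1)\Delta_u A^{p-2}.
\end{equation*}
For the second piece, Lemma \ref{Bu1} gives $[B^l,u^2]=B^{(l)}_uB^{l-1}$, and Lemma \ref{AB1} lets me move $A^p$ to the left of $B^{(l)}_u$, namely $A^pB^{(l)}_u=B^{(l)}_uA^p+8pl\,A^{p-1}B$, so that $A^p[B^l,u^2]=B^{(l)}_uA^pB^{l-1}+8pl\,A^{p-1}B^l$.

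The evaluation at the origin is now clean: any operator carrying a leftmost position factor annihilates a function at $0$, so $(\mE_u\,G)(0)=0$ and $(\langle u,J^{(i)}\nabla_v\rangle\,G)(0)=0$ for every polynomial $G$. This kills the $\mE_u$ term of $[A^p,u^2]B^l$ and, since $B^{(l)}_u$ now stands on the far left, it kills the entire $\langle u,J^{(i)}\nabla_v\rangle\langle\nabla_u,J^{(i)}\nabla_v\rangle$ part of $B^{(l)}_u$ together with its $\mE_u$ contribution. Collecting the survivors I expect
\begin{equation*}
([A^pB^l,u^2]f)(0)=2p(km+4l)(A^{p-1}B^lf)(0)+4p(p-1)(\Delta_u A^{p-2}B^lf)(0)+\bigl(2k(m-1)l+4l(l-1)\bigr)(\Delta_v A^pB^{l-1}f)(0).
\end{equation*}

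The main obstacle is the appearance of the bare Laplacians $\Delta_u$ and $\Delta_v$ rather than the symmetric operator $A=\Delta_u+\Delta_v$. I would dissolve this by writing $\Delta_u=\tfrac12(A+C)$ and $\Delta_v=\tfrac12(A-C)$ with $C:=\Delta_u-\Delta_v$, and then resum $\sum_{p,l}c_{p,l}([A^pB^l,u^2]f)(0)$. The plan here is twofold. First, the two $C$-contributions (one from the $\Delta_u$ term, one from the $\Delta_v$ term) must cancel; after reindexing they produce the operator $C\,A^pB^l$ with total coefficient $2(p+2)(p+1)c_{p+2,l}-(l+1)(2l+k(m-1))c_{p,l+1}$, and the explicit Gamma-factors satisfy the exact relation $2(p+2)(p+1)c_{p+2,l}=(l+1)(2l+k(m-1))c_{p,l+1}$, forcing this to vanish. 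Second, the remaining symmetric part, obtained by replacing $\Delta_u,\Delta_v$ by $A/2$ and reindexing $p\mapsto p+1$ in the $\Delta_u$-derived terms and $(p,l)\mapsto(p-1,l+1)$ in the $\Delta_v$-derived term, recombines into $\sum_{p,l}c_{p,l}(A^pB^lf)(0)=T(f)$; this last step is the ratio identity $\tfrac{p+4l+km}{2p+4l+km}+\tfrac{p}{2p+4l+km}=1$. The crux of the whole argument is therefore the precise arithmetic of the Gamma-function coefficients in \eqref{eqprop}: it is exactly these two elementary identities, and nothing cruder, that make the telescoping succeed.
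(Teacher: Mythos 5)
Your proof is correct, and I verified the two coefficient identities on which it rests: with your coefficients $c_{p,l}$, the exact relation $2(p+2)(p+1)c_{p+2,l}=(l+1)\bigl(2l+k(m-1)\bigr)c_{p,l+1}$ does force the term-by-term cancellation of the $C$-contributions, and the resummation of the symmetric part reduces to $2(p+1)(km+4l+p)\,c_{p+1,l}+(l+1)\bigl(k(m-1)+2l\bigr)c_{p-1,l+1}=c_{p,l}$, which is precisely your ratio identity $\tfrac{p+4l+km}{2p+4l+km}+\tfrac{p}{2p+4l+km}=1$; note that the factor $p$ appearing in $c_{p-1,l+1}/c_{p,l}$ makes this hold uniformly for $p\ge 0$, so the reindexing boundary at $p=0$ takes care of itself. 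Your computational engine is the same as the paper's --- Lemmas \ref{commrel}, \ref{Bu1} and \ref{AB1}, applied to commutators against $u^2$ and evaluated at the origin --- but the bookkeeping is genuinely different. The paper first reduces to bihomogeneous $f$, i.e.\ $\mE_uf=2af$ and $\mE_vf=2bf$, which collapses the outer sum of \eqref{eqprop} to the single index $j=a+b+1$ and lets it expand everything in explicit monomials $\Delta_u^{r}\Delta_v^{s}$; the price is a boundary subtlety (an upper summation limit $\lfloor(a+b-1)/2\rfloor$ where $\lfloor(a+b)/2\rfloor$ is needed), resolved there by the vanishing factor $(b-l)$. You instead keep $f$ arbitrary and absorb the asymmetry between $\Delta_u$ and $\Delta_v$ into the split $\Delta_u=\tfrac12(A+C)$, $\Delta_v=\tfrac12(A-C)$ with $C=\Delta_u-\Delta_v$, which is legitimate since $A$, $B$, $C$ are constant-coefficient operators and hence commute. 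What each approach buys: the paper's homogeneity reduction keeps every intermediate sum finite and fully explicit, while your version avoids that reduction altogether and trades the paper's edge-case argument for the cleaner $C$-cancellation identity; in the end both proofs come down to the same Gamma-function arithmetic, exactly as you observe in your closing sentence.
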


\begin{proof}
Without loss of generality we assume $\mE_uf=2af$ and $\mE_v f=2bf$. It is clear that the numbers $2a$ and $2b$ have to be even in order to obtain non-zero outcomes, since $A$ and $B$ preserve the parity of the degree in $u$ and $v$ individually. Then we calculate
\begin{eqnarray}
T(u^2f)=\frac{4^{-a-b-1}\Gamma(km/2)}{\Gamma(km/2+a+b+1)}\sum_{l=0}^{\lfloor (a+b+1)/2\rfloor}\frac{\Gamma\left(k(m-1)/2\right)}{\Gamma\left(k(m-1)/2+l\right)}\left(\frac{A^{a+b+1-2l}}{(a+b+1-2l)!}\frac{B^l}{l!}u^2f\right)(0).
\end{eqnarray}
Applying lemma \ref{Bu1} and lemma \ref{AB1} yields
\begin{eqnarray}
\frac{\Gamma(km/2+a+b+1)}{\Gamma(km/2)4^{-(a+b+1)}}T(u^2f)&=&4\sum_{l=1}^{\lfloor (a+b)/2\rfloor}\frac{\Gamma\left(k(m-1)/2\right)2l}{\Gamma\left(k(m-1)/2+l\right)}\left(\frac{A^{a+b-2l}}{(a+b-2l)!}\frac{B^l}{l!}f\right)(0)\nonumber\\
&+&\sum_{l=1}^{\lfloor (a+b+1)/2\rfloor}\frac{\Gamma\left(k(m-1)/2\right)}{\Gamma\left(k(m-1)/2+l\right)}\left(\frac{B^{(l)}_uA^{a+b+1-2l}}{(a+b+1-2l)!}\frac{B^{l-1}}{l!}f\right)(0)\nonumber\\
&+&\sum_{l=0}^{\lfloor (a+b)/2\rfloor}\frac{\Gamma\left(k(m-1)/2\right)}{\Gamma\left(k(m-1)/2+l\right)}\left(\frac{A^{a+b+1-2l}}{(a+b+1-2l)!}u^2\frac{B^l}{l!}f\right)(0).
\end{eqnarray}

Using the properties $\mE_uf=2af$ and $\mE_v f=2bf$ and the definition of $B^{(l)}_u$ in lemma \ref{Bu1} shows that the second term on the right-hand side is equal to
\begin{equation}
4\sum_{l=1}^{\lfloor (a+b+1)/2\rfloor}\frac{\Gamma\left(k(m-1)/2\right)}{\Gamma\left(k(m-1)/2+l-1\right)}\left(\frac{\Delta_u^{a-l+1}\Delta_v^{b-l+1}}{(a-l+1)!(b-l)!}\frac{B^{l-1}}{(l-1)!}f\right)(0),
\end{equation}
while the third term is equal to
\begin{equation}
\sum_{l=0}^{\lfloor (a+b)/2\rfloor}\frac{\Gamma\left(k(m-1)/2\right)}{\Gamma\left(k(m-1)/2+l\right)}\left(\frac{\Delta_u^{a-l+1}\Delta_v^{b-l}}{(a-l+1)!(b-l)!}u^2\frac{B^l}{l!}f\right)(0).
\end{equation}
Using $[\Delta_u^{a-l+1},u^2]=4(a-l+1)(\mE_u+km/2+a-l)\Delta_u^{a-l}$ (which is a direct consequence of the first property in lemma \ref{commrel}) for this third term and adding everything up then yields
\begin{eqnarray}
T(u^2f)&=&\frac{\Gamma(km/2)}{4^{a+b}\Gamma(km/2+a+b+1)}\sum_{l=1}^{\lfloor (a+b)/2\rfloor}\frac{\Gamma\left(k(m-1)/2\right)2l}{\Gamma\left(k(m-1)/2+l\right)}\frac{A^{a+b-2l}}{(a+b-2l)!}\frac{B^l}{l!}f\\\
&+&\frac{\Gamma(km/2)}{4^{a+b}\Gamma(km/2+a+b+1)}\sum_{l=0}^{\lfloor (a+b-1)/2\rfloor}\frac{\Gamma\left(k(m-1)/2\right)(b-l)}{\Gamma\left(k(m-1)/2+l\right)}\frac{A^{a+b-2l}}{(a+b-2l)!}\frac{B^l}{l!}f\nonumber\\
&+&\frac{\Gamma(km/2)}{4^{a+b}\Gamma(km/2+a+b+1)}\sum_{l=0}^{\lfloor (a+b)/2\rfloor}\frac{\Gamma\left(k(m-1)/2\right)(km/2+a-l)}{\Gamma\left(k(m-1)/2+l\right)}\frac{A^{a+b-2l}}{(a+b-2l)!}\frac{B^l}{l!}f.\nonumber
\end{eqnarray}
The three sums would add up to $T(f)$ if the upper limit in the second term were $\lfloor (a+b)/2\rfloor$ and not $\lfloor (a+b-1)/2\rfloor$. This is only relevant when $a+b=2l$ is even, which is therefore the case we focus on. In order for $B^{a+b}f$ not to be zero we must have $a=b=l$. The factor $(b-l)$ in that term implies that we can replace the upper limit by the desired one, which concludes the proof.
\end{proof}

Finally we define two commuting Lie groups with action on $\mR^{km\times 2}$. The first is defined as
\begin{eqnarray}
G&:=&\{A\in \Gl(km;\mR)\,|\, A^TJ^{(i)}A=J^{(i)},\quad\mbox{for }0\le i\le k-1 \}\nonumber\\
&=&\{A\in {\rm O}(km;\mR)\,|\, A^TJ^{(i)}A=J^{(i)},\quad\mbox{for }1\le i\le k-1 \}.
\end{eqnarray}
The second group we define is $\SO(k+1)$, which has an action on $\mR^{km\times 2}$ as follows. We consider the $k+1$ linearly independent vectors $u, v, J^{(1)}v,\cdots ,J^{(k-1)}v$. The orthogonal group acting on the space corresponding to the span of these vectors is naturally embedded in $\Gl(\mR^{km\times 2})$.

\begin{theorem}\label{absthm}
The unique $G\times \SO(k+1)$-invariant integration on the manifold $M$, defined as the submanifold of $\mR^{km\times 2}$ with $m>1$ corresponding to the intersection of the hypersurfaces determined by the relations $u^2=1$, $v^2=1$ and $\langle u, J^{(i)}v\rangle=0$ for $i=0,\cdots,k-1$, corresponds to the functional \eqref{eqprop}.
\end{theorem}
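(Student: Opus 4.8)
The plan is to identify $T$ with the invariant integration by the same two-step mechanism used for Theorem~\ref{PizzInv} and Proposition~\ref{onlyu}: first argue that $M$ carries a unique normalized $G\times\SO(k+1)$-invariant integration, and then show that $T$, regarded as a functional on polynomials, is such an invariant functional which moreover depends only on the restriction of its argument to $M$. Since $M$ is a compact real-algebraic submanifold of $\mR^{km\times 2}$ (the variety of Clifford-orthonormal frames $(u,v,J^{(1)}v,\dots,J^{(k-1)}v)$, which is the abstract avatar of $\St^{(\beta)}(n,n-2)$) on which $G$ already acts transitively, the normalized invariant measure is unique. It therefore suffices to verify that $T$ satisfies: (a) $G\times\SO(k+1)$-invariance; (b) $T(gf)=0$ for every polynomial $f$ and every $g$ in the ideal $\mathcal{I}_M$ of polynomials vanishing on $M$; and (c) $T(1)=1$. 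Property (c) is immediate since only the $j=l=0$ summand of \eqref{eqprop} survives on constants.

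For (a), $G$-invariance is clear: by construction $G\subset \mathrm{O}(km)$ preserves the Euclidean metric and each $J^{(i)}$, so $\Delta_u,\Delta_v$ and every $\langle\nabla_u,J^{(i)}\nabla_v\rangle$, and hence $A$ and $B$, are $G$-invariant. For $\SO(k+1)$ I would first pin down the action explicitly as a group of orthogonal transformations of $\mR^{2km}=\mR^{km}\oplus\mR^{km}$ whose infinitesimal generators are assembled from $\eins$, the $J^{(i)}$ and their products; the $w_0w_1$-rotation is $\bigl(\begin{smallmatrix}0&\eins\\-\eins&0\end{smallmatrix}\bigr)$, the rotations of $u$ into $J^{(i)}v$ involve $J^{(i)}$, and those mixing $v$ with $J^{(i)}v$ involve $J^{(i)}J^{(j)}$. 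Because the action is orthogonal, $A=\Delta_u+\Delta_v$ is automatically invariant, being the full Laplacian of $\mR^{2km}$. The invariance of $B$ is the substantive point, which I would check on the generators by computing $\delta B$ in the style of Proposition~\ref{prop}. For the $w_0w_1$-rotation, $\delta\nabla_u=-\nabla_v$, $\delta\nabla_v=\nabla_u$ yields $\delta(\Delta_u\Delta_v)=2\langle\nabla_u,\nabla_v\rangle(\Delta_u-\Delta_v)$, exactly cancelled by the variation of the $j=0$ term of $\sum_j\langle\nabla_u,J^{(j)}\nabla_v\rangle^2$ (the $j\ge1$ terms drop out since $\langle\nabla_u,J^{(j)}\nabla_u\rangle=0$ for antisymmetric $J^{(j)}$). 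For a rotation sending $v$ to $J^{(i)}v$ one is reduced to the identity $\sum_{j=0}^{k-1}\langle\nabla_u,J^{(j)}\nabla_v\rangle\langle\nabla_u,J^{(j)}J^{(i)}\nabla_v\rangle=0$, which I would establish from \eqref{Clifford} by pairing the summand $j$ with the summand whose index is determined by $J^{(j)}J^{(i)}$.

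For (b), I would reduce $\mathcal{I}_M$ to its generators $u^2-1$, $v^2-1$, and $\langle u,J^{(i)}v\rangle$ for $0\le i\le k-1$. Proposition~\ref{prop} gives $T((u^2-1)f)=0$, and since $A$ and $B$ are symmetric under $u\leftrightarrow v$ (for $j\ge1$ the sign in $\langle\nabla_v,J^{(j)}\nabla_u\rangle=-\langle\nabla_u,J^{(j)}\nabla_v\rangle$ is squared away), the same proposition with $u$ and $v$ interchanged gives $T((v^2-1)f)=0$. The off-diagonal generators I would treat exactly as in Proposition~\ref{onlyu}: the quadratics $\{\langle w_a,w_b\rangle\}_{0\le a,b\le k}$ in the frame vectors $w_0=u$, $w_{i+1}=J^{(i)}v$ span, upon complexification, the module $\odot^2$ of the standard $\SO(k+1)$-representation (here $m>1$ guarantees the frame vectors are independent and this module is realized faithfully), which decomposes into the trivial module $\sum_a w_a^2$ plus the simple module of traceless symmetric tensors. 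Both $u^2-v^2=w_0^2-w_1^2$ and every $\langle u,J^{(i)}v\rangle=\langle w_0,w_{i+1}\rangle$ lie in the traceless summand. Since $T(\,\cdot\,f)$ kills $u^2-v^2$ and is $\SO(k+1)$-invariant, and the traceless summand is generated by $w_0^2-w_1^2$ under $\SO(k+1)$, it kills the whole summand; in particular $T(\langle u,J^{(i)}v\rangle f)=0$ for all $i$ and $f$. Together with (a) and (c), uniqueness of the invariant integration finishes the proof.

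The hard part is the $\SO(k+1)$-invariance of $B$: both pinning down the correct $\SO(k+1)$-action on $\mR^{km\times 2}$ and verifying $\delta B=0$ on the generators rotating $v$ into $J^{(i)}v$. The required identity $\sum_j\langle\nabla_u,J^{(j)}\nabla_v\rangle\langle\nabla_u,J^{(j)}J^{(i)}\nabla_v\rangle=0$ hinges on the multiplicative closure of the $J^{(i)}$, so this is precisely the step that singles out the real, complex and quaternion data among all Clifford systems. Once invariance is secured, the restriction statement (b) is a soft representation-theoretic consequence of Proposition~\ref{prop} through the mechanism of Proposition~\ref{onlyu} and demands no further computation.
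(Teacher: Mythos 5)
Your strategy is the one the paper itself uses: its entire proof of Theorem \ref{absthm} is a one-line appeal to the mechanism of section \ref{secinv} (invariance, plus killing the defining quadrics, plus uniqueness of the invariant functional) together with Proposition \ref{prop}, and your conditions (a)--(c), the $u\leftrightarrow v$ symmetry for $v^2-1$, and the module argument for the cross terms are precisely that mechanism spelled out. However, two of your concrete claims are wrong as stated and need repair.

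First, the constraint quadratics do \emph{not} span $\odot^2$ of the standard $\SO(k+1)$-module. Identically in $(u,v)$ one has $(J^{(i)}v)^2=v^2$, $\langle v,J^{(i)}v\rangle=0$ and $\langle J^{(i)}v,J^{(j)}v\rangle=0$ for $i\neq j$ (antisymmetry of $J^{(i)}$ and, by \eqref{Clifford}, of $J^{(i)}J^{(j)}$), so the span of the $\langle w_a,w_b\rangle$ is only $(k+2)$-dimensional --- for $k=2$ it is $4$-dimensional, not $6$-dimensional, so the claimed faithful realization of $\odot^2$ fails on dimension grounds, for any $m$. The correct statement is that this span decomposes as the trivial module, spanned by $u^2+v^2$ (not by $\sum_a w_a^2=u^2+kv^2$, which is not invariant), plus the $(k+1)$-dimensional vector module spanned by $u^2-v^2$, $\langle u,v\rangle$ and the $\langle u,J^{(i)}v\rangle$. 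Your argument survives because the only facts it needs are true of this smaller module --- the complement of the trivial summand is simple and contains $u^2-v^2$ together with every $\langle u,J^{(i)}v\rangle$ --- but the decomposition you invoke is false and should be replaced by this one.

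Second, your generator ``rotate $v$ into $J^{(i)}v$ while fixing $u$'' is a symmetry of $M$ only in the multiplicatively closed cases: it maps $\langle u,J^{(j)}v\rangle$ to $\langle u,J^{(j)}J^{(i)}v\rangle$, which is a combination of the defining equations of $M$ only if $J^{(j)}J^{(i)}=\pm J^{(l)}$ for some $l$; the same caveat applies to your claim that $G$ alone acts transitively on $M$. The generators that work for an arbitrary Clifford system are $\delta u=J^{(i)}u,\ \delta v=-J^{(i)}v$ (mixing $v$ with $J^{(i)}v$), $\delta u=J^{(i)}v,\ \delta v=J^{(i)}u$ (mixing $u$ with $J^{(i)}v$), and $\delta u=J^{(i)}J^{(j)}u,\ \delta v=J^{(i)}J^{(j)}v$; using \eqref{Clifford} alone one checks each is antisymmetric, preserves $M$, and annihilates both $A$ and $B$, so no pairing identity and hence no closure is needed. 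Thus the restriction to $\beta\in\{1,2,4\}$ in your invariance step is an artifact of your parametrization of the action, not intrinsic --- which matters in principle because Theorem \ref{absthm} is stated for arbitrary matrices satisfying \eqref{Clifford}. (In fairness, for non-closed systems even the theorem's uniqueness premise is delicate: for $k=3$ the function $\langle v,J^{(1)}J^{(2)}u\rangle$ is a non-constant $G\times\SO(k+1)$-invariant on $M$, so the three closed cases are exactly where the statement is clean, and your proof, once the two points above are fixed, covers them.)
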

\begin{proof}
This follows from the ideas in section \ref{secinv} and proposition \ref{prop}.
\end{proof}

Note that in this paper we consider the three cases $k=\beta\in\{1,2,4\}$, where we have the exceptional morphisms $\mathfrak{so}(k+1)\cong\mathfrak{su}(2)$ for $k=2$ and $\mathfrak{so}(k+1)\cong \mathfrak{usp}(4)$ for $k=4$. In section \ref{sec2} we prove explicitly that these situations are special cases of theorem \ref{absthm}.

\end{document}